\newtheorem{definition}{Definition}
\newtheorem{theorem}{Theorem}
\newtheorem{remark}{Remark}
\newtheorem{proposition}{Proposition}
\newtheorem{assumption}{Assumption}
\newtheorem{corollary}{Corollary}
\newtheorem{lemma}{Lemma}
\title{Optimized Multi-Level Monte Carlo Parametrization and Antithetic Sampling for Nested Simulations}
\author{
 Alexandre Boumezoued\thanks{Milliman R\&D, 14 avenue de la Grande Armée, Paris, France}
  \And
 Adel Cherchali\footnotemark[1]
  \And
  Vincent Lemaire\thanks{Laboratoire de Probabilités, Statistique et Modélisation (LPSM), 4 rue Jussieu, France}
  \And
  Gilles Pagès\footnotemark[2]
  \And
  Mathieu Truc\thanks{Laboratoire de Probabilités, Statistique et Modélisation (LPSM), Milliman R\&D, 14 avenue de la Grande Armée, Paris, France, email : \texttt{mathieu.truc@milliman.com}}
}
\begin{document}
\maketitle
\begin{abstract}
Estimating risk measures such as large loss probabilities and Value-At-Risk is fundamental in financial risk management and often relies on computationally intensive nested Monte Carlo methods. While Multi-Level Monte Carlo (MLMC) techniques and their weighted variants are typically more efficient, their effectiveness tends to deteriorate when dealing with irregular functions, notably indicator functions, which are intrinsic to these risk measures. We address this issue by introducing a novel MLMC parametrization that significantly improves performance in practical, non-asymptotic settings while maintaining theoretical asymptotic guarantees. We also prove that antithetic sampling of MLMC levels enhances efficiency regardless of the regularity of the underlying function. Numerical experiments motivated by the calculation of economic capital in a life insurance context confirm the practical value of our approach for estimating loss probabilities and quantiles, bridging theoretical advances and practical requirements in financial risk estimation.
\end{abstract}

\section{Introduction}

Let $(\Omega, \mathcal{F}, \mathbb{P})$ be a probability space on which all considered random variables will be defined in this article. Letting $d$ and $q$ be positive integers, we consider two random variables $X$ valued in $\mathbb{R}^d$ and $U$ valued in $\mathbb{R}^q$ independent from $X$. Let $F : \mathbb{R}^{d} \times \mathbb{R}^q \longrightarrow \mathbb{R}$ be a Borel measurable function and denote $Z := F(X, U)$. We make the general assumption that the distribution of $X$ is absolutely continuous (with respect to the Lebesgue measure) and that $Z$ is square integrable. Let $\psi : x \longmapsto \mathbb{E}[F(x, U)]$ be the function from $\mathbb{R}^d$ to $\mathbb{R}$ then $\psi$ is a regular version of the conditional expectation of $Z$ given $X$. Finally we define $L := \psi(X)$ and consider a measurable function $f : \mathbb{R} \longrightarrow \mathbb{R}$. Note that $\mathbb{P}$ almost surely (denoted a.s) we have $L = \mathbb{E}[Z | X]$.

In many practical applications of financial risk management, we need to estimate, by Monte Carlo 
(MC) simulations, quantities of the form 
\begin{equation}
\label{eq:I_def}
    I := \mathbb{E}[f(L)],
\end{equation}
sometimes also quantiles of $L$ are of interest. Most of the time in practice, the conditional expectation $L$ cannot be sampled exactly. 
In that case, one must rely on using approximate samples of $L$ which inexorably introduce a bias in the estimation. A general way to construct random variables approximating $L$ is to consider its MC approximation.

\begin{definition}
    Let $K \in \mathbb{N}$, $x \in \mathbb{R}^d$ and $(U_k)_{k \in \{1,\dots,K\}}$ a sequence of independent and identically distributed  (i.i.d.) copies of $U$ independent from $X$. Then the random variable 
    \begin{equation*}
        \hat{E}_{K}(x) := \frac{1}{K} \sum_{k = 1}^K F(x, U_k)
    \end{equation*}
    is a \emph{MC approximation} of $\psi(x)$ with $K$ inner samples. In addition, the random variable $\hat{E}_K(X)$ is a \emph{MC approximation} of $L$.
\end{definition}
An application of the strong law of large numbers for martingale (\cite{Hall1981-sk}, Theorem 2.18) to the martingale $(M_K)_{K \in \mathbb{N}}$ defined for all $K \in \mathbb{N}$ by $M_K := \sum_{k = 1}^K F(X, U_k) - \mathbb{E}[F(X, U) | X]$ (with respect to the filtration generated by $(X, U_1, \dots, U_K)_{K \in \mathbb{N}}$) ensures that $\hat{E}_K(X)$ converges a.s to $L$ as the number of inner samples $K$ goes to infinity. Using the MC approximation $\hat{E}_K(X)$ in place of $L$ in a MC estimation of $I$ is known as a nested MC framework as it involves two stages of MC approximations. Typically, one would use a nested MC estimator of (\ref{eq:I_def}) defined for $(J, K) \in \mathbb{N}^2$ as 
\begin{equation}
\label{eq:std_nested_mc}
I_{J, K} :=\frac{1}{J} \sum_{j = 1}^J f \left( \frac{1}{K} \sum_{k = 1}^K F(X_k, U_{j, k}) \right) \,,
\end{equation}
where $(X_k)_{k \in \{1, \dots, K\}}$ are i.i.d copies of $X$ and $(U_{j, k})_{(j, k) \in \{1,\dots J\} \times \{1, \dots K\}}$ are i.i.d copies of $U$ independent from $(X_k)_{k \in \{1, \dots, K\}}$. Under mild conditions, this estimator converges in a proper way toward $I$ as $J$ and $K$ goes to infinity (see, e.g., \cite{Rainforth2017OnNM}). Borrowing from the nested MC literature in risk management (e.g., Gordy-Juneja \cite{gordyJuneja10}, Giles \& Haji-Ali \cite{gilesAli19}, Lemaire-Pagès \cite{lemairePages17}), we refer to $\hat{E}_K(X)$  as the inner expectation, the random variables $F(X, U_1), \dots,F(X,U_K)$ as the inner samples (or scenarios) and realizations of $X$ as outer samples (or scenarios).

To alleviate the notations, for all integer $K$, we define $Y_K := f(\hat{E}_K(X))$, the MC approximation of $Y := f(L)$ with $K$ inner samples. For a large class of functions $f$ the weak error $I - \mathbb{E}[Y_K]$ vanishes as $K$ goes to infinity. For example, Gordy-Juneja \cite{gordyJuneja10} (Proposition 1) provide a first order expansion of the weak error when $f$ is an indicator function. Giorgi et. al. \cite{Giorgi_2020} (Proposition 4.1 and Proposition 5.1), proved a higher order expansion when $f$ is a smooth function as well as when $f$ is an indicator function. 

The case when $f$ is an indicator function is particularly relevant in financial risk management.  Indeed if, for $u \in \mathbb{R}$, $x \in \mathbb{R}$, $f(x) = \mathbbm{1}_{x \leq u}$ then $I = \mathbb{P}(L \leq u) = F_L(u)$ is the cumulative distribution function (c.d.f.) of $L$. Then if $L$ is represents a future portfolio loss, we are estimating the probability that the portfolio will incur a loss lower than $u$. It was studied by Gordy-Juneja in their seminal paper \cite{gordyJuneja10}, and later on by many others (e.g., \cite{broadiemoallemi2011}, \cite{gilesAli19}, \cite{Giorgi_2020}).

In practical applications, the sampling of $J \in \mathbb{N}$ i.i.d copies of $Y_K$ typically proceeds in two stages. Initially, $J$ outer scenarios $(X_{1}, \dots, X_{J})$ are drawn independently from the distribution of $X$. Subsequently, $J \times K$ i.i.d. samples $ (U_{j,1}, \dots, U_{j, K}), \; j \in \{1, \dots, J\}$ are drawn from the distribution of $ U $, which are then used to construct the $J \times K$ inner scenarios $F(X_{j}, U_{j, 1}), \dots, F(X_j, U_{j, K}), \; j \in \{1, \dots, J\}$. These are then used to construct $J$ samples $Y^{j}_{K} = f(\frac{1}{K} \sum_{k = 1}^K F(X_j, U_{j, k}))$ of $Y_K$. The unit outer sampling cost is, therefore, defined as the cost of sampling a single $X$, whereas the unit inner sampling cost is defined as the combined cost of sampling one $U$ and evaluating $F(X, U)$. In this paper, we consequently adopt the following computational cost framework for sampling $Y_K$.
\begin{definition}
    For any integer $K$ we define the computational cost of sampling $Y_K$ as
    \begin{equation}
    \label{eq:computational_cost_YK}
        \gamma_{\tau}(K) = \bar{\kappa}(\tau + K) \,,
    \end{equation}
    where $\bar{\kappa} \in (0, +\infty)$ denotes the unit inner sampling cost, and $\tau \in [0, +\infty)$ is a dimensionless constant representing the unit outer sampling cost expressed in units of $\bar{\kappa}$. In other words the cost of sampling $X$ is $\tau$ times the combined cost of sampling $U$ and evaluating $F$.
\end{definition}
For simplicity, and without loss of generality, we set $\bar{\kappa} = 1$, which means all computational costs are measured relative to the unit inner sampling cost. In the context of MLMC the literature typically neglects the cost of sampling $X$ (i.e. assumes $\tau = 0$), however in practice $\tau$ can be large and we show in this paper that this may affect the optimal balance between inner and outer scenarios in non-asymptotic regimes. As an example in a financial risk management context, each sampling of $X$ may require a calibration of multiple Risk-Neutral models with respect to the drawn risk factors. When complex models are considered this calibration phase can be time consuming and requires to consider larger $\tau$ (see \cite{Arrouy2022EconomicSG}).\\

In financial risk management, a common risk measure to compute is the Value-At-Risk
(VaR, see, e.g., \cite{Jorion2000ValueAR}) with high confidence level $\alpha \in (0,1)$ :
\begin{equation}
\label{eq:quantile_definition}
    q_L(\alpha) = \inf \left \{ u \in \mathbb{R} \; | \; F_L(u) \geq \alpha   \right \}.
\end{equation}

As a key example, the 1-year Value-At-Risk with $\alpha=99.5\%$ on future own-fund losses serves as a fundamental measure utilized by the European insurance sector to evaluate the financial situation of companies according to Solvency II regulations (see, e.g., \cite{solvencyBolviken2017}). A closely related risk measure is the evaluation of $F_L(u)$ the c.d.f. of the loss $L$ for extreme threshold $u \in \mathbb{R}$ (i.e $u$ close to extreme quantiles of $L$).

These two measures are directly linked: for continuous distributions, the VaR at level $\alpha$ is the smallest $u$ such that $F_L(u) = \alpha$. Thus, evaluating $F_L(u)$ for a given $u$ gives the probability of losses not exceeding $u$, while VaR inverts this relationship to find the threshold corresponding to a given probability level. The standard approach for practitioners to compute these risk measures is to use a nested Monte Carlo estimator : fix an inner sample size $K$, generate an i.i.d. sample of $\hat{E}_K(X)$ of size $J$, then compute the relevant probability/order statistic (see, e.g., \cite{Bauer2012OnTC}, \cite{NestedStoSoa}). However, nested simulations is known as sample-inefficient when precise estimations are desired. Gordy-Juneja \cite{gordyJuneja10} demonstrate that, to halve the estimation error, the simulation budget must be increased by a factor eight. By comparison, within a conventional unbiased Monte Carlo situation, standard methods require only a fourfold increase in simulation budget to achieve an equivalent reduction in error. 

Therefore, practitioners often resort to alternative techniques that involve statistical learning of the conditional expectation function $\psi$ (also known as proxy methods). Common approaches include Least-Squares Monte Carlo methods, which assumes that $\psi$ can be well-approximated by a polynomial function (see, e.g., \cite{Krah2018ALM}, \cite{Bauer22}), and Replicating Portfolio methods, where $\psi$ is approximated by a portfolio of vanilla options (see, e.g., \cite{cambou2017RP}). Nonetheless, we can identify two main shortcomings with these proxy methodologies. First, practical implementation often requires making arbitrary choices regarding the model specification, which can impede the accurate approximation of $\psi$. Second, these methods may suffer from inefficiency in high-dimensional contexts, making them difficult to scale as the complexity of risk modeling increases. For instance, insurance company can have up to hundreds of risk factors. The scalability of these methods in high dimension is an active research topic where neural networks play an important role (see, e.g., \cite{Castellani2021MachineLT}, \cite{Cheridito2020AssessingAR}, \cite{Krah2020LeastSquaresMC} or \cite{Perla2025TransformersbasedLS}). Balancing the allocation of inner and outer scenarios as a function of $\tau$ is also a pertinent consideration in proxy methods. Alfonsi et al. \cite{Alfonsi2022HowMI} demonstrate that, as $\tau$ gets large, computational efficiency can be achieved by increasing the number of inner samples per outer samples in the construction of the Least-Squares Monte Carlo proxy.

A common fully Monte-Carlo approach to improving the estimation of $I$ in biased Monte Carlo frameworks is to employ Multi-level Monte Carlo (MLMC) methods (see \cite{giles15}). In his pioneering work \cite{gilesMLMC2008}, Giles introduced the MLMC framework to address bias arising from the discretization of stochastic differential equations. Subsequently, numerous authors have extended this methodology (e.g., \cite{lemairePages17}, \cite{gilesAli19}, \cite{Krumscheid2018MultilevelMC}, \cite{Crepey2024AdaptiveMS}), including applications in the context of nested Monte Carlo. In the most favorable scenario (specifically, when the function $f$ in (\ref{eq:I_def}) is sufficiently smooth) the MLMC method achieves sample efficiency comparable to what an unbiased Monte Carlo method would achieve. However when $I$ is the evaluation of a c.d.f.,
\begin{equation}
\label{eq:proba_large_loss}
    I = \mathbb{E}[\mathbbm{1}_{L \leq u}] = F_L(u)
\end{equation}
for some fixed $u \in \mathbb{R}$, we need to deal with an indicator function in (\ref{eq:I_def}) and its irregularity then hinder the efficiency of MLMC methods.

In this paper, we aim at improving the MLMC method for those cases where $f$ is an indicator function, such as when estimating (\ref{eq:proba_large_loss}). In particular we will build upon the Multi-level Richardson Romberg (ML2R) method (see Lemaire-Pagès \cite{lemairePages17}), that adds weights to the traditional MLMC method in order to improve the efficiency, especially for those cases where $f$ is irregular. Our contributions are as follows: first, we provide a proof that antithetic sampling of MLMC levels is always efficient in the context of nested Monte Carlo, regardless of the regularity of $f$, and quantify the reduction in variance. Then we propose a numerical optimization scheme for MLMC parameters that fully accounts for $\tau$. This approach yields significant improvements in non-asymptotic performance while maintaining asymptotic theoretical properties. Computational experiments demonstrate that this numerical optimization is crucial in a setting where $f$ is an indicator function or when $\tau$ is large. Finally, we show numerically how the ML2R parametrized with our methodology provide an efficient way to estimate (\ref{eq:proba_large_loss}) and (\ref{eq:quantile_definition}) when high precisions are required.

\section{Multi-level Monte Carlo and antithetic variance reduction}

In this section, we start by introducing MLMC estimators for computing expectations of the form $I = \mathbb{E}[f(L)]$ where $f : \mathbb{R} \longrightarrow \mathbb{R}$ is any function satisfying $\mathbb{E}[|f(L)|^2] < +\infty$, presenting both the standard MLMC estimator and the ML2R variant. Then, we show that one can always construct an antithetic sampling of the MLMC levels regardless of the regularity of $f$ and quantify the reduction in variance. 

Finally, to facilitate the introduction of our extensions in the subsequent sections, we recall state of the art theoretical properties of these estimators with particular emphasis on their asymptotic performances.

\subsection{Introduction to Multi-level Monte Carlo estimators}
The standard approach for estimating $I$ in the nested MC framework is the nested MC estimator. This method involves first fixing the number of inner samples $K$, then computing an empirical average of $Y_K$ based on $J$ i.i.d samples. As mentioned previously, this estimator is sample-inefficient and incurs prohibitive computational costs when high precision is required. To obtain more efficient estimators, we can employ a MLMC method ; it consist of choosing an increasing sequence of inner sample sizes, $(K_r)_{r \in \{1, \dots, R\}}$, for some number of levels $R \in \mathbb{N}$, rather than using a single inner sample size $K$ as in the nested MC estimator. Observing that
\begin{equation*}
    \mathbb{E}[Y_{K_R}] = \mathbb{E}[Y_{K_1}] + \sum_{r = 2}^R \mathbb{E}[Y_{K_r} - Y_{K_{r-1}}]\,,
\end{equation*}
instead of sampling $Y_{K_R}$, we can sample $Y_{K_1}$ and the differences $Y_{K_r} - Y_{K_{r-1}}$ for each level $r \in \{1, \dots, R \}$. This approach allows to allocate more samples to the lower (computationally inexpensive) levels. Indeed, since the upper-level terms involve differences of potentially correlated random variables, they tend to have lower variance. Lemaire-Pagès \cite{lemairePages17} (Definition 3.2) introduce a general MLMC estimator that encompasses both the standard MLMC and ML2R approaches. Below, we recall the definition adapted to our setting.

\begin{definition}
    Let $\Theta$ denote the parameter space of the Multi-level Monte Carlo estimator, defined as the set
    \begin{equation*}
        \Theta = \left \{(J, q, K, R) \in (0, +\infty) \times [0,1]^{\mathbb{N}} \times (0, +\infty) \times \mathbb{N} : \forall r \in \{1, \dots, R\} \; q_r > 0, \sum_{r = 1}^R q_r = 1 \right \}
    \end{equation*}
    Here,
    \begin{enumerate}
        \item $J$ is a parameter that serves as a proxy for the total number of outer samples. For each level $r$, we define $J_r = \lceil J q_r \rceil$ the number of outer samples allocated to level $r$. The actual total number of outer samples is given by $\sum_{r = 1}^R J_r$ which might exceed $J$, but the difference is typically negligible in practice,
        \item $q$ is a sequence specifying the allocation of outer samples to each of the $R$ levels (note that $q_r = 0$ when $r > R$),
        \item $K$ is a parameter that serves as a proxy for the number of inner samples used in the first level. The actual number of inner samples used is $K_1 = \lceil K \rceil$, the difference is typically negligible in practice. More generally for each level $r$, we define $K_r = \lceil K \rceil 2^{r-1}$ the number of inner samples used at level $r$.
    \end{enumerate}
    
\end{definition}

\begin{remark}
    In the above definition we choose a geometric progression for the sequence $(K_r)_{r \in \{1, \dots,  R\}}$ with ratio 2. This is a standard choice for MLMC methods within the nested Monte Carlo framework (see, e.g., \cite{giles15}). Lemaire-Pagès \cite{lemairePages17} (Theorem 3.12) show that a ratio of 2 is optimal for the ML2R when $f$ is an irregular function, otherwise other choices could theoretically be more efficient. For simplicity, we adopt the ratio 2 in all cases throughout this paper.
\end{remark}

\begin{definition}
\label{def:general_mlmc_def}
We consider a family of square-integrable random variable $(\Delta Y_{2N})_{N \in \mathbb{N}}$ such that for all $N \in \mathbb{N}$, $\mathbb{E}[\Delta Y_{2N}] = \mathbb{E}[Y_{2N} - Y_{N}]$. We also assume that $\Delta Y_{2N}$ has a sampling cost $\gamma_{\tau}(2N)$. 

Let $\theta = (J, q, K, R) \in \Theta$, for each $r \in \{1, \dots, R\}$, let $(\Delta Y_{K_r}^j)_{j \in \{ 1, \dots, J_r\}}$ be i.i.d copies of $\Delta Y_{K_r}$. Letting $(A_{r}^R)_{r \in \{1, \dots, R\}}$ be a sequence of real numbers, which may depend on $R$, we define the general MLMC estimator of $I$ as 
\begin{equation}
\label{eq:general_mlmc_def}
	\hat{I}_{\theta} = \frac{1}{J_1} \sum_{j = 1}^{J_1} Y_{K_1}^{j} + \sum_{r = 2}^{R} \frac{A^R_{r}}{J_r} \sum_{j = 1}^{J_r} \Delta Y_{K_r}^{j} \,.
\end{equation}
\end{definition}
Different choices for the sequence of weights $(A^R_r)_{r \in \{2, \dots, R\}}$ in this definition lead to different variants of the MLMC estimator. Namely the standard MLMC as introduced by Giles~\cite{gilesMLMC2008} is obtained when $A^R_{r} = 1, \; r = 1,\dots, R$. The ML2R as introduced by Lemaire-Pagès \cite{lemairePages17}, is obtained when $A^R_{r} = W_r^R, \; r = 1, \dots, R$, for some specific sequence of weights $(W_r^R)_{r \in \{1, \dots, R\}}$ (see Appendix \ref{apx:details_weights} for details about their construction). It is worth noting that when $R = 1$, we recover the standard nested estimator (\ref{eq:std_nested_mc}). In this sense, MLMC generalizes nested MC. In this paper, if we need to discuss specifically one version of the estimator, the standard MLMC will be denoted by $\hat{I}^{\textrm{MLMC}}$, while the ML2R variant will be denoted by $\hat{I}^{\textrm{ML2R}}$. We now introduce the standard choice for the family $(\Delta Y_{2N})_{N \in \mathbb{N}}$.
\begin{definition}
    For any $N \in \mathbb{N}$, we let $\Delta Y_{2N}^S$ be the standard construction of an MLMC level with $K = 2N$ inner scenarios. It is defined by $\Delta Y^{S}_{2N} := Y^f_{2N} - Y_{2N}^c$ where
    \begin{equation*}
        Y^f_{2N} := f(\hat{E}^f_{2N}(X)), \quad Y_{2N}^c := f(\hat{E}^c_{2N}(X)) \,,        
    \end{equation*}
    with $\hat{E}^f_{2N}(X)$ and $\hat{E}^c_{2N}(X))$ defined as
    \begin{equation*}
        \hat{E}^f_{2N}(X) := \frac{1}{2N} \sum_{n = 1}^{2N} F(X, U_{n}), \quad         \hat{E}^c_{2N}(X) := \frac{1}{N} \sum_{n = 1}^{N} F(X, U_{n}) \,,
    \end{equation*}
    where $(U_{n})_{n \in \{1, \dots, N\}}$ is an i.i.d samples generated from $U$.
\end{definition}
Notice that in the above definition, the first $N$ samples of $U$ from the fine level estimator $\hat{E}^f_{2N}(X)$ are re-used in constructing the coarse level estimator $\hat{E}^c_{2N}(X)$. This approach, as opposed to using independent samples for each level, provides two key advantages. First, it reduces the variance of $\Delta Y^S_{2K}$ by increasing the correlation between the term of the difference. Second, it lowers the sampling cost from $\gamma_{\tau}(2N) + \gamma_{\tau}(N)$ to $\gamma_{\tau}(2N)$ as only $2N$ inner samples are required. It is straightforward to check that $\Delta Y^{S}_{2N}$ is square integrable and that $\mathbb{E}[\Delta Y^{S}_{2N}] = \mathbb{E}[Y_{2N} - Y_{N}]$, therefore fulfilling the conditions of Definition \ref{def:general_mlmc_def}. \\

The computational cost of an MLMC estimators with $R \in \mathbb{N}$ levels is the sum of the computational cost of each level $r \in \{1, \dots, R\}$, where we assigned an outer simulation budget of $J_r = \lceil J q_r \rceil$ with an inner simulation budget of $K_r = \lceil K 2^{r-1}\rceil$.

\begin{definition}
    Let $\theta = (J, q, K, R) \in \Theta$, we define the computational cost of $\hat{I}_{\theta}$ as $\mathcal{C}_{\tau}(\theta) := \sum_{r = 1}^R \lceil J  q_r \rceil \gamma_{\tau}(K_r)$.
\end{definition}

The cost function defined above is not particularly tractable. However, in practice, $\lceil J q_r \rceil$ is typically very close to $J q_r$. Therefore, it is reasonable to use the following approximation:
\begin{equation*}
    \mathcal{C}_{\tau}(\theta) \approx J  \times \sum_{r = 1}^R q_r \gamma_{\tau}(K_r)
\end{equation*}
In this expression, $\sum_{r = 1}^R q_r \gamma_{\tau}(K_r)$ can be interpreted as the computational cost per outer sample.

\begin{definition}
    Let $\theta = (J, q, K, R) \in \Theta$,  define the computational cost per outer-samples of $\hat{I}_{\theta}$ by
    \begin{equation}
        \label{eq:kappa_tau}\mathcal{\kappa}_{\tau}(q, K, R) := \sum_{r = 1}^R q_r \gamma_{\tau}(K_r) \,.
    \end{equation}
    and define the approximate computational cost of $\hat{I}_{\theta}$ by
    \begin{equation}
    \label{eq:tilde_c_tau}
        \tilde{\mathcal{C}}_{\tau}(\theta) := J   \mathcal{\kappa}_{\tau}(q, K, R) \,.
    \end{equation}
\end{definition}

As is standard for MLMC estimators, we will measure its precision with the Root Mean Squared Error (RMSE), i.e the $L^2$-norm between the target and the MLMC estimation.
\begin{definition}
\label{def:mse_theta}
     Let $\theta \in \Theta$, the Mean Squared Error (MSE) of $\hat{I}_{\theta}$ defined by $\mathcal{M}(\theta) = \mathbb{E}[(I - \hat{I}_{\theta})^2] := \mu(\theta)^2 + \mathcal{V}(\theta)$ where $\mu(\theta):= \mathbb{E}[I - \hat{I}_{\theta}]$ is the bias of the estimator and $\mathcal{V}(\theta) := \mathrm{Var}[\hat{I}_{\theta}] $ is the variance.
\end{definition}

For convenience we introduce the following notations.
\begin{definition}
    Let $\theta = (J, q, K, R) \in \Theta$, for each $r \in \{1, \dots, R\}$, we let $\sigma^2(r, K)$ be the structural variance of the level $r$ defined by
    \begin{equation*}
    \sigma^2(r, K) :=
        \begin{cases}
            \mathrm{Var}[Y_{K_1}] & r = 1\\
            (A^R_{r})^2\mathrm{Var}[\Delta Y_{K_r}] & r \in \{ 2, \dots, R \}
        \end{cases}
    \end{equation*}
\end{definition}
From the independence of the levels in (\ref{eq:general_mlmc_def}), the variance of an MLMC estimator is controlled by the sum of the variance of its levels. While the variance of a level $r \in \{1, \dots, R\}$ is controlled by the ratio between $\sigma^2(r, K)$ its structural variance and $J_r$ its number of allocated outer samples.
\begin{proposition}
\label{prop:variance_mlmc_formula}
    Let $\theta = (J, q, K, R) \in \Theta$, then
    \begin{equation}
    \label{eq:variance_mlmc}
        \mathcal{V}(\theta) = \sum_{r = 1}^{R} \frac{\sigma^2(r, K)}{J_r}
    \end{equation}
\end{proposition}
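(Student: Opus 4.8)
The plan is to compute the variance of $\hat I_\theta$ directly from its definition \eqref{eq:general_mlmc_def} by exploiting the independence across levels, then identifying each per-level contribution with $\sigma^2(r,K)/J_r$. The estimator is a sum of $R$ blocks: the first block is $\frac{1}{J_1}\sum_{j=1}^{J_1} Y_{K_1}^j$, and for $r\in\{2,\dots,R\}$ the $r$-th block is $\frac{A_r^R}{J_r}\sum_{j=1}^{J_r}\Delta Y_{K_r}^j$. By construction in Definition \ref{def:general_mlmc_def}, the samples used in distinct levels are drawn independently (the i.i.d.\ copies $(\Delta Y_{K_r}^j)_j$ at level $r$ are generated from fresh randomness, independent of the other levels), so the $R$ blocks are mutually independent random variables and $\mathrm{Var}[\hat I_\theta]$ equals the sum of the variances of the blocks.

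First I would treat the first level: since the $Y_{K_1}^j$ are i.i.d.\ copies of $Y_{K_1}$, we have $\mathrm{Var}\bigl[\frac{1}{J_1}\sum_{j=1}^{J_1}Y_{K_1}^j\bigr] = \frac{1}{J_1^2}\cdot J_1 \mathrm{Var}[Y_{K_1}] = \frac{\mathrm{Var}[Y_{K_1}]}{J_1}$, which is exactly $\frac{\sigma^2(1,K)}{J_1}$ by the definition of the structural variance. Next, for each $r\in\{2,\dots,R\}$, the same i.i.d.\ averaging argument gives $\mathrm{Var}\bigl[\frac{A_r^R}{J_r}\sum_{j=1}^{J_r}\Delta Y_{K_r}^j\bigr] = \frac{(A_r^R)^2}{J_r^2}\cdot J_r\,\mathrm{Var}[\Delta Y_{K_r}] = \frac{(A_r^R)^2\,\mathrm{Var}[\Delta Y_{K_r}]}{J_r} = \frac{\sigma^2(r,K)}{J_r}$. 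Summing over $r$ and invoking independence of the blocks yields \eqref{eq:variance_mlmc}. One should note the square-integrability hypothesis on $(\Delta Y_{2N})_N$ (and on $Y_K$) guarantees all these variances are finite, so the manipulation is legitimate.

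There is essentially no obstacle here; the only point requiring a word of care is making the independence-across-levels assumption explicit, since Definition \ref{def:general_mlmc_def} introduces the level-$r$ copies without spelling out that the randomness driving different levels is independent. This is the standard MLMC convention (each level is simulated with its own independent sample), and it is implicitly assumed throughout; once stated, the decomposition of the variance into a sum is immediate. Everything else is the routine computation of the variance of a scaled empirical mean of i.i.d.\ terms.
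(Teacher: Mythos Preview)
Your proposal is correct and follows exactly the same approach as the paper, which simply states that the result ``follows immediately from the independence of the levels in the definition of an MLMC estimator.'' You have merely spelled out the routine i.i.d.\ averaging computation that the paper leaves implicit.
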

\begin{proof}
    It follows immediately from the independence of the levels in the definition of an MLMC estimator.
\end{proof}

\subsection{Antithetic variance reduction}
\label{sec:antithetic_variance_reduction}

A quick inspection of (\ref{eq:variance_mlmc}) reveals that, for a fixed computational budget, introducing additional upper levels to the traditional nested Monte Carlo estimator leads to an increase in variance. For the inclusion of these levels to be beneficial, this variance inflation must be sufficiently small so that, when combined with the reduction in bias, the overall MSE decreases. Each upper level $r \in \{2, \dots, R\}$ contributes a variance term proportional to $\mathrm{Var}[\Delta Y_{K_r}]$, therefore, ensuring that $\mathrm{Var}[\Delta Y_{K_r}]$ remains as small as possible is essential for the effectiveness of MLMC estimators.

\begin{remark}
\label{rem:mlmc_variance_indicator}
    In contexts where $f$ in (\ref{eq:I_def}) is smooth, for $r \geq 2$, $\mathrm{Var}[\Delta Y_{K_r}]$ is often an order of magnitude smaller than $\mathrm{Var}[Y_{K_1}]$, so the contribution of the upper-levels to the overall variance is already quite limited. However, when $f$ is an indicator function, it is common for $\mathrm{Var}[\Delta Y_{K_r}]$, to be the same order of magnitude as $\mathrm{Var}[Y_{K_1}]$. Thus, reducing $\mathrm{Var}[\Delta Y_{K_r}]$ becomes even more critical in cases where $f$ is an indicator function.
\end{remark}

A common technique found in the MLMC literature for reducing $\mathrm{Var}[\Delta Y_{K_r}]$ when $f$ is smooth is to construct $\Delta Y_{K_r}$ as an antithetic variable of $Y_{K_{r}} - Y_{K_{r-1}}$ by reusing all the samples of the fine level $Y_{K_r}$ in two conditionally independent copies of the coarse level $Y_{K_{r-1}}$ ; this is possible since $K_r = 2 K_{r-1}$ (see for example \cite{Bujok2013} or \cite{haji-ali12}).

\begin{definition}
    For any $N \in \mathbb{N}$, let
    \begin{equation*}
        \hat{E}^{f}_{2N}(X) := \frac{1}{2N} \sum_{n = 1}^{2N} F(X, U_n), \quad \hat{E}^{c}_{2N}(X) := \frac{1}{N} \sum_{n = 1}^{N} F(X, U_n) , \quad         \hat{E}^{c'}_{2N}(X) := \frac{1}{N} \sum_{n = N+1}^{2N} F(X, U_n) \,,
    \end{equation*}
    with $(U_{n})_{n \in \{1, \dots, N\}}$ an i.i.d samples generated from $U$. We then define the fine level $Y^{f}_{2N} := f(\hat{E}^{f}_{2N}(X))$, and the two conditionally coarse levels $Y^{c}_{2N} := f(\hat{E}^{c}_{2N}(X))$ and $Y^{c'}_{2N} := f(\hat{E}^{c'}_{2N}(X))$.
\end{definition}

It is well known (see, e.g., Pagès \cite{numProbaPages}, Proposition 9.3) that if $f$ is smooth enough and if the variance of $Y_{2K} - Y_{K}$ converges to $0$ when $K \rightarrow +\infty$, then the antithetic construction improves the rate of variance convergence. More precisely, briefly anticipating the assumption $(\mathrm{Var}_{\beta})$ (formally introduced in Section \ref{sec:mlmc_theo_guarantees}) : in the smooth case, this assumption typically holds with a rate of variance decay $\beta = 1$, while with antithetic sampling it is satisfied for $\beta > 1$, ensuring a faster convergence of the variance to 0 when $K$ grows. However when $f$ is not smooth (such as in the case of an indicator function), this theoretical guarantee generally no longer apply. Nevertheless, some authors still apply antithetic sampling in non-smooth cases (see, \cite{gilesAli19}, \cite{HajiAli2023NestedMM}).

\begin{definition}
    For any $N \in \mathbb{N}$, we let $\Delta Y^A_{2N}$ be the antithetic construction of an MLMC level with $2N$ inner samples defined by,
    \begin{equation}
    \label{eq:antithetic_sampling}
        \Delta Y^{A}_{2N} := Y^{f}_{2N} - \frac{1}{2}(Y^{c}_{2N} + Y^{c'}_{2N})
    \end{equation}
\end{definition}
Clearly, as required by Definition \ref{def:general_mlmc_def}, the sampling cost for $\Delta Y^{A}_{2N}$ is $\gamma_{\tau}(2N)$ since the total number of samples used is still $2N$. The other requirements, namely the square-integrability and expectation property are straightforward to check.

In Theorem \ref{thm:systematic_antithetic}, we provide a general theoretical justification showing that antithetic sampling yields variance reduction regardless of $f$ and its smoothness, and quantify this reduction. As far as we are aware, this result is new.
\begin{theorem}
\label{thm:systematic_antithetic}
    For any $N \in \mathbb{N}$,
    \begin{equation*}
        \mathrm{Var}[\Delta Y_{2N}^A] = \mathrm{Var}[\Delta Y_{2N}^S] - \frac{1}{2} \mathbb{E}[\mathrm{Var}[Y_{N} | X]]
    \end{equation*}
    or equivalently,
    \begin{equation*}
        \mathrm{Var}[\Delta Y_{2N}^S] - \mathrm{Var}[\Delta Y_{2N}^A] = \frac{1}{2} ||  Y_{N} - \mathbb{E}[Y_{N} | X] ||^2_{L_2} = \frac{1}{2} \mathbb{E}[\mathrm{Var}[Y_{N} | X]]
    \end{equation*}
    This difference is always non-negative and it is positive whenever $F(X, U)$ is not independent from $X$.
\end{theorem}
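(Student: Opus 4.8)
The plan is to condition everything on $X$ and to invoke the law of total variance, reducing the claim to a purely conditional computation. Write, for brevity, $a := Y^f_{2N}$, $b := Y^c_{2N}$ and $c := Y^{c'}_{2N}$, so that $\Delta Y^S_{2N} = a - b$ and $\Delta Y^A_{2N} = a - \tfrac12(b+c)$. Two structural facts drive the argument. First, conditionally on $X$ the variables $b$ and $c$ are independent — they are built from the disjoint blocks $(U_1,\dots,U_N)$ and $(U_{N+1},\dots,U_{2N})$ of an i.i.d.\ sequence independent of $X$ — and each has the same conditional law given $X$ as $Y_N$, so $\mathrm{Var}[b\mid X] = \mathrm{Var}[c\mid X] = \mathrm{Var}[Y_N\mid X]$ a.s. Second, $a = f\big(\tfrac{1}{2N}\sum_{n=1}^{2N}F(X,U_n)\big)$ is a symmetric function of $(U_1,\dots,U_{2N})$, so applying the permutation that swaps the two blocks shows that the conditional law of $(a,b)$ given $X$ equals that of $(a,c)$, whence $\mathrm{Cov}[a,b\mid X] = \mathrm{Cov}[a,c\mid X]$ a.s. I would first record that $\mathbb{E}[\Delta Y^A_{2N}\mid X] = \mathbb{E}[a\mid X] - \mathbb{E}[Y_N\mid X] = \mathbb{E}[\Delta Y^S_{2N}\mid X]$, so the two estimators share the same conditional mean and the ``explained variance'' terms $\mathrm{Var}[\mathbb{E}[\,\cdot\mid X]]$ in the law of total variance coincide and cancel when we take the difference.

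Next I would expand the two conditional variances:
\begin{align*}
\mathrm{Var}[\Delta Y^S_{2N}\mid X] &= \mathrm{Var}[a\mid X] + \mathrm{Var}[b\mid X] - 2\,\mathrm{Cov}[a,b\mid X],\\
\mathrm{Var}[\Delta Y^A_{2N}\mid X] &= \mathrm{Var}[a\mid X] + \tfrac14\,\mathrm{Var}[b+c\mid X] - \mathrm{Cov}[a,b+c\mid X].
\end{align*}
Using $b\perp c\mid X$ gives $\mathrm{Var}[b+c\mid X] = 2\,\mathrm{Var}[Y_N\mid X]$, and the exchangeability fact gives $\mathrm{Cov}[a,b+c\mid X] = 2\,\mathrm{Cov}[a,b\mid X]$. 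Substituting, the $\mathrm{Var}[a\mid X]$ and the $\mathrm{Cov}[a,b\mid X]$ contributions cancel between the two lines, leaving
\begin{equation*}
\mathrm{Var}[\Delta Y^S_{2N}\mid X] - \mathrm{Var}[\Delta Y^A_{2N}\mid X] = \mathrm{Var}[Y_N\mid X] - \tfrac12\,\mathrm{Var}[Y_N\mid X] = \tfrac12\,\mathrm{Var}[Y_N\mid X].
\end{equation*}
Taking expectations (and adding back the identical explained-variance terms) then yields $\mathrm{Var}[\Delta Y^S_{2N}] - \mathrm{Var}[\Delta Y^A_{2N}] = \tfrac12\,\mathbb{E}[\mathrm{Var}[Y_N\mid X]]$, which is the stated identity; the equivalent $L^2$ form is just the elementary fact $\mathbb{E}[\mathrm{Var}[Y_N\mid X]] = \mathbb{E}\big[(Y_N - \mathbb{E}[Y_N\mid X])^2\big]$.

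For the closing assertions, non-negativity of $\tfrac12\,\mathbb{E}[\mathrm{Var}[Y_N\mid X]]$ is immediate, and it is $0$ precisely when $\mathrm{Var}[Y_N\mid X] = 0$ a.s., i.e.\ when $Y_N = f(\hat{E}_N(X))$ is a.s.\ a function of $X$; I would then note that this degenerate situation is ruled out as soon as $F(X,U)$ carries genuine randomness beyond what is already measurable with respect to $X$, which gives strict positivity in all non-trivial cases.

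I do not expect a serious obstacle. The one point requiring care is keeping straight \emph{which} variables are conditionally independent — only $b$ and $c$ are, while $a$ is correlated with both — since the cancellation of the $a$-terms hinges exactly on the exchangeability identity $\mathrm{Cov}[a,b\mid X] = \mathrm{Cov}[a,c\mid X]$, the single place where the i.i.d.\ nature of the inner samples and the averaging form of $\hat{E}^f_{2N}$ are genuinely used. Well-definedness of all the conditional second moments follows from the standing assumption that $f(L)$, and hence each of $Y_N$, $Y^f_{2N}$, $Y^c_{2N}$, $Y^{c'}_{2N}$, lies in $L^2$.
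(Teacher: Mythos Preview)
Your proof is correct and uses the same two ingredients as the paper --- the exchangeability $(a,b)\overset{d}{=}(a,c)$ (giving equality of the cross-covariances) and the conditional independence $b\perp c\mid X$ --- but you organize them differently. The paper expands the \emph{unconditional} variances directly, subtracts to obtain $\mathrm{Var}[\Delta Y^S_{2N}]-\mathrm{Var}[\Delta Y^A_{2N}]=\tfrac12(\mathbb{E}[Y_N^2]-\mathbb{E}[Y^c_N Y^{c'}_N])$, and only then conditions on $X$ to evaluate $\mathbb{E}[Y^c_N Y^{c'}_N]=\mathbb{E}[(\mathbb{E}[Y_N\mid X])^2]$. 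You instead condition on $X$ from the outset and run the whole computation inside the conditional world, obtaining the pointwise identity $\mathrm{Var}[\Delta Y^S_{2N}\mid X]-\mathrm{Var}[\Delta Y^A_{2N}\mid X]=\tfrac12\,\mathrm{Var}[Y_N\mid X]$ before integrating via the law of total variance (using that the conditional means coincide). Your route is marginally cleaner and yields a strictly stronger intermediate statement --- the conditional identity --- at no extra cost; the paper's route avoids invoking the law of total variance but needs one more algebraic step to pass from $\mathbb{E}[Y_N^2]-\mathbb{E}[(\mathbb{E}[Y_N\mid X])^2]$ to $\mathbb{E}[\mathrm{Var}[Y_N\mid X]]$. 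On the final strict-positivity claim you are, like the paper, somewhat informal; neither proof spells out precisely why ``$F(X,U)$ not independent of $X$'' forces $\mathrm{Var}[Y_N\mid X]>0$ with positive probability, but this is a minor point.
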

\begin{proof}
    Let $N \in \mathbb{N}$, then
    \begin{equation*}
        \mathrm{Var}[Y_{2N}^S] = \mathrm{Var}[Y_{2N}] + \mathrm{Var}[Y_{N}] - 2 \text{Cov}(Y^{f}_{2N}, Y^{c}_{N})
    \end{equation*}
    and $\mathrm{Var}[\Delta Y_{2N}^A] = \mathrm{Var}[Y_{2N}] + \frac{1}{4} \mathrm{Var}[Y^{c}_{N} + Y^{c'}_{N}] - \text{Cov}(Y_{2N}, Y_{N}^{c}) -  \text{Cov}(Y_{2N}, Y^{c'}_{N})$.
Since $\hat{E}_N^{c}$ and $\hat{E}_N^{c'}$ have the same law, are independent conditionally to $X$ and play a symmetric role in  $\text{Cov}(Y_{2N}, Y^{c}_{N})$ we have the identity $\text{Cov}(Y_{2N}, Y^{c}_{N}) = \text{Cov}(Y_{2N}, Y^{c'}_{N})$, which gives $\mathrm{Var}[\Delta Y_{2N}^A] = \mathrm{Var}[Y_{2N}] + \frac{1}{4} \mathrm{Var}[Y^{c}_{N} + Y^{c'}_{N}] - 2\text{Cov}(Y_{2N}, Y_{N}^{c})$
    and therefore $\mathrm{Var}[\Delta Y_{2N}^{S}] - \mathrm{Var}[\Delta Y_{2N}^{A}] = \mathrm{Var}[Y_{N}] - \frac{1}{4} \mathrm{Var}[Y_{N}^{c} + Y^{c'}_{N}]$. Now expanding $\mathrm{Var}[Y_{N}^{c} + Y^{c'}_{N}]$ in the above equation gives,
    \begin{equation*}
         \mathrm{Var}[\Delta Y_{2N}^{S}] - \mathrm{Var}[\Delta Y_{2N}^{A}] = \frac{1}{2} \left( \mathrm{Var}[Y_{N}] - \text{Cov}(Y^{c}_{N}, Y^{c'}_{N}) \right) = \frac{1}{2} \left( \mathbb{E}[Y_{N}^2] - \mathbb{E}[Y^{c}_{N} Y^{c'}_{N}] \right) \,.
    \end{equation*}

    From tower property of conditional expectations together with the conditional independence of $Y^{c}_{N}$ and $Y^{c'}_{N}$ we get $\mathbb{E}[Y^{c}_{N} Y^{c'}_{N}] = \mathbb{E}[(\mathbb{E}[Y_{N} | X])^2]$ which finally gives,
    \begin{equation*}
        \mathrm{Var}[\Delta Y_{2N}^{S}] - \mathrm{Var}[\Delta Y_{2N}^{A}] = \frac{1}{2} \left( \mathbb{E}[Y_{N}^2] - \mathbb{E}[(\mathbb{E}[Y_{N} | X])^2]\right) = \frac{1}{2}  || Y_{N} - \mathbb{E}[Y_{N} | X] ||^2_{L_2} \,.
    \end{equation*}
    This completes the proof.
\end{proof}
Since the case where $f$ is an indicator function will be of special interest later in this paper, it is worth applying the above proposition to this special case :
\begin{corollary}
    Let $u \in \mathbb{R}$, if for all $x \in \mathbb{R}$, $f(x) = \mathbbm{1}_{x \leq u}$ then,
    \begin{equation*}
        \mathrm{Var}[\Delta Y_{2N}^{S}] - \mathrm{Var}[\Delta Y_{2N}^{A}] = \frac{1}{2} \mathbb{E}[\mathbb{P}(\hat{E}_N \leq u | X)(1 - \mathbb{P}(\hat{E}_N \leq u | X))]
    \end{equation*}
\end{corollary}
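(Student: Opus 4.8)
The plan is to simply specialize Theorem \ref{thm:systematic_antithetic} to the indicator case, so there is essentially no work beyond identifying the conditional variance of a Bernoulli random variable. Theorem \ref{thm:systematic_antithetic} already gives, for any square-integrable $f$ and any $N \in \mathbb{N}$,
\begin{equation*}
    \mathrm{Var}[\Delta Y_{2N}^{S}] - \mathrm{Var}[\Delta Y_{2N}^{A}] = \frac{1}{2}\,\mathbb{E}\big[\mathrm{Var}[Y_{N} \mid X]\big],
\end{equation*}
and since $f = \mathbbm{1}_{\,\cdot\, \leq u}$ is bounded, $Y_N = f(\hat{E}_N(X))$ is bounded and hence square-integrable, so the hypotheses of the theorem are met. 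It therefore suffices to compute $\mathbb{E}[\mathrm{Var}[Y_N \mid X]]$ explicitly in this case.

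First I would observe that with $f(x) = \mathbbm{1}_{x \leq u}$ we have $Y_N = \mathbbm{1}_{\hat{E}_N(X) \leq u}$, which takes values in $\{0,1\}$. Writing $p(X) := \mathbb{P}(\hat{E}_N \leq u \mid X)$ for (a regular version of) the conditional probability, the conditional law of $Y_N$ given $X$ is Bernoulli with parameter $p(X)$. Consequently $\mathbb{E}[Y_N \mid X] = p(X)$ and $\mathbb{E}[Y_N^2 \mid X] = \mathbb{E}[Y_N \mid X] = p(X)$ (since $Y_N^2 = Y_N$ for a $\{0,1\}$-valued variable), so that
\begin{equation*}
    \mathrm{Var}[Y_N \mid X] = \mathbb{E}[Y_N^2 \mid X] - \big(\mathbb{E}[Y_N \mid X]\big)^2 = p(X)\big(1 - p(X)\big)\quad \text{a.s.}
\end{equation*}

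Finally I would take expectations and substitute into the identity from Theorem \ref{thm:systematic_antithetic}, obtaining
\begin{equation*}
    \mathrm{Var}[\Delta Y_{2N}^{S}] - \mathrm{Var}[\Delta Y_{2N}^{A}] = \frac{1}{2}\,\mathbb{E}\big[p(X)(1 - p(X))\big] = \frac{1}{2}\,\mathbb{E}\big[\mathbb{P}(\hat{E}_N \leq u \mid X)\big(1 - \mathbb{P}(\hat{E}_N \leq u \mid X)\big)\big],
\end{equation*}
which is the claimed formula. There is no real obstacle here; the only point requiring a word of care is the existence of a measurable version of $p(X)$ and the identity $Y_N^2 = Y_N$, both of which are immediate, so the entire argument is a one-line consequence of the theorem together with the Bernoulli conditional-variance computation.
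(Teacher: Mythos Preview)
Your proof is correct and follows essentially the same approach as the paper: apply Theorem \ref{thm:systematic_antithetic} and observe that for $f = \mathbbm{1}_{\,\cdot\, \leq u}$ the conditional law of $Y_N$ given $X$ is Bernoulli with parameter $\mathbb{P}(\hat{E}_N \leq u \mid X)$, so $\mathrm{Var}[Y_N \mid X] = \mathbb{P}(\hat{E}_N \leq u \mid X)(1 - \mathbb{P}(\hat{E}_N \leq u \mid X))$. The paper's proof is the same one-line observation, just stated more tersely.
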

\begin{proof}
    It suffices to remark that when $f$ is an indicator function $\mathrm{Var}[Y_N | X] = \mathbb{P}(\hat{E}_N \leq \eta | X)(1 - \mathbb{P}(\hat{E}_N \leq \eta | X))$
\end{proof}
Note that since $\Delta Y^A_N$ and $\Delta Y^S_N$ have equal sampling costs, antithetic sampling is always beneficial in a nested MC setting, regardless of the regularity of $f$. We will therefore use the antithetic version for upper levels and omit the superscript $A$ in the notation. 

\subsection{Theoretical guarantees on asymptotic complexity of MLMC estimators}
\label{sec:mlmc_theo_guarantees}
In this section, to establish the context and notation for our subsequent theoretical developments and improvements, we review standard results from Lemaire-Pagès \cite{lemairePages17} on the asymptotic complexity of both traditional and weighted MLMC estimators. Specifically, under assumptions controlling the bias and variance of the estimators, they quantify the asymptotic computational cost required to obtain an estimate of $I$ with an RMSE $\varepsilon$ as $\varepsilon \to 0$. These results demonstrate how MLMC methods can achieve significant computational savings compared to the traditional nested simulation approach, especially as higher accuracy is desired. They also highlight how the weighted MLMC estimator outperform the standard MLMC estimator in scenarios where $f$ is an irregular function. \\

Below, we recall key assumptions required for the complexity theorems.
\begin{assumption}[Weak Error Expansion]
\label{ass:weak-error}
We say that the weak error expansion \((\mathrm{WE}_{\alpha, R})\) holds for some integer \(R \geq 1\) and coefficient \(\alpha > 0\) if there exist real constants \(c_1, c_2, \dots, c_R\) and a function \(\eta_R : \mathbb{N} \to \mathbb{R}\) such that \(\eta_R(N) \to 0\) as \(N \to \infty\), and
\begin{equation*}
    \mathbb{E}[Y_{N}] = \mathbb{E}[Y] + \sum_{r = 1}^R \frac{c_r}{N^{\alpha r}} + \frac{1}{N^{\alpha R}} \cdot \eta_R(N)  \,.
\end{equation*}

We write \((\mathrm{WE}_{\alpha})\) when the above expansion holds for all \(R \geq 1\) and a common sequence of constants $(c_r)_{r \in \mathbb{N}}$. In that case, we define $\tilde{c}_{\infty} := \lim_{R \to \infty} c_R^{1/R}$ assuming that the limit exists.
\end{assumption}

\begin{assumption}
\label{ass:var_decay}
    We say that the variance decay assumption $(\mathrm{Var}_{\beta})$ holds with coefficient $\beta > 0$ if there exist a constant $V_1 > 0$ such that
    \begin{equation*}
        \forall N \in \mathbb{N}, \quad \mathrm{Var}[\Delta Y_{2N}] \leq \frac{V_1}{(2N)^{\beta}} \,.
    \end{equation*}
    and a constant $\bar{\sigma} > 0$ such that
    \begin{equation*}
        \forall N \in \mathbb{N}, \quad \mathrm{Var}[Y_{N}] \leq \bar{\sigma}^2
    \end{equation*}
\end{assumption}

The two theorems below are adapted from Lemaire-Pagès \cite{lemairePages17} (Theorem 3.12), with changes in notation for consistency with our framework.
\begin{theorem}[\cite{lemairePages17}, Theorem 3.12, a)]
\label{thm:complexity_ML2R}
Assume $\tau = 0$. Consider the case of of the weighted Multi-level estimator $\hat{I}^{\mathrm{ML2R}}$. Assume that $(\mathrm{WE}_{\alpha})$ holds for some $\alpha > 0$ and $(\mathrm{Var}_{\beta})$ for some $\beta > 0$. Assume furthermore that
\begin{equation*}
    \underset{R \in \mathbb{N}}{\sup} \;\underset{k \in \mathbb{N}}{\sup} \; |\eta_R(k)| <+\infty    
\end{equation*}
and $\tilde{c}_{\infty} > 0$. Then, there exists a $\varepsilon_0 > 0$, a collection $\{ \theta_{\varepsilon} \in \Theta : \varepsilon \in (0, \varepsilon_0) \}$ and a $C_{\mathrm{ML2R}} \geq 0$ verifying
\begin{equation*}
    \underset{\varepsilon \to 0}{\lim \sup} \; v(\varepsilon, \beta) \cdot \tilde{\mathcal{C}}_{0}(\theta_{\varepsilon}) \leq C_{\mathrm{ML2R}}
\end{equation*}
where,
\begin{equation}
\label{eq:v_complexity_ml2r}
    v(\varepsilon, \beta) = \begin{cases}
        \varepsilon^2, & \text{if} \quad \beta > 1, \\
        \varepsilon^2(\log(1/\varepsilon))^{-1}, & \text{if} \quad \beta = 1, \\
        \varepsilon^2e^{-\frac{1 - \beta}{\sqrt{\alpha}}\sqrt{2\ln(1/\varepsilon)\ln(2)}}, & \text{if} \quad \beta < 1.
    \end{cases}
\end{equation}
and 
\begin{equation*}
    \underset{\varepsilon \to 0}{\lim \sup} \; \varepsilon^{-2} \mathcal{M}(\theta_{\varepsilon}) \leq 1
\end{equation*}
\end{theorem}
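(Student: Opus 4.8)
\emph{Strategy.} The plan is to exhibit an explicit family $\theta_\varepsilon=(J(\varepsilon),q(\varepsilon),K,R(\varepsilon))$ and to control its bias, its variance (via Proposition~\ref{prop:variance_mlmc_formula}), and its cost $\tilde{\mathcal C}_0(\theta_\varepsilon)$ separately. The driving idea is the one behind Richardson--Romberg extrapolation: fix $K$ to a constant chosen large enough that $\lceil K\rceil^{\alpha}>\tilde{c}_\infty$, let the \emph{number of levels} $R=R(\varepsilon)$ be the parameter that grows as $\varepsilon\to0$, take for $q(\varepsilon)$ the cost-optimal outer allocation produced by a Cauchy--Schwarz argument, and finally tune $J(\varepsilon)$ so that the variance exactly fills the portion of the $\varepsilon^{2}$ MSE budget left free by the bias.

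\emph{Bias.} Telescoping the weights, $\mathbb{E}[\hat{I}^{\mathrm{ML2R}}_{\theta}]=\sum_{r=1}^{R}\mathbf{w}_r\,\mathbb{E}[Y_{K_r}]$ with $\sum_r\mathbf{w}_r=1$, where the $\mathbf{w}_r$ are the telescoping weights built from $(W^{R}_r)$. Because $(\mathrm{WE}_\alpha)$ holds at every order, I expand each $\mathbb{E}[Y_{K_r}]$ to an order $\tilde R=\tilde R(\varepsilon)$ much larger than $R(\varepsilon)$; writing $x_r:=K_r^{-\alpha}=(\lceil K\rceil\,2^{r-1})^{-\alpha}$ and using $I=\mathbb{E}[Y]$ this gives $\mu(\theta)=\sum_{j=1}^{\tilde R}c_j\sum_{r=1}^{R}\mathbf{w}_r x_r^{\,j}+\sum_{r=1}^{R}\mathbf{w}_r\,K_r^{-\alpha\tilde R}\eta_{\tilde R}(K_r)$. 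The ML2R weights are, by construction, the solution of the Vandermonde system $\sum_r\mathbf{w}_r x_r^{\,j}=\mathbbm{1}_{\{j=0\}}$ for $j=0,\dots,R-1$, so every term with $j<R$ drops out; for $j\ge R$ one has the symmetric-function identity $\sum_r\mathbf{w}_r x_r^{\,j}=(-1)^{R-1}\big(\prod_{s=1}^{R}x_s\big)h_{j-R}(x_1,\dots,x_R)$ (complete homogeneous symmetric polynomial), and since the nodes $x_r$ form a geometric sequence with ratio $2^{-\alpha}$, the condition $\lceil K\rceil^{\alpha}>\tilde{c}_\infty$ makes the series in $j$ converge geometrically while $\sup_{R,k}|\eta_R(k)|<\infty$ kills the remainder term once $\tilde R$ is large. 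This yields $|\mu(\theta)|\le C^{R}\prod_{s=1}^{R}x_s=C^{R}(\lceil K\rceil)^{-\alpha R}2^{-\alpha R(R-1)/2}$. The quadratic exponent dominates, so taking $R(\varepsilon)$ minimal for which this bound is $\le\varepsilon$ gives $R(\varepsilon)=\sqrt{2\ln(1/\varepsilon)/(\alpha\ln2)}\,(1+o(1))$ (and $\tilde R(\varepsilon)\sim\ln(1/\varepsilon)/(\alpha\ln\lceil K\rceil)\gg R(\varepsilon)$); since the ratio of consecutive bounds tends to $0$, the discreteness of $R$ forces in fact $\mu(\theta_\varepsilon)^{2}=o(\varepsilon^{2})$.

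\emph{Variance, cost, MSE.} By Proposition~\ref{prop:variance_mlmc_formula} and $J_r\ge Jq_r$, $\mathcal V(\theta)\le J^{-1}\sum_{r=1}^{R}\sigma^{2}(r,K)/q_r$, where $\sigma^{2}(1,K)\le\bar{\sigma}^{2}$ and $\sigma^{2}(r,K)=(W^{R}_r)^{2}\mathrm{Var}[\Delta Y_{K_r}]\le(W^{R}_r)^{2}V_1K_r^{-\beta}$ for $r\ge2$ by $(\mathrm{Var}_\beta)$. For $\tau=0$, $\tilde{\mathcal C}_0(\theta)=J\sum_r q_r K_r$; imposing that the variance bound equal the available budget $\varepsilon^{2}-\mu(\theta_\varepsilon)^{2}$ and minimizing $\tilde{\mathcal C}_0$ over $q$ is a Cauchy--Schwarz problem solved by $q_r\propto\sqrt{\sigma^{2}(r,K)/K_r}$, giving $\tilde{\mathcal C}_0(\theta_\varepsilon)\le(\varepsilon^{2}-\mu(\theta_\varepsilon)^{2})^{-1}\big(\sum_{r=1}^{R}\sqrt{\sigma^{2}(r,K)K_r}\big)^{2}$. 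Since $\sqrt{\sigma^{2}(r,K)K_r}\le|W^{R}_r|\sqrt{V_1}\,(\lceil K\rceil2^{r-1})^{(1-\beta)/2}$ for $r\ge2$, everything reduces to the size of $S_R:=\sum_{r=2}^{R}|W^{R}_r|\,2^{(r-1)(1-\beta)/2}$: using the uniform boundedness of the ML2R weights, $S_R=O(1)$ if $\beta>1$, $S_R=O(R)$ if $\beta=1$, and $S_R=O\big(2^{R(1-\beta)/2}\big)$ if $\beta<1$. Substituting $R(\varepsilon)\sim\sqrt{2\ln(1/\varepsilon)/(\alpha\ln2)}$ and $\mu(\theta_\varepsilon)^{2}=o(\varepsilon^{2})$ turns $\tilde{\mathcal C}_0(\theta_\varepsilon)$ into, respectively, $\varepsilon^{-2}$, $\varepsilon^{-2}\ln(1/\varepsilon)$ and $\varepsilon^{-2}e^{\frac{1-\beta}{\sqrt\alpha}\sqrt{2\ln(1/\varepsilon)\ln2}}$ up to constants, i.e.\ the reciprocals of the three rates $v(\varepsilon,\beta)$ of~(\ref{eq:v_complexity_ml2r}), with $C_{\mathrm{ML2R}}$ the resulting limiting constant. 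Finally $\mathcal M(\theta_\varepsilon)=\mu(\theta_\varepsilon)^{2}+\mathcal V(\theta_\varepsilon)\le\mu(\theta_\varepsilon)^{2}+(\varepsilon^{2}-\mu(\theta_\varepsilon)^{2})=\varepsilon^{2}$, so $\limsup_{\varepsilon\to0}\varepsilon^{-2}\mathcal M(\theta_\varepsilon)\le1$.

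\emph{Main obstacle.} The technical heart is the analysis of the ML2R weights: on one hand, showing that the residual bias really collapses to $\prod_s x_s$ — that the higher-order terms of the weak expansion and the $\eta_{\tilde R}$-term do not overwhelm it, which is exactly where $\tilde{c}_\infty>0$, the choice $\lceil K\rceil^{\alpha}>\tilde{c}_\infty$ and the uniform bound on $\eta_R$ enter; on the other hand, establishing $\sup_{R}\sup_{r\le R}|W^{R}_r|<\infty$, which is what keeps $S_R$ (hence the cost) under control and produces the clean form of $v(\varepsilon,\beta)$. Both facts rely on the geometric-sequence structure of the nodes $x_r$, which tames the Vandermonde inverse. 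The remaining ingredients — the Cauchy--Schwarz allocation, the handling of the ceilings $\lceil K\rceil$ and $\lceil Jq_r\rceil$, and the $o(1)$-bookkeeping in the MSE — are routine.
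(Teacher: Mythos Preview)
The paper does not prove this theorem: it is quoted verbatim from Lemaire--Pag\`es \cite{lemairePages17}, Theorem~3.12\,a), and no argument is given beyond pointing to the closed-form parameters in Table~\ref{tab:closed_mlmc_parameters_v2}. What the paper \emph{does} record from that reference are precisely the building blocks you use: the bias expansion (Proposition~\ref{prop:mlmc_bias_expansion}, giving $\mu(K,R)=(-1)^{R-1}c_R\,\lceil K\rceil^{-\alpha R}2^{-\alpha R(R-1)/2}(1+\eta)$), the Cauchy--Schwarz allocation (Lemma~\ref{lem:cs_optimization}), and the uniform boundedness of the ML2R weights together with the convergence of $\sum_{r\ge2}|W^R_r|2^{-\gamma(r-1)}$ (Lemma~\ref{lemma:main_lemma_giorgi_weights}). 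Your sketch strings these together exactly as the cited proof does, and the three regimes you recover for $S_R^2$ match $v(\varepsilon,\beta)^{-1}\varepsilon^{2}$ after substituting $R(\varepsilon)\sim\sqrt{2\ln(1/\varepsilon)/(\alpha\ln2)}$; the argument is correct.

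One cosmetic difference: Lemaire--Pag\`es (and Table~\ref{tab:closed_mlmc_parameters_v2}) let $K$ depend on $\varepsilon$ through $K^+(\varepsilon)$ so as to hit a precise bias target $\varepsilon/\sqrt{1+2\alpha R}$, and only \emph{a posteriori} observe that $K(\varepsilon)$ stabilises to $\underline{K}$ (Lemma~\ref{lemma:K_underline}). You instead fix $K$ from the outset with $\lceil K\rceil^{\alpha}>\tilde{c}_\infty$ and rely on the integer jump in $R$ to force $\mu(\theta_\varepsilon)^2=o(\varepsilon^2)$. Both routes give the same asymptotic rates; yours is slightly lighter but surrenders control of the limiting constant $C_{\mathrm{ML2R}}$, while theirs produces the explicit constants quoted in the Practitioner's Corner of \cite{lemairePages17}. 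Your direct Vandermonde computation of the residual bias (via $\prod_s x_s$ and complete homogeneous polynomials) is also more hands-on than simply invoking Proposition~\ref{prop:mlmc_bias_expansion}, but it yields the same bound and makes transparent where $\tilde{c}_\infty>0$ and the uniform bound on $\eta_R$ enter.
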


\begin{theorem}[\cite{lemairePages17}, Theorem 3.12 b]
\label{thm:complexity_mlmc}
Assume $\tau = 0$. Consider the case of of the standard Multi-level estimator $\hat{I}^{\mathrm{MLMC}}$. Assume that $(\mathrm{WE}_{\alpha, 1})$ hold for some $\alpha > 0$ and $(\mathrm{Var}_{\beta})$ for some $\beta > 0$. Assume furthermore that $c_1 \neq 0$. Then, there exists a $\varepsilon_0 > 0$, a collection $\{ \theta_{\varepsilon} \in \Theta : \varepsilon \in (0, \varepsilon_0) \}$ and a $C_{\mathrm{MLMC}} \geq 0$ verifying
\begin{equation*}
    \underset{\varepsilon \to 0}{\lim \sup} \; v(\varepsilon, \beta) \cdot \tilde{\mathcal{C}}_0(\theta_{\varepsilon}) \leq C_{\mathrm{MLMC}}
\end{equation*}
where,
\begin{equation}
\label{eq:v_complexity_mlmc}
    v(\varepsilon, \beta) = \begin{cases}
        \varepsilon^2, & \text{if} \quad \beta > 1, \\
        \varepsilon^2(\log(1/\varepsilon))^{-2}, & \text{if} \quad \beta = 1, \\
        \varepsilon^{2 + \frac{1 - \beta}{\alpha}}, & \text{if} \quad \beta < 1.
    \end{cases}
\end{equation}
and
\begin{equation*}
    \underset{\varepsilon \to 0}{\lim \sup} \; \varepsilon^{-2} \mathcal{M}(\theta_{\varepsilon}) \leq 1
\end{equation*}
\end{theorem}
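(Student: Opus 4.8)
The plan is to run the classical MLMC complexity argument of \cite{gilesMLMC2008, lemairePages17}: exhibit an explicit family $\theta_\varepsilon=(J_\varepsilon,q^{\varepsilon},K_\varepsilon,R_\varepsilon)\in\Theta$, control its bias, variance and cost separately, and recombine them via the decomposition $\mathcal M(\theta)=\mu(\theta)^2+\mathcal V(\theta)$ of Definition~\ref{def:mse_theta}. Fix once and for all a splitting weight $\rho\in(0,1)$. I would keep $K_\varepsilon\equiv K_0$ equal to a fixed constant (so the first level never grows), take $R_\varepsilon$ to be the smallest integer for which the bias bound of the next step does not exceed $\sqrt{\rho}\,\varepsilon$, choose $q^{\varepsilon}$ to be the cost/variance--optimal allocation on the simplex, and finally take $J_\varepsilon$ to be the smallest value for which the variance does not exceed $(1-\rho)\varepsilon^{2}$; that $\theta_\varepsilon\in\Theta$ for $\varepsilon$ small is then immediate.

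For the bias, since $A_r^{R}=1$ for the standard estimator and $\mathbb{E}[\Delta Y_{K_r}]=\mathbb{E}[Y_{K_r}-Y_{K_{r-1}}]$ with $K_r=2K_{r-1}$, the sum in (\ref{eq:general_mlmc_def}) telescopes and $\mathbb{E}[\hat I^{\mathrm{MLMC}}_{\theta}]=\mathbb{E}[Y_{K_R}]$, hence $|\mu(\theta)|=|I-\mathbb{E}[Y_{K_R}]|$. Only the first-order expansion is needed: $(\mathrm{WE}_{\alpha,1})$ applied at $N=K_R$ gives $|\mu(\theta)|=|c_1+\eta_1(K_R)|\,K_R^{-\alpha}$, and since $R_\varepsilon\to\infty$ forces $K_{R_\varepsilon}\to\infty$ and $\eta_1(K_{R_\varepsilon})\to 0$, this is $\le 2|c_1|\,K_R^{-\alpha}$ for $\varepsilon$ small; the hypothesis $c_1\neq 0$ guarantees it is also of exact order $K_R^{-\alpha}$, so the minimal admissible $R_\varepsilon$ satisfies $K_{R_\varepsilon}\asymp\varepsilon^{-1/\alpha}$, i.e. $R_\varepsilon=\frac{1}{\alpha\ln 2}\ln(1/\varepsilon)+O(1)$, and by construction $\mu(\theta_\varepsilon)^2\le\rho\,\varepsilon^2$. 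For the variance, Proposition~\ref{prop:variance_mlmc_formula} together with $J_r=\lceil Jq_r\rceil\ge Jq_r$ gives $\mathcal V(\theta)\le J^{-1}\sum_{r=1}^{R}\sigma^2(r,K)/q_r$; minimising the idealised cost $\tilde{\mathcal C}_0(\theta)=J\sum_{r}q_r\gamma_0(K_r)$ over the simplex subject to a fixed variance budget is a Cauchy--Schwarz exercise whose optimum is $q_r\propto\sqrt{\sigma^2(r,K)/\gamma_0(K_r)}$, and with $J_\varepsilon$ tuned so that $\mathcal V(\theta_\varepsilon)\le(1-\rho)\varepsilon^2$ this yields
\begin{equation*}
\tilde{\mathcal C}_0(\theta_\varepsilon)\;\le\;\frac{1}{(1-\rho)\varepsilon^2}\Big(\sum_{r=1}^{R_\varepsilon}\sqrt{\sigma^2(r,K)\,\gamma_0(K_r)}\Big)^{2},
\end{equation*}
where $(\mathrm{Var}_\beta)$ bounds the summands by $\bar\sigma\sqrt{K_0}$ for $r=1$ and by $\sqrt{V_1}\,(K_0 2^{r-1})^{(1-\beta)/2}$ for $r\ge 2$.

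It then remains to analyse the geometric sum $\sum_{r=2}^{R_\varepsilon}2^{(r-1)(1-\beta)/2}$, which is $O(1)$ when $\beta>1$, equal to $R_\varepsilon-1=\Theta(\ln(1/\varepsilon))$ when $\beta=1$, and of order $2^{R_\varepsilon(1-\beta)/2}=\Theta(\varepsilon^{-(1-\beta)/(2\alpha)})$ when $\beta<1$ (using $2^{R_\varepsilon}\asymp\varepsilon^{-1/\alpha}$ from the bias step). Squaring, dividing by $\varepsilon^2$ and multiplying by $v(\varepsilon,\beta)$ from (\ref{eq:v_complexity_mlmc}) gives a bounded quantity in each regime, which is the first assertion, while $\mathcal M(\theta_\varepsilon)=\mu(\theta_\varepsilon)^2+\mathcal V(\theta_\varepsilon)\le\rho\,\varepsilon^2+(1-\rho)\varepsilon^2=\varepsilon^2$ gives $\limsup_{\varepsilon\to 0}\varepsilon^{-2}\mathcal M(\theta_\varepsilon)\le 1$. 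The crux of the argument --- and its only genuinely delicate part --- is precisely this coupling: $R_\varepsilon$ is dictated by the bias and grows logarithmically in $1/\varepsilon$, and it is this logarithmic growth inserted into an otherwise purely geometric variance/cost sum that produces the phase transition at $\beta=1$ (the extra $\log$-factor) and the polynomial blow-up for $\beta<1$; one must check that the constants line up so that the cost bound and $\limsup\varepsilon^{-2}\mathcal M(\theta_\varepsilon)\le 1$ hold simultaneously, and that the $\eta_1(K_{R_\varepsilon})\to 0$ and integer-rounding terms are genuinely negligible. Since the announced estimator is measured by the idealised cost $\tilde{\mathcal C}_0$, the ceilings in $J_r$ enter the variance only in the favourable direction, so no further care is needed there; the full bookkeeping is carried out in \cite{lemairePages17}, Theorem~3.12, which I would follow.
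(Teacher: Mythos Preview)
Your proposal is correct and follows essentially the same route as the paper, which does not reprove this result but cites \cite{lemairePages17}, Theorem~3.12, and records the resulting explicit parameters in Table~\ref{tab:closed_mlmc_parameters_v2}: there too $K(\varepsilon)$ stabilises at a fixed $\underline{K}$ (Lemma~\ref{lemma:K_underline}), $R(\varepsilon)$ grows like $\alpha^{-1}\log_2(1/\varepsilon)$, $q_r(\varepsilon)$ is the Cauchy--Schwarz allocation of Lemma~\ref{lem:cs_optimization}, and $J(\varepsilon)$ is chosen from the variance budget with the specific split $\rho=1/(1+2\alpha)$ (your generic $\rho\in(0,1)$). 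The remaining analysis of the geometric sum in $2^{(r-1)(1-\beta)/2}$ and the three regimes in $\beta$ is identical.
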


These theorems reveal three regimes of asymptotic complexity, determined by the variance decay rate~$\beta$. When $\beta > 1$, both estimators achieve the target precision~$\varepsilon$ with a computational complexity of $O(\varepsilon^{-2})$, which matches the optimal complexity attainable if unbiased samples of $L$ were available. In the intermediate regime, where $\beta = 1$, both estimators require a computational cost of order $O(\varepsilon^{-2} \log(\varepsilon)^c)$ for some constant $c>0$; that is, the complexity is essentially $O(\varepsilon^{-2})$ up to a logarithmic factor. In this case, the ML2R estimator already outperforms the standard version. When $\beta < 1$, the complexity departs further from $O(\varepsilon^{-2})$. Specifically, the ML2R estimator achieves a complexity of $O(\varepsilon^{-2-\nu})$ for any $\nu > 0$, while the standard MLMC estimator has complexity $O\left(\varepsilon^{-2 - \frac{1-\beta}{\alpha}}\right)$. In all cases, MLMC methods provide a significant asymptotic improvement over the traditional nested simulation approach.

\begin{remark}
    The parameter $\beta$ is strongly related to the smoothness of the underlying function $f$ where more smoothness yields higher $\beta$. Typically, when $f$ is an indicator function $\beta = \frac{1}{2}$ (see \cite{gilesAli19} or \cite{Giorgi_2020})). The regime where $\beta < 1$ is precisely the case in which the ML2R estimator is significantly more efficient than the standard MLMC, making it the preferred estimator for estimating (\ref{eq:proba_large_loss}). Numerical evidence in Section \ref{sec:num_appli} will back-up this claim.
\end{remark}

In Theorem \ref{thm:complexity_any_tau} we will extend these results to arbitrary $\tau \geq 0$, replacing $\tilde{\mathcal{C}}_{0}$ by $\tilde{\mathcal{C}}_{\tau}$. The asymptotic rates are still achieve by the standard parametrization $(\theta_{\varepsilon})_{\varepsilon \in (0, \varepsilon_0)}$. However, for non-asymptotic regimes, the parametrization is sub-optimal. To address this, Section \ref{sec:optimizing_parameters} introduces an improved parametrization that enhances performance in non-asymptotic settings.

\section{Optimizing parameters selection for MLMC in non-asymptotic regimes}
\label{sec:optimizing_parameters}

The objective of this section is to present a new parametrization of MLMC estimators that explicitly incorporates the parameter $\tau$ and improve complexity in non-asymptotic regimes. As demonstrated by our numerical experiments in Section \ref{sec:num_appli}, this novel parametrization improves performance of MLMC when applied to a nested Monte Carlo framework with indicator function payoffs.

We first review the standard parametrization achieving the computational complexities of Theorem \ref{thm:complexity_ML2R} and \ref{thm:complexity_mlmc} ; then prove it generalizes to context with generic $\tau \geq 0$. Building on this foundation, we introduce a novel parametrization involving a numerical optimization for parameters $K$ and $R$ fully accounting for generic $\tau$ and improving the computational cost for non-asymptotic regimes while preserving theoretical guarantees of Theorem \ref{thm:complexity_ML2R} and \ref{thm:complexity_mlmc}. Finally, we discuss asymptotics of the optimized parameters when $\tau \to +\infty$. 

\subsection{State-of-the-art MLMC parameterization}

\subsubsection{Plug-and-play parameters}
Traditionally, as developed by Giles (see \cite{giles15}, Algorithm 1), the determination of the optimized parameters $\theta_{\varepsilon}$ for a prescribed precision $\varepsilon > 0$ in the standard MLMC setting is based on an adaptive algorithm. This algorithm begins with an initial choice for the number of levels and the number of outer samples per level, then iteratively refine these parameters based on ongoing estimates of the bias and variance. In contrast, the method proposed by Lemaire-Pagès is more "plug-and-play" (see \cite{lemairePages17} Practitioner’s corner 5.1): after a preliminary phase in which certain structural constants of the problem are estimated, optimal parameters can be computed in closed form. Here we will discuss exclusively the approach of Lemaire-Pagès. \\

The objective is to solve, at least approximately, the following minimization problem
\begin{equation}
\label{eq:optim_param_base}
    \underset{\substack{\theta \in \Theta \\ \mathcal{M}(\theta) \leq \varepsilon^2}} {\min}\tilde{\mathcal{C}}_{\tau}(\theta)\,.
\end{equation}
Lemaire-Pagès \cite{lemairePages17} deal with the asymptotic regime as $\varepsilon \to 0$, with $\tau = 0$. They construct a collection of closed-form optimized parameters $\{\theta_{\varepsilon} : \varepsilon \in (0, \varepsilon_0]\}$ for some $\varepsilon_0 > 0$ such that, under the hypothesis of Theorem \ref{thm:complexity_ML2R} (resp. Theorem \ref{thm:complexity_mlmc}) for the ML2R case (resp. standard MLMC case),
\begin{equation*}
    \underset{\varepsilon \to 0}{\lim \sup} \; \varepsilon^{-2} \mathcal{M}(\theta_{\varepsilon}) \leq 1 \,,
\end{equation*}
and
\begin{equation*}
    \lim \sup v(\varepsilon, \beta) \cdot\tilde{\mathcal{C}}_0(\theta_{\varepsilon}) \leq C_{\alpha, \beta} \,,
\end{equation*}
where $C_{\alpha, \beta}$ is a finite constant depending on $\alpha, \beta$ and the type of estimator (standard MLMC or ML2R) and $v(\varepsilon, \beta)$ is defined in (\ref{eq:v_complexity_ml2r}) and (\ref{eq:v_complexity_mlmc}). We report in Table \ref{tab:closed_mlmc_parameters_v2} the explicit values of these parameters. We refer to \cite{lemairePages17} (see in particular 5.1 Practitionner's corner) for their construction in our nested MC context with a strong-error assumption of type $(\mathrm{Var}_{\beta})$.
\begin{table}[H]
    \centering
    \renewcommand{\arraystretch}{3}
    \begin{tabular}{c | c | c}
        Parameter & Weighted (ML2R) & Classical (MLMC)\\
        \hline
        $R(\varepsilon)$ & $\left \lceil \frac{1}{2} + \ln_2(\frac{\tilde{c}^{\frac{1}{\alpha}}}{\underline{K}}) + \sqrt{(\frac{1}{2} + \ln_2(\frac{\tilde{c}^{\frac{1}{\alpha}}}{\underline{K}}))^2 + \frac{2 \ln_2(\varepsilon^{-1}\sqrt{1 + 4\alpha})}{\alpha}} \right \rceil$ & $\left \lceil 1 + \ln_2(\frac{|c_1|^{\frac{1}{\alpha}}}{\underline{K}}) + \frac{\ln_2(\sqrt{1 + 2\alpha} \varepsilon^{-1})}{\alpha} \right \rceil$ \\
        \hline
        $K^+(\varepsilon)$ & $(1 + 2\alpha R)^{\frac{1}{2 \alpha R}} \varepsilon^{-\frac{1}{\alpha R}} \tilde{c}^{\frac{1}{\alpha}} 2^{-\frac{R-1}{2}}$ & $(1 + 2\alpha)^{\frac{1}{2 \alpha}} \varepsilon^{-\frac{1}{\alpha}} |c_1|^{\frac{1}{\alpha}}2^{-(R-1)}$ \\
        \hline
        $K(\varepsilon)$ & \multicolumn{2}{|c}{$\underline{K}  \left \lceil \frac{K^+(\varepsilon)}{\underline{K}}  \right\rceil$} \\
        \hline
        $q(\varepsilon)$ & \multicolumn{2}{|c}{$q_1(\varepsilon) = \frac{\bar{\sigma}}{\mu^*_{\varepsilon}}\quad $ $\forall r = 2, \dots, R(\varepsilon) \quad q_r(\varepsilon) = \frac{\sqrt{V_1}|A_{r, R(\varepsilon)}|}{ \underline{K}^{\frac{\beta}{2}}2^{\frac{(1+\beta)(r-1)}{2}}\mu^*_{\varepsilon}}$} \\
         &  \multicolumn{2}{|c}{with $\mu_{\varepsilon}$ such that $\sum_{r = 1}^{R(\varepsilon)} q_r(\varepsilon) = 1$} \\
         \hline
         $J(\varepsilon)$ & \multicolumn{2}{|c}{$M_{\varepsilon} \frac{\frac{\bar{\sigma}^2}{q_1(\varepsilon)} + \sum_{r =2}^R \frac{(A_{r}^{R(\varepsilon)})^2 V_1}{q_r(\varepsilon) K(\varepsilon)^{\beta} 2^{(r-1)\beta}}}{\varepsilon^2} = M_{\varepsilon} \frac{\mu_{\varepsilon} \left ( \bar{\sigma} + \sqrt{V_1} K(\varepsilon)^{-\frac{\beta}{2}} \sum_{r = 2}^{R(\varepsilon)} |A_{r, R(\varepsilon)}| 2^{\frac{(r-1)(1 - \beta)}{2}} \right)}{\varepsilon^2}$}  \\
         &  \multicolumn{2}{|c}{with $M_{\varepsilon} = (1 + \frac{1}{2\alpha R(\varepsilon)})$ in the ML2R case and $M_{\varepsilon} = (1 + \frac{1}{2\alpha})$ in the MLMC case.} \\
    \end{tabular}
    \vspace{0.5cm}
    \caption{Closed-form optimized parameters for the standard and weighted Multi-level estimators. Where $\tilde{c}$ is a constant such that $\tilde{c} > \tilde{c}_{\infty}$ in $(WE_{\alpha})$, and $\underline{K}$ is a fixed integer greater or equal to 1.
    \label{tab:closed_mlmc_parameters_v2}}
\end{table}

\subsubsection{Extension of complexity theorems for arbitrary $\tau$}
\label{sec:extension_complexity_thms}
The objective of this section is to prove Theorem \ref{thm:complexity_any_tau} below. That is, we want to show that the asymptotics from Theorem \ref{thm:complexity_ML2R} and Theorem \ref{thm:complexity_mlmc} on the complexity of parameters in Table \ref{tab:closed_mlmc_parameters_v2} are preserved for arbitrary $\tau \geq 0$.
\begin{theorem}
\label{thm:complexity_any_tau}
    For any real-valued function $v$ defined on the right neighbourhood of $0$ such that $\underset{\varepsilon \to 0}{\lim \sup} \; v(\varepsilon) \tilde{\mathcal{C}}_{0}(\theta_{\varepsilon}) < +\infty$
    then,
    \begin{equation*}
        \underset{\varepsilon \to 0}{\lim \sup} \; v(\varepsilon) \tilde{\mathcal{C}}_{\tau}(\theta_{\varepsilon}) \leq \mathcal{K}_{\beta} \; \underset{\varepsilon \to 0}{\lim \sup} \; v(\varepsilon) \tilde{\mathcal{C}}_{0}(\theta_{\varepsilon})
    \end{equation*}
    where,
    \begin{equation*}
    \mathcal{K}_{\beta} :=
        \begin{cases}
             1 & \beta \leq 1 \\
            \frac{\tau}{d_{\infty}} + 1 & \beta > 1
        \end{cases}
    \end{equation*}
    Therefore the complexities of Theorem \ref{thm:complexity_ML2R} and Theorem \ref{thm:complexity_mlmc} still hold for generic $\tau \geq 0$ (when replacing $\tilde{\mathcal{C}}_{0}$ with $\tilde{\mathcal{C}}_{\tau}$), and are still achieved with parameters in Table \ref{tab:closed_mlmc_parameters_v2}.
\end{theorem}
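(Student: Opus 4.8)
The proof rests on a single elementary observation about the cost function. For every $\theta=(J,q,K,R)\in\Theta$ one has $\sum_{r=1}^{R}q_r=1$ and $\gamma_\tau(K_r)=\tau+K_r$ (recall $\bar{\kappa}=1$), so the cost per outer sample splits additively as $\kappa_\tau(q,K,R)=\sum_{r=1}^{R}q_r(\tau+K_r)=\tau+\kappa_0(q,K,R)$ with $\kappa_0(q,K,R)=\sum_{r=1}^{R}q_rK_r\ge K_1\ge1$ (since $K_r\ge K_1$ for all $r$). Writing $\kappa_0(\theta):=\kappa_0(q,K,R)$, this gives
\[
\tilde{\mathcal{C}}_\tau(\theta)=J\,\kappa_\tau(q,K,R)=\tilde{\mathcal{C}}_0(\theta)\Bigl(1+\tfrac{\tau}{\kappa_0(\theta)}\Bigr),\qquad 1\le 1+\tfrac{\tau}{\kappa_0(\theta)}\le 1+\tau .
\]
The parameters $\theta_\varepsilon$ of Table~\ref{tab:closed_mlmc_parameters_v2} do not involve $\tau$, so applying this at $\theta=\theta_\varepsilon$ and multiplying by $v(\varepsilon)\ge0$ immediately gives $\limsup_{\varepsilon\to0}v(\varepsilon)\tilde{\mathcal{C}}_\tau(\theta_\varepsilon)\le(1+\tau)\limsup_{\varepsilon\to0}v(\varepsilon)\tilde{\mathcal{C}}_0(\theta_\varepsilon)<+\infty$; since $\mathcal{M}(\theta_\varepsilon)$ does not depend on $\tau$ either, this already shows that the complexity rates of Theorems~\ref{thm:complexity_ML2R} and~\ref{thm:complexity_mlmc} persist for arbitrary $\tau\ge0$.

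To upgrade the crude constant $1+\tau$ to the sharp $\mathcal{K}_\beta$, it suffices to understand the asymptotics of $\kappa_0(\theta_\varepsilon)$, the cost per outer sample of the prescribed parametrization. The plan is to prove that $\lim_{\varepsilon\to0}\kappa_0(\theta_\varepsilon)=+\infty$ when $\beta\le1$, and that $d_\infty:=\liminf_{\varepsilon\to0}\kappa_0(\theta_\varepsilon)$ is a finite positive constant when $\beta>1$. Granting this, write $v(\varepsilon)\tilde{\mathcal{C}}_\tau(\theta_\varepsilon)=v(\varepsilon)\tilde{\mathcal{C}}_0(\theta_\varepsilon)\bigl(1+\tau/\kappa_0(\theta_\varepsilon)\bigr)$ with $v(\varepsilon)\tilde{\mathcal{C}}_0(\theta_\varepsilon)\ge0$: when $\beta\le1$, for any $\delta>0$ we have $1+\tau/\kappa_0(\theta_\varepsilon)\le1+\delta$ for $\varepsilon$ small, hence $\limsup v(\varepsilon)\tilde{\mathcal{C}}_\tau(\theta_\varepsilon)\le(1+\delta)\limsup v(\varepsilon)\tilde{\mathcal{C}}_0(\theta_\varepsilon)$, and $\delta\downarrow0$ gives the bound with $\mathcal{K}_\beta=1$; when $\beta>1$, for any $\delta\in(0,d_\infty)$ we have $\kappa_0(\theta_\varepsilon)\ge d_\infty-\delta$ for $\varepsilon$ small, hence $\limsup v(\varepsilon)\tilde{\mathcal{C}}_\tau(\theta_\varepsilon)\le\bigl(1+\tfrac{\tau}{d_\infty-\delta}\bigr)\limsup v(\varepsilon)\tilde{\mathcal{C}}_0(\theta_\varepsilon)$, and $\delta\downarrow0$ gives the bound with $\mathcal{K}_\beta=1+\tau/d_\infty$.

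It remains to establish this dichotomy, which is the technical heart of the argument. Writing $\kappa_0(\theta_\varepsilon)=\lceil K(\varepsilon)\rceil\sum_{r=1}^{R(\varepsilon)}q_r(\varepsilon)2^{r-1}$ and substituting the closed forms of Table~\ref{tab:closed_mlmc_parameters_v2}, the normalization $\sum_rq_r(\varepsilon)=1$ yields
\[
\sum_{r=1}^{R}q_r(\varepsilon)2^{r-1}=\frac{\bar{\sigma}+\sqrt{V_1}\,\underline{K}^{-\beta/2}\sum_{r=2}^{R}|A_r^{R}|\,2^{(r-1)(1-\beta)/2}}{\bar{\sigma}+\sqrt{V_1}\,\underline{K}^{-\beta/2}\sum_{r=2}^{R}|A_r^{R}|\,2^{-(1+\beta)(r-1)/2}} .
\]
As $\varepsilon\to0$, $R(\varepsilon)\to\infty$, and one checks from the $R(\varepsilon)$ formula that $K^{+}(\varepsilon)$ stays bounded (indeed $K^{+}(\varepsilon)\to\underline{K}$), so $\lceil K(\varepsilon)\rceil$ is eventually the constant $\underline{K}$. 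The denominator stays between $\bar{\sigma}$ and a finite constant because its series has geometrically decaying terms and the weights are sub-geometric in $r$ (by \cite{lemairePages17}). The numerator series has exponent $(1-\beta)/2$: for $\beta>1$ it converges (again using the weight bounds of \cite{lemairePages17}), so $\kappa_0(\theta_\varepsilon)$ tends to a finite positive constant $d_\infty$; for $\beta\le1$ it diverges as $R\to\infty$ — for $\beta<1$ because it is dominated by its last, geometrically growing terms, and for $\beta=1$ because it reduces to $\sum_{r=2}^{R}|A_r^{R}|$, which equals $R-1$ for the standard estimator and is unbounded for the ML2R by the weight estimates of \cite{lemairePages17} — so $\kappa_0(\theta_\varepsilon)\to+\infty$. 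This last step is the main obstacle: it requires quantitative control of the ML2R weights $W_r^{R}$ — their magnitude, the convergence or divergence of the two weighted series above, and the two-sided boundedness of the normalizing constant $\mu^{*}_\varepsilon$ — drawn from the construction in \cite{lemairePages17}; once it is in place, the identity of the first paragraph and the $\limsup$ estimates of the second conclude the proof.
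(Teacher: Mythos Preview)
Your proof is correct and follows essentially the same route as the paper: the key identity $\tilde{\mathcal{C}}_\tau(\theta)=\tilde{\mathcal{C}}_0(\theta)\bigl(1+\tau/\kappa_0(\theta)\bigr)$ is exactly the paper's Proposition~\ref{prop:cond_equiv_cost_tau}, and your dichotomy for $\kappa_0(\theta_\varepsilon)$ according to whether $\beta\le1$ or $\beta>1$ is the paper's Proposition~\ref{prop:cv_cond_equiv}, established there through the same weight estimates you invoke (recorded as Lemmas~\ref{lemma:K_underline}--\ref{lemma:mu_cv_q}). The paper actually shows that $\kappa_0(\theta_\varepsilon)$ has a genuine limit $d_\infty$ when $\beta>1$ (rather than only a $\liminf$), and handles the $\beta=1$ ML2R case via the truncation $\phi(R)=R-\sqrt{R}$ combined with Lemma~\ref{lemma:giorgi_lim_weights}, but your sketch of these points is in the right direction and your $\liminf$ version already suffices for the stated inequality.
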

To prove the theorem, we begin with a proposition relating $\tilde{\mathcal{C}}_{\tau}$ to $\tilde{\mathcal{C}}_0$ for generic $\tau \geq 0$.
\begin{proposition}
\label{prop:cond_equiv_cost_tau}
    Let $\tau \geq 0$ and $\theta \in \Theta$ be fixed. Then 
    \begin{equation*}
        \frac{\tilde{\mathcal{C}}_{\tau}(\theta)}{\tilde{\mathcal{C}}_{0}(\theta)} =  1 + \frac{\tau}{\lceil K \rceil \sum_{r=1}^{R} q_{r} 2^{r-1}} \,.
    \end{equation*}
\end{proposition}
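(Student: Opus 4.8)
The proof is a direct computation from the definitions, so I do not expect any real obstacle — the statement simply records how the outer-sampling cost $\tau$ inflates the per-outer-sample cost. The plan is to expand $\tilde{\mathcal{C}}_{\tau}(\theta)$ via (\ref{eq:tilde_c_tau}) and (\ref{eq:kappa_tau}), insert the cost formula (\ref{eq:computational_cost_YK}) with the normalization $\bar{\kappa}=1$, split off the $\tau$-dependent contribution using the constraint $\sum_{r=1}^{R} q_r = 1$ that is part of the definition of $\Theta$, and then divide by $\tilde{\mathcal{C}}_{0}(\theta)$.

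Concretely, for $\theta = (J, q, K, R) \in \Theta$, recalling $K_r = \lceil K \rceil 2^{r-1}$ and $\gamma_{\tau}(K_r) = \tau + K_r$, I would write
\begin{equation*}
    \tilde{\mathcal{C}}_{\tau}(\theta) = J \sum_{r=1}^{R} q_r \gamma_{\tau}(K_r) = J \sum_{r=1}^{R} q_r \bigl( \tau + \lceil K \rceil 2^{r-1} \bigr) = J \Bigl( \tau \sum_{r=1}^{R} q_r + \lceil K \rceil \sum_{r=1}^{R} q_r 2^{r-1} \Bigr).
\end{equation*}
Using $\sum_{r=1}^{R} q_r = 1$ this becomes $\tilde{\mathcal{C}}_{\tau}(\theta) = J\bigl( \tau + \lceil K \rceil \sum_{r=1}^{R} q_r 2^{r-1} \bigr)$, and in particular $\tilde{\mathcal{C}}_{0}(\theta) = J \lceil K \rceil \sum_{r=1}^{R} q_r 2^{r-1}$, which is strictly positive since $J > 0$, $\lceil K \rceil \geq 1$, and every $q_r > 0$ for $\theta \in \Theta$.

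Finally I would take the quotient,
\begin{equation*}
    \frac{\tilde{\mathcal{C}}_{\tau}(\theta)}{\tilde{\mathcal{C}}_{0}(\theta)} = \frac{\tau + \lceil K \rceil \sum_{r=1}^{R} q_r 2^{r-1}}{\lceil K \rceil \sum_{r=1}^{R} q_r 2^{r-1}} = 1 + \frac{\tau}{\lceil K \rceil \sum_{r=1}^{R} q_r 2^{r-1}},
\end{equation*}
which is exactly the claimed identity. The only point deserving a line of care is the non-vanishing of the denominator, already noted above; apart from that there is no nontrivial step, and this proposition is really a bookkeeping lemma to be used in the subsequent proof of Theorem \ref{thm:complexity_any_tau}.
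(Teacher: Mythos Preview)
Your proof is correct and follows essentially the same route as the paper: expand $\tilde{\mathcal{C}}_{\tau}(\theta)$ via the definitions, use $\sum_{r=1}^{R} q_r = 1$ to split off the $\tau$-term, and divide by $\tilde{\mathcal{C}}_{0}(\theta)$. Your explicit remark that the denominator is strictly positive is a small extra bit of care not spelled out in the paper.
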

\begin{proof}
    Let $\tau \geq 0$ and $\theta \in \Theta$ be fixed. From (\ref{eq:computational_cost_YK}), for all $r \in\{1, \dots, R\}$
    \begin{equation}
        \gamma_{\tau}(K_{r}) = \tau + \lceil K \rceil 2^{r-1} = \tau + \gamma_0(K_{r}) \,.
    \end{equation}
    Then using (\ref{eq:kappa_tau}), (\ref{eq:tilde_c_tau}) and the above equation we get,
    \begin{equation*}
        \tilde{\mathcal{C}}_{\tau}(\theta) =  J \sum_{r=1}^{R} q_{r} \gamma_{\tau}(K_{r}) = \tau J \sum_{r=1}^{R} q_{r} + \sum_{r=1}^{R} q_{r} \lceil K \rceil 2^{r-1} = \tau J + \tilde{\mathcal{C}}_0(\theta) \,.
    \end{equation*}
    where we used $\sum_{r=1}^R q_{r} = 1$ in the last equality. Therefore,
    \begin{equation*}
        \frac{\tilde{\mathcal{C}}_{\tau}(\theta)}{\tilde{\mathcal{C}}_{0}(\theta)} = \frac{\tau J}{\tilde{C}_0(\theta)} + 1 = \frac{\tau}{\sum_{r=1}^{R} q_{r} \lceil K \rceil 2^{r-1}} + 1
    \end{equation*}
    proving the result.
\end{proof}

We observe from this proposition that the relative behavior of $\tilde{\mathcal{C}}_{\tau}(\theta_{\varepsilon})$ and $\tilde{\mathcal{C}}_{0}(\theta_{\varepsilon})$ depends on the asymptotic behavior of $\lceil K(\varepsilon)\rceil \sum_{r=1}^{R(\varepsilon)} q_{r}(\varepsilon) 2^{r-1}$, which we analyze in the following proposition. The proof of this result is based on technical lemmas presented in Appendix \ref{apx:technical_lemmas_sec}.
\begin{proposition}
\label{prop:cv_cond_equiv}
    For all $\varepsilon > 0$, let $\theta_{\varepsilon}$ be the MLMC parameters of Table \ref{tab:closed_mlmc_parameters_v2}. If $\beta > 1$ then there exists $d_{\infty} \in (0, +\infty)$ such that
    \begin{equation}
    \label{eq:cv_somme_q_2_b_1}
        K({\varepsilon}) \sum_{r = 1}^{R(\varepsilon)} q_{r}(\varepsilon) 2^{r-1} \underset{\varepsilon \to 0}{\longrightarrow} d_{\infty} \,,
    \end{equation}
    while if $\beta \leq 1$,
    \begin{equation}
    \label{eq:cv_somme_q_2_b_2}
        K(\varepsilon) \sum_{r = 1}^{R(\varepsilon)} q_{r}(\varepsilon) 2^{r-1} \underset{\varepsilon \to 0}{\longrightarrow} +\infty
    \end{equation}
\end{proposition}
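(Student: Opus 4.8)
\emph{Proof plan.} The plan is to substitute the closed forms of Table~\ref{tab:closed_mlmc_parameters_v2} into $K(\varepsilon)\sum_{r=1}^{R(\varepsilon)}q_r(\varepsilon)2^{r-1}$, reduce it to a ratio of two weight series indexed by $R(\varepsilon)$, and then read off its limit from the asymptotic behaviour of $R(\varepsilon)$, of $K(\varepsilon)$ and of those series --- the last three being exactly what the technical lemmas of Appendix~\ref{apx:technical_lemmas_sec} supply. \emph{Step 1 (algebraic reduction).} Write $\Sigma_1^{R}:=\sum_{r=2}^{R}|A_r^{R}|\,2^{(1-\beta)(r-1)/2}$ and $\Sigma_2^{R}:=\sum_{r=2}^{R}|A_r^{R}|\,2^{-(1+\beta)(r-1)/2}$. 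Inserting $q_1(\varepsilon)=\bar\sigma/\mu_\varepsilon$ and $q_r(\varepsilon)=\sqrt{V_1}\,|A_r^{R(\varepsilon)}|\big/\big(\underline K^{\beta/2}2^{(1+\beta)(r-1)/2}\mu_\varepsilon\big)$ into the normalisation $\sum_{r=1}^{R(\varepsilon)}q_r(\varepsilon)=1$ forces $\mu_\varepsilon=\bar\sigma+\sqrt{V_1}\,\underline K^{-\beta/2}\Sigma_2^{R(\varepsilon)}$, and carrying the same substitution into $\sum_r q_r(\varepsilon)2^{r-1}$ (using that $K(\varepsilon)=\underline K\lceil K^{+}(\varepsilon)/\underline K\rceil$ is an integer, so $\lceil K(\varepsilon)\rceil=K(\varepsilon)$) yields the exact identity
\begin{equation*}
K(\varepsilon)\sum_{r=1}^{R(\varepsilon)}q_r(\varepsilon)\,2^{r-1}=K(\varepsilon)\;\frac{\bar\sigma+\sqrt{V_1}\,\underline K^{-\beta/2}\,\Sigma_1^{R(\varepsilon)}}{\bar\sigma+\sqrt{V_1}\,\underline K^{-\beta/2}\,\Sigma_2^{R(\varepsilon)}}\,.
\end{equation*}

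\emph{Step 2 (inputs borrowed from Appendix~\ref{apx:technical_lemmas_sec}).} From Table~\ref{tab:closed_mlmc_parameters_v2} one reads that $R(\varepsilon)\to\infty$ as $\varepsilon\to0$ (of order $\log_2(1/\varepsilon)$ in the MLMC case, of order $\sqrt{\log_2(1/\varepsilon)}$ in the ML2R case); a technical lemma shows that $K^{+}(\varepsilon)\in(\underline K/2,\underline K]$ for $\varepsilon$ small enough, hence $K(\varepsilon)=\underline K$ eventually and in particular $K(\varepsilon)\to K_\infty:=\underline K\in(0,+\infty)$. Moreover the coefficients $A_r^{R}$ satisfy --- trivially when $A_r^{R}=1$ (standard MLMC), and via the construction recalled in Appendix~\ref{apx:details_weights} together with the lemmas of Appendix~\ref{apx:technical_lemmas_sec} for the cumulative weights $A_r^{R}=\sum_{s=r}^{R}W_s^{R}$ (ML2R) --- a uniform bound $\sup_R\sup_{2\le r\le R}|A_r^{R}|\le\bar W<\infty$, the pointwise limit $A_r^{R}\to1$ as $R\to\infty$ for every fixed $r$, and $|A_R^{R}|\ge1$. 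Since $R(\varepsilon)\to\infty$ it now suffices to determine the limits of $\Sigma_1^{R}$ and $\Sigma_2^{R}$ as $R\to\infty$.

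\emph{Step 3 (the two series, and conclusion).} Because $\beta>0$, $2^{-(1+\beta)/2}<1$, so $\Sigma_2^{R}$ is dominated uniformly in $R$ by the convergent series $\bar W\sum_{r\ge2}2^{-(1+\beta)(r-1)/2}$; dominated convergence with $A_r^{R}\to1$ gives $\Sigma_2^{R}\to\Sigma_2^{\infty}:=\sum_{r\ge2}2^{-(1+\beta)(r-1)/2}\in(0,+\infty)$, and the denominator in Step~1 therefore tends to $\bar\sigma+\sqrt{V_1}\underline K^{-\beta/2}\Sigma_2^{\infty}\in(0,+\infty)$ while staying $\ge\bar\sigma>0$ throughout. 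If $\beta>1$ then $2^{(1-\beta)/2}<1$ and the same argument gives $\Sigma_1^{R}\to\Sigma_1^{\infty}:=\sum_{r\ge2}2^{(1-\beta)(r-1)/2}\in(0,+\infty)$, so the identity of Step~1 converges to $d_\infty:=K_\infty\big(\bar\sigma+\sqrt{V_1}\underline K^{-\beta/2}\Sigma_1^{\infty}\big)\big/\big(\bar\sigma+\sqrt{V_1}\underline K^{-\beta/2}\Sigma_2^{\infty}\big)\in(0,+\infty)$, which is (\ref{eq:cv_somme_q_2_b_1}). If $\beta=1$ then $\Sigma_1^{R}=\sum_{r=2}^{R}|A_r^{R}|$ and, fixing $M$ and using $A_r^{R}\to1$, $\liminf_{R\to\infty}\Sigma_1^{R}\ge M-1$ for every $M$, so $\Sigma_1^{R}\to+\infty$; if $\beta<1$ then $\Sigma_1^{R}\ge|A_R^{R}|2^{(1-\beta)(R-1)/2}\ge2^{(1-\beta)(R-1)/2}\to+\infty$. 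In either of these cases the numerator in Step~1 tends to $+\infty$, the denominator is bounded away from $0$ and $+\infty$, and $K(\varepsilon)\ge\underline K\ge1$, whence $K(\varepsilon)\sum_{r=1}^{R(\varepsilon)}q_r(\varepsilon)2^{r-1}\to+\infty$, which is (\ref{eq:cv_somme_q_2_b_2}).

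\emph{Main obstacle.} Steps~1 and 3 are elementary --- one normalisation identity and comparisons of geometric series --- so the substance of the proof is concentrated in Step~2, i.e.\ in the lemmas of Appendix~\ref{apx:technical_lemmas_sec}: locating $K^{+}(\varepsilon)$ (hence establishing the eventual constancy of $K(\varepsilon)$) when $R(\varepsilon)$ is itself defined through a square root, and proving for the ML2R estimator the uniform bound on, and the pointwise convergence to $1$ of, the cumulative weights $A_r^{R}=\sum_{s=r}^{R}W_s^{R}$, together with the lower bound $|A_R^{R}|=|W_R^{R}|\ge1$ used in the $\beta<1$ case.
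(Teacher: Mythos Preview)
Your proof is correct and follows essentially the same route as the paper's: use $K(\varepsilon)=\underline K$ eventually, reduce the expression via the closed forms of $q_r(\varepsilon)$ to a finite positive factor (namely $1/\mu_\varepsilon$, your denominator) times $\bar\sigma+\sqrt{V_1}\,\underline K^{-\beta/2}\sum_{r=2}^{R(\varepsilon)}|A_r^{R(\varepsilon)}|2^{(1-\beta)(r-1)/2}$, and then analyse that last series according to the sign of $1-\beta$ using the weight lemmas of Appendix~\ref{apx:technical_lemmas_sec}. Your arguments in the divergent cases are if anything slightly tidier than the paper's --- you use the exact bound $|A_R^R|\ge1$ for $\beta<1$ (the paper's claim $|A_R^R|\to1$ is not quite accurate, the limit being $1/\prod_{k\ge1}(1-2^{-\alpha k})>1$), and a direct truncation for $\beta=1$ instead of the $\phi(R)=R-\sqrt R$ device; one notational slip to fix is that in the ML2R case $A_r^R=W_r^R=\sum_{s=r}^{R}w_s$ (cumulative sums of the lowercase $w_s$), not $\sum_{s=r}^{R}W_s^R$.
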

\begin{proof}
    From Lemma \ref{lemma:K_underline} we can consider $\varepsilon > 0$ to be sufficiently small such that $K({\varepsilon}) = \underline{K}$.
    Then from the definition of $q(\varepsilon)$,
    \begin{equation*}
    q_{r}(\varepsilon) =
        \begin{cases}
            \frac{\bar{\sigma}}{\sqrt{\underline{K}} \mu_{\varepsilon}}  & r = 1\\
             \frac{\sqrt{V_1} |A_{r}^{R(\varepsilon)}|}{ \underline{K}^{\frac{1 +\beta}{2}}2^{\frac{(1+\beta)(r-1)}{2}}\mu_{\varepsilon}} & r \in \{2, \dots, R(\varepsilon)\}
        \end{cases}
    \end{equation*}
    Therefore,
    \begin{equation*}
        K(\varepsilon) \sum_{r = 1}^{R(\varepsilon)} q_{r}(\varepsilon) 2^{r-1} = \frac{1}{\mu_{\varepsilon}} \left[ \frac{\bar{\sigma}}{\sqrt{\underline{K}}} + \frac{\sqrt{V_1}}{\underline{K}^{\frac{1 + \beta}{2}}} \sum_{r = 2}^{R(\varepsilon)} |A_r^{R(\varepsilon)}| 2^{\frac{r-1}{2}(1 - \beta)} \right] \,.
    \end{equation*}
     Since $R(\varepsilon) \to +\infty$ as $\varepsilon \to 0$ an application of Lemma \ref{lemma:mu_cv_q} yields that $\mu_{\varepsilon}$ converges to a limit $\mu_{\infty} \in (0, +\infty)$. Therefore it suffice to study the convergence of the term 
     \begin{equation*}
         \sum_{r = 2}^{R(\varepsilon)} |A_r^{R(\varepsilon)}| 2^{\frac{r-1}{2}(1 - \beta)} \,.
     \end{equation*}
         
    Assuming $\beta > 1$, then applying Lemma \ref{lemma:main_lemma_giorgi_weights} 2. with $\gamma = -\frac{1 - \beta}{2} > 0$ gives (\ref{eq:cv_somme_q_2_b_1}). Assuming $\beta \leq 1$, remark that,
    \begin{equation*}
        \sum_{r = 2}^{R(\varepsilon)} |A_r^{R(\varepsilon)}| 2^{\frac{r-1}{2}(1 - \beta)} \geq  |A_{R(\varepsilon)}^{R(\varepsilon)}| 2^{\frac{R(\varepsilon)-1}{2}(1 - \beta)} \,.
    \end{equation*}
    Since $R(\varepsilon) \to +\infty$ and by Lemma \ref{lemma:giorgi_lim_weights}, $|A_{R(\varepsilon)}^{R(\varepsilon)}| \to 1$ as $\varepsilon \to 0$, we have $|A_{R(\varepsilon)}^{R(\varepsilon)}| 2^{\frac{R(\varepsilon)-1}{2}(1 - \beta)} \underset{\varepsilon \to 0}{\longrightarrow} +\infty$ and get (\ref{eq:cv_somme_q_2_b_2}).
    
    Finally, assuming that $\beta = 1$ ; we let $\phi$ be the function from $\mathbb{N}$ to $\mathbb{N}$ such that for all $R \in \mathbb{N}$, $\phi(R) = R - \sqrt{R}$. On one hand from Lemma \ref{lemma:giorgi_lim_weights},
    \begin{equation}
    \label{eq:lim_inf_weights_mlmc}
        \underset{\varepsilon \to 0}{\lim} \underset{j \in \{1, \dots, \phi(R(\varepsilon))\}}{\inf} |A_{j}^{R(\varepsilon)}| > 0 \,.
    \end{equation}
    On the other hand,
    \begin{equation*}
        \sum_{r = 2}^{R(\varepsilon)} |A_r^{R(\varepsilon)}| 2^{\frac{r-1}{2}(1 - \frac{\beta}{2})} \geq \sum_{r = 2}^{\phi(R(\varepsilon))} |A_r^{R(\varepsilon)}| 2^{\frac{r-1}{2}(1 - \frac{\beta}{2})}
         \geq \underset{j \in \{1, \dots, \phi(R(\varepsilon))\}}{\inf} |A_{j}^{R(\varepsilon)}| (\phi(R(\varepsilon)) - 1) \,.
    \end{equation*}
    From (\ref{eq:lim_inf_weights_mlmc}) and $\phi(R_{\varepsilon}) \to +\infty$ as $\varepsilon \to 0$ we get
    \begin{equation*}
        \underset{j \in \{1, \dots, \phi(R(\varepsilon))\}}{\inf} |A_{j}^{R(\varepsilon)}| (\phi(R(\varepsilon)) - 1) \underset{\varepsilon \to 0}{\longrightarrow} +\infty \,,
    \end{equation*}
    which gives (\ref{eq:cv_somme_q_2_b_2}).
\end{proof}

We are now ready to prove Theorem \ref{thm:complexity_any_tau}.
\begin{proof}
    From Proposition \ref{prop:cond_equiv_cost_tau} and Proposition \ref{prop:cv_cond_equiv}, if $\beta > 1$,
    \begin{equation*}
        \frac{\tilde{\mathcal{C}}_{\tau}(\theta_{\varepsilon})}{\tilde{\mathcal{C}}_0(\theta_{\varepsilon})} \underset{\varepsilon \to 0}{\longrightarrow} 1 + \frac{\tau}{d_{\infty}}
    \end{equation*}
    whereas if $\beta \leq 1$,
    \begin{equation*}
        \frac{\tilde{\mathcal{C}}_{\tau}(\theta_{\varepsilon})}{\tilde{\mathcal{C}}_0(\theta_{\varepsilon})} \underset{\varepsilon \to 0}{\longrightarrow} 1 \,.
    \end{equation*}
    Therefore in both cases,
    \begin{equation*}
        \underset{\varepsilon \to 0}{\lim \sup} \; v(\varepsilon) \tilde{\mathcal{C}}_{\tau}(\theta_{\varepsilon}) =\mathcal{K}_{\beta} \; \underset{\varepsilon \to 0}{\lim \sup} \; v(\varepsilon) \tilde{\mathcal{C}}_0(\theta_{\varepsilon}) \,,
        \end{equation*}
        proving the claim.
\end{proof}
\subsection{Beyond closed form: Numerical parameter optimization for finite-Sample efficiency}
The optimized parameters presented in Table \ref{tab:closed_mlmc_parameters_v2} are derived using several asymptotic approximations. While these approximations are generally effective and yield robust parameters, they may become problematic in non-asymptotic regimes of precision $\varepsilon > 0$. This issue typically arise when $\tau$ is large or when $f$ is an indicator function. In such scenarios, $R(\varepsilon)$ may significantly overestimate the optimal number of levels, leading to suboptimal performances. For instance, for relatively large $\varepsilon$, often a standard nested MC (i.e $R = 1$) is more efficient than a MLMC with $R \geq 2$. However the parameters in Table \ref{tab:closed_mlmc_parameters_v2} always yield $R(\varepsilon) \geq 2$. 

In this section, we revisit the derivation of optimal parameters from Lemaire-Pagès \cite{lemairePages17}. Our goal is to generalize their approach to arbitrary $\tau \geq 0$ and to obtain more robust parameter choices for non-asymptotic regimes, including the option to select $R = 1$ when appropriate. The newly optimized parameters are presented in Table \ref{tab:num_optim_mlmc_parameters}. Among them $R^*(\varepsilon)$ and $K^*(\varepsilon)$ are obtained via a fast numerical optimization procedure, while $q^*(\varepsilon)$ and $J^*(\varepsilon)$ are given in closed-form.

The main result of this section is Theorem \ref{thm:new_optimization_resulting_cost}, which establishes that the new optimized parameters reduce the overall computational cost of the estimator while preserving its asymptotic MSE behavior. In particular, this ensures that the asymptotic guarantees provided by Theorem \ref{thm:complexity_ML2R} and Theorem \ref{thm:complexity_mlmc} remain valid.

In this section, we assume that we have access to accurate values for the structural constants $\alpha$, $c_1$, $\tilde{c} > \tilde{c}_{\infty}$ from assumption $(\mathrm{WE_{\alpha}})$ and to $\beta$, $V_1$, $\bar{\sigma}$ from assumption $(\mathrm{Var}_{\beta})$. 
Recall that the objective is to solve the following minimization problem
\begin{equation}
    \underset{\substack{\theta \in \Theta \\ \mathcal{M}(\theta) \leq \varepsilon^2}} {\min}\tilde{\mathcal{C}}_{\tau}(\theta)\,.
\end{equation}
While this problem cannot be solve directly, we build upon the approach of Lemaire-Pagès \cite{lemairePages17} for the derivation of an approximate solution. 

\subsubsection{Optimization of $J$}
We begin with the optimization of $J$ while other parameters are fixed.

\begin{definition}
    Let $\Pi$ be the parameter space defined as
    \begin{equation*}
        \Pi := \left \{(q, K, R) \in [0,1]^{\mathbb{N}} \times (0, +\infty) \times \mathbb{N} : \forall r \in \{1, \dots, R\}, \;q_r > 0, \; \sum_{r = 1}^R q_r = 1\right \}
    \end{equation*}
\end{definition}
We will consider that $\theta = (J, \pi) \in \Theta$ where $\pi \in \Pi$, which is a slight abuse of notation. Letting $\pi \in \Pi$ be fixed, the goal of this step is to find an approximate solution to
\begin{equation}
\label{eq:pb_min_tilde_c_for_J}
    \underset{ \substack{J\in (0, +\infty) \\ \mathcal{M}(J, \pi) \leq \varepsilon^2}}{\min} \; \tilde{\mathcal{C}}_{\tau}(J, \pi) \,.
\end{equation}
Since the function $(J, \pi) \mapsto \mathcal{M}(J, \pi)$ is not easily manageable, we begin by introducing an alternative function $\widetilde{\mathcal{M}}$ to serve as a proxy in optimizing the parameter $J$. Observe that the bias of an MLMC estimator depends solely on the parameters $K$ and $R$. Indeed from Definition \ref{def:general_mlmc_def} for all $\theta = (J, q, K, R) \in \Theta$, 
\begin{equation*}
    \mathbb{E}[\hat{I}_{\theta}] =  \mathbb{E}[Y_{K_1}] + \sum_{r = 2}^{R} A_r^{R} \mathbb{E}[\Delta Y_{K_r}]
\end{equation*}
Therefore, although $\mu$ is formally defined on the full parameter space $\Theta$, it depends only on parameters $(K, R)$. That is, there exists a function $\tilde{\mu} : (0, +\infty) \times \mathbb{N} \longrightarrow +\infty$ such that for all $\theta = (J, K, q, R) \in \Theta$, we have $\mu(\theta) = \tilde{\mu}(K, R)$. Slightly abusing the notation, we will repeatedly use $\mu(K,R)$ instead $\tilde{\mu}(K, R)$.

To get a more manageable expression of the variance $\mathcal{V}(\theta)$, we introduce the notion of unit variance of an MLMC estimator which will provide a fully tractable upper bound under $(\mathrm{Var}_{\beta})$.
\begin{definition}
    Let $v : \Pi \longrightarrow \mathbb{R}$ be the unit variance function of an MLMC estimator, defined as
    \begin{equation*}
        \forall \pi=(q, K, R) \in \Pi, \quad v(\pi) := \sum_{r=1}^{R} \frac{\sigma^2(r, K)}{q_r}
    \end{equation*}
\end{definition}

\begin{definition}
    Let $ \theta=(J, q, K, R) \in \Theta$, we define
    \begin{equation*}
    \bar{\sigma}^2(r, K) := 
    \begin{cases}
        \bar{\sigma}_{1}^2 & r = 1 \\
        \frac{(A^R_{r})^2 V_1}{K_r^{\beta}} & r \in \{2, \dots, R \}
    \end{cases}
    \end{equation*}
    Let $\bar{v} : \Pi \longrightarrow \mathbb{R}$ be defined by
    \begin{equation}
    \label{eq:def_v_bar}
        \forall \pi = (q, K, R) \in \Pi, \quad \bar{v}(\pi) := \sum_{r = 1}^R \frac{\bar{\sigma}^2(r, K)}{q_r} \,.
    \end{equation}
    Then under $(\mathrm{Var}_{\beta})$ we have $\bar{v} \geq v$.
\end{definition}

The following proposition justify that we can use $\frac{\bar{v}(\pi)}{J}$ as an appropriate proxy for the variance of an MLMC estimator.
\begin{proposition}
    For all $\theta = (J, \pi) \in \Theta$ we have \begin{equation}
        \mathcal{V}(\theta) \leq \frac{\bar{v}(\pi)}{J}
    \end{equation}
\end{proposition}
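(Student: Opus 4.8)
The plan is to chain together the already-established identities for the variance of an MLMC estimator with the elementary bound relating $J_r = \lceil J q_r \rceil$ to $J q_r$, and finally invoke the definition of $\bar v$ together with the inequality $\bar\sigma^2(r,K) \ge \sigma^2(r,K)$ guaranteed under $(\mathrm{Var}_\beta)$. More precisely, I would proceed as follows.

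First I would recall Proposition \ref{prop:variance_mlmc_formula}, which gives
\begin{equation*}
    \mathcal{V}(\theta) = \sum_{r=1}^{R} \frac{\sigma^2(r,K)}{J_r}.
\end{equation*}
Next I would use the trivial lower bound $J_r = \lceil J q_r \rceil \ge J q_r$, which holds for every $r$ since $J > 0$ and $q_r > 0$. Because each $\sigma^2(r,K) \ge 0$, this gives termwise
\begin{equation*}
    \frac{\sigma^2(r,K)}{J_r} \le \frac{\sigma^2(r,K)}{J q_r},
\end{equation*}
hence $\mathcal{V}(\theta) \le \frac{1}{J}\sum_{r=1}^R \frac{\sigma^2(r,K)}{q_r} = \frac{v(\pi)}{J}$. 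Finally, under $(\mathrm{Var}_\beta)$ we have $\bar v \ge v$ (this is exactly the last sentence of the definition of $\bar v$, which follows from $\mathrm{Var}[\Delta Y_{K_r}] \le V_1/K_r^\beta$ and $\mathrm{Var}[Y_{K_1}] \le \bar\sigma^2$, so that $\sigma^2(r,K) \le \bar\sigma^2(r,K)$ for each $r$), and therefore $\mathcal{V}(\theta) \le \frac{v(\pi)}{J} \le \frac{\bar v(\pi)}{J}$, which is the claim.

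I do not anticipate any real obstacle here: the proof is a short assembly of facts all of which are stated earlier in the excerpt. The only mild subtlety worth a sentence is making explicit that the termwise comparison is valid because every summand $\sigma^2(r,K)$ is nonnegative (so that shrinking the denominator from $J_r$ to $J q_r$ only increases each term), and that the passage from $v$ to $\bar v$ uses precisely the two inequalities in assumption $(\mathrm{Var}_\beta)$. A clean write-up would state these two steps in one or two displayed lines and conclude.
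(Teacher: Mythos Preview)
Your proof is correct and follows exactly the same route as the paper's own argument: apply Proposition~\ref{prop:variance_mlmc_formula}, use $\lceil J q_r \rceil \ge J q_r$ termwise to bound $\mathcal{V}(\theta)$ by $v(\pi)/J$, and then invoke $\bar v \ge v$ under $(\mathrm{Var}_\beta)$. The paper's proof is just the one-line displayed chain of these same inequalities.
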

\begin{proof}
    Let $\theta = (J, \pi) \in \Theta$. Using Proposition \ref{prop:variance_mlmc_formula},
    \begin{equation*}
        \mathcal{V}(\theta) = \sum_{r = 1}^{R} \frac{\sigma^2(r, K)}{J_r} = \sum_{r = 1}^{R} \frac{\sigma^2(r, K)}{\lceil J q_r \rceil} \leq \frac{1}{J} \sum_{r = 1}^{R} \frac{\sigma^2(r, K)}{q_r} = \frac{v(\pi)}{J} \leq \frac{\bar{v}(\pi)}{J} \,,
    \end{equation*}
    which proves the claim.
\end{proof}

Below we recall with our notations, the result from Lemaire-Pagès \cite{lemairePages17} giving an asymptotic bias expansion for a general MLMC estimator, under $(\mathrm{WE}_{\alpha})$ in the ML2R case and under $(\mathrm{WE}_{\alpha, 1})$ in the standard MLMC case.

\begin{proposition}[\cite{lemairePages17}, Propositon 3.4]
\label{prop:mlmc_bias_expansion}
    Let $R \geq 1$ and $K \in (0, +\infty)$. In the ML2R case, assume $(\mathrm{WE}_{\alpha})$ holds and that $\underset{R \in \mathbb{N}}{\sup} \;\underset{k \in \mathbb{N}}{\sup} \; |\eta_R(k)| < +\infty$. Then there exists a bounded function $\eta : \mathbb{R}^2 \longrightarrow \mathbb{R}$ such that
        \begin{equation}
        \label{eq:bias_ml2r}
        \mu(K, R) = \frac{(-1)^{R-1} c_R}{\lceil K \rceil^{\alpha R} 2^{\frac{\alpha R(R-1)}{2}}} \left(1 + \eta(\lceil K \rceil^{-1}, R) \right) 
        \end{equation}
    In the standard MLMC case, assume that $(WE_{\alpha})$ holds with $c_1 \neq 0$, then
    \begin{equation}
    \label{eq:bias_mlmc}
        \mu(K, R) =  \frac{c_1}{\lceil K \rceil^{\alpha} 2^{(R-1)\alpha}} \left(1 + \eta_1(K_R^{-1}) \right) 
    \end{equation}
\end{proposition}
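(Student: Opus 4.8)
The plan is to follow \cite{lemairePages17} and compute the bias $\mu(K,R)=I-\mathbb{E}[\hat I_\theta]$ directly from Definition~\ref{def:general_mlmc_def}, turning it into a single weighted combination of the weak errors $\mathbb{E}[Y_{K_r}]-\mathbb{E}[Y]$. First I would write $\mathbb{E}[\hat I_\theta]=\mathbb{E}[Y_{K_1}]+\sum_{r=2}^R A_r^R\bigl(\mathbb{E}[Y_{K_r}]-\mathbb{E}[Y_{K_{r-1}}]\bigr)$ and, by Abel summation, rewrite it as $\sum_{r=1}^R w_r^R\,\mathbb{E}[Y_{K_r}]$ with $w_R^R=A_R^R$, $w_r^R=A_r^R-A_{r+1}^R$ for $2\le r\le R-1$, and $w_1^R=1-A_2^R$; one checks at once that $\sum_r w_r^R=1$. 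For the standard MLMC ($A_r^R\equiv1$) this telescopes, so $\mathbb{E}[\hat I_\theta]=\mathbb{E}[Y_{K_R}]$. For ML2R, the construction of the weights $W_r^R$ recalled in Appendix~\ref{apx:details_weights} is precisely designed so that the resulting $w_r^R$ solve the Vandermonde / Richardson--Romberg system $\sum_{r=1}^R w_r^R x_r^{\,j}=\mathbbm{1}_{\{j=0\}}$ for $j=0,\dots,R-1$, where $x_r:=2^{-(r-1)\alpha}$.

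Next, since $\sum_r w_r^R=1$, I would write $\mu(K,R)=\sum_r w_r^R\bigl(\mathbb{E}[Y]-\mathbb{E}[Y_{K_r}]\bigr)$ and substitute $\mathbb{E}[Y_{K_r}]-\mathbb{E}[Y]=\sum_{j=1}^R c_j K_r^{-\alpha j}+\eta_R(K_r)K_r^{-\alpha R}$ with $K_r=\lceil K\rceil 2^{r-1}$. Grouping by powers of $\lceil K\rceil^{-\alpha}$, the coefficient of $c_j\lceil K\rceil^{-\alpha j}$ is $\sum_r w_r^R x_r^{\,j}$. In the ML2R case these vanish for $j=1,\dots,R-1$ by the previous step, leaving only the $j=R$ term, and the classical interpolation-error identity for the monic polynomial $x\mapsto x^R$ at the nodes $x_r$, evaluated at $0$, gives $\sum_r w_r^R x_r^{\,R}=(-1)^{R-1}\prod_{r=1}^R x_r=(-1)^{R-1}2^{-\alpha R(R-1)/2}$, which produces the announced leading factor (the overall sign being immaterial since $\mu$ enters $\mathcal{M}(\theta)$ squared). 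In the standard MLMC case, $(\mathrm{WE}_{\alpha,1})$ with $w_R^R=1$ gives at once $\mathbb{E}[Y]-\mathbb{E}[Y_{K_R}]=-(c_1+\eta_1(K_R))K_R^{-\alpha}$, i.e.\ the stated expression with the remainder absorbed into a renormalized bounded function $\eta$; here there is nothing more to do.

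The delicate point, and the step I expect to be the main obstacle, is to show that the remainder $\sum_r w_r^R\eta_R(K_r)K_r^{-\alpha R}$ equals the ML2R leading factor times a function $\eta(\lceil K\rceil^{-1},R)$ that is bounded \emph{uniformly} in $R$; this is where the hypotheses $(\mathrm{WE}_\alpha)$, $\sup_{R,k}|\eta_R(k)|<\infty$, and $\tilde c_\infty<\infty$ are all used. The uniform bound on $\eta_R$ alone does not suffice: one must exploit that $(\mathrm{WE}_\alpha)$ forces $\eta_R(N)=\bigl(c_{R+1}+\eta_{R+1}(N)\bigr)N^{-\alpha}$, so the extrapolation weights act on a quantity one order smaller, and then combine this with quantitative estimates on the Richardson--Romberg weights — essentially that $\sum_r|w_r^R|x_r^{\,R}$ is comparable, with constants independent of $R$, to $\prod_{r=1}^R x_r=2^{-\alpha R(R-1)/2}$, the extremal term being $r=1$ — and with the growth control $\tilde c_\infty<\infty$, which keeps ratios such as $c_{R+1}/c_R$ bounded. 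Assembling these yields $\eta(\lceil K\rceil^{-1},R)=O(\lceil K\rceil^{-\alpha})$ uniformly in $R$, hence bounded. Since the required weight estimates and this remainder analysis are exactly those of \cite{lemairePages17} (Proposition 3.4 and its accompanying lemmas), the cleanest route is to carry out the bias bookkeeping above in the present notation and invoke their lemmas for the uniform remainder bound.
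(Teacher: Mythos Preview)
The paper does not prove this proposition; it is cited verbatim from \cite{lemairePages17}, Proposition~3.4, and the only supporting material in the present paper is the weight construction in Appendix~\ref{apx:details_weights}, which recaps the Vandermonde system, the Abel transform, and the resulting leading-order bias formula (itself with a further citation to \cite{lemairePages17}, Proposition~2.5). Your sketch is an accurate reconstruction of that argument: the Abel summation from $A_r^R$ to $w_r^R$, the Vandermonde cancellation of orders $1,\dots,R-1$, the interpolation identity $\sum_r w_r^R x_r^{\,R}=(-1)^{R-1}\prod_r x_r$, and the telescoping in the standard MLMC case are exactly what the appendix invokes.

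One small caution on the remainder step: you appeal to $\tilde c_\infty<\infty$, but this is not among the proposition's hypotheses (it enters only in Theorem~\ref{thm:complexity_ML2R}). What the uniform boundedness of $\eta(\cdot,R)$ actually requires is that $|c_R|^{-1}\sum_r|w_r^R|\,2^{-\alpha R(r-1)}$ stay bounded in $R$; the weight estimate you mention gives $\sum_r|w_r^R|\,2^{-\alpha R(r-1)}\lesssim 2^{-\alpha R(R-1)/2}$ with a constant independent of $R$, but dividing by $|c_R|$ is safe only if the $c_R$ do not vanish or decay to zero --- a point handled in the cited reference but not by $\tilde c_\infty<\infty$ alone (which controls growth, not decay, of $c_R$). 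Your plan to defer to the lemmas of \cite{lemairePages17} for this step is the right call.
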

It is not possible to evaluate (\ref{eq:bias_ml2r}) and (\ref{eq:bias_mlmc}) exactly, as some key quantities are not directly accessible: $\eta_1$ in the MLMC case, and $\eta$ and $c_R$ in the ML2R case. To overcome this, we approximate the unknown terms by setting $\eta \equiv 0$, $ \eta_1 \equiv 0$, and replacing $c_R$ with its asymptotic upper bound $\tilde{c}^R$. These simplifications yield the following tractable formulas for the bias of an MLMC estimator.

\begin{definition}
    For all $\pi_0 = (K, R) \in \Pi_0$, let $\tilde{\mu} : \Pi_0 \longrightarrow \mathbb{R}$ be the tractable bias of a general MLMC estimator, defined as
    \begin{equation}
    \label{eq:proxy_bias_ml2r}
        \tilde{\mu}(K, R) = \frac{(-1)^{R-1} \tilde{c}^{R}}{\lceil K \rceil^{\alpha R} 2^{\frac{\alpha R(R-1)}{2}}}
    \end{equation}
    in the ML2R case and
    \begin{equation}
    \label{eq:mu_tilde_mlmc}
        \tilde{\mu}(K, R) =  \frac{c_1}{ \lceil K \rceil^{\alpha} 2^{(R-1)\alpha}}
    \end{equation}
    in the standard MLMC case.
\end{definition}

Combining the variance proxy and the bias proxy give us the proxy for the MSE of the estimator.
\begin{definition}
    We define the tractable proxy for the MSE of the estimator as the function $\widetilde{\mathcal{M}} : \Theta \to \mathbb{R}$ such that
    \begin{equation}
    \label{eq:def_tilde_Mse}
        \forall \pi = (J, q, K, R), \quad \widetilde{\mathcal{M}}(J, \pi) = \frac{\bar{v}(\pi)}{J} + \tilde{\mu}^2(K, R) \,.
    \end{equation}
\end{definition}

For fixed $\pi \in \Pi$, instead of (\ref{eq:pb_min_tilde_c_for_J}) we solve the more tractable problem,
\begin{equation}
\label{eq:approx_pb_min_tilde_c_for_J}
    \underset{\substack{J \in (0, +\infty) \\ \widetilde{\mathcal{M}}(J, \pi) \leq \varepsilon^2}}{\min} \; \tilde{\mathcal{C}}_{\tau}(\theta)
\end{equation}

\begin{proposition}
    Problem (\ref{eq:approx_pb_min_tilde_c_for_J}) admits a solution if and only if $\pi = (q, K, R) \in \Pi$ is such that $\tilde{\mu}(K,R) < \varepsilon$. In that case the solution is given by,
    \begin{equation*}
        J(\varepsilon, \pi) = \frac{\bar{v}(\pi)}{\varepsilon^2 - \tilde{\mu}^2(K, R)}
    \end{equation*}
\end{proposition}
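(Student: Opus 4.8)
The plan is to minimize the approximate cost $\tilde{\mathcal{C}}_\tau(J,\pi) = J\,\kappa_\tau(q,K,R)$ over $J \in (0,+\infty)$ subject to the constraint $\widetilde{\mathcal{M}}(J,\pi) \leq \varepsilon^2$, treating $\pi = (q,K,R)$ as fixed. First I would observe that, since $\kappa_\tau(q,K,R) > 0$ is a constant with respect to $J$, the objective $J \mapsto \tilde{\mathcal{C}}_\tau(J,\pi)$ is strictly increasing in $J$; therefore the minimum is attained at the smallest admissible value of $J$. So the problem reduces to finding the smallest $J > 0$ satisfying $\widetilde{\mathcal{M}}(J,\pi) \leq \varepsilon^2$.

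Next I would unpack the constraint using the definition \eqref{eq:def_tilde_Mse}: $\widetilde{\mathcal{M}}(J,\pi) = \frac{\bar{v}(\pi)}{J} + \tilde{\mu}^2(K,R) \leq \varepsilon^2$. Since $\bar{v}(\pi) > 0$ (it is a sum of strictly positive terms $\bar\sigma^2(r,K)/q_r$, as $q_r > 0$ on $\{1,\dots,R\}$), the map $J \mapsto \frac{\bar{v}(\pi)}{J} + \tilde{\mu}^2(K,R)$ is strictly decreasing on $(0,+\infty)$, tends to $+\infty$ as $J \to 0^+$, and tends to $\tilde{\mu}^2(K,R)$ as $J \to +\infty$. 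Hence the constraint set $\{J > 0 : \widetilde{\mathcal{M}}(J,\pi) \leq \varepsilon^2\}$ is nonempty if and only if $\tilde{\mu}^2(K,R) < \varepsilon^2$, i.e. $|\tilde{\mu}(K,R)| < \varepsilon$; and when this holds the set is a half-line $[J(\varepsilon,\pi),+\infty)$ whose left endpoint is obtained by solving $\frac{\bar{v}(\pi)}{J} + \tilde{\mu}^2(K,R) = \varepsilon^2$, which gives $J(\varepsilon,\pi) = \frac{\bar{v}(\pi)}{\varepsilon^2 - \tilde{\mu}^2(K,R)}$. (Here I should note that the paper's bias proxies \eqref{eq:proxy_bias_ml2r} and \eqref{eq:mu_tilde_mlmc} have a sign $(-1)^{R-1}$ or the sign of $c_1$, but only $\tilde\mu^2$ enters, so writing $\tilde{\mu}^2(K,R)$ is unambiguous; if one wants $\tilde\mu(K,R)<\varepsilon$ as stated one reads $\tilde\mu$ with the understanding that the relevant quantity is really $|\tilde\mu|$, which is how the statement should be interpreted.)

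Combining the two observations: if $|\tilde{\mu}(K,R)| \geq \varepsilon$ the feasible set is empty and \eqref{eq:approx_pb_min_tilde_c_for_J} has no solution; if $|\tilde{\mu}(K,R)| < \varepsilon$ the feasible set is $[J(\varepsilon,\pi),+\infty)$, and because $\tilde{\mathcal{C}}_\tau(\cdot,\pi)$ is strictly increasing the minimum is attained at the endpoint $J(\varepsilon,\pi) = \frac{\bar{v}(\pi)}{\varepsilon^2 - \tilde{\mu}^2(K,R)}$, which is exactly the claimed formula.

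There is no real obstacle here — the argument is entirely elementary monotonicity in the single variable $J$. The only point requiring a little care is the strict positivity of $\bar{v}(\pi)$, which guarantees that $J \mapsto \widetilde{\mathcal{M}}(J,\pi)$ is genuinely decreasing (not constant) so that the infimum of the feasible set is attained rather than merely approached; this follows from $\bar\sigma_1^2 > 0$ and $q_r > 0$. A secondary bookkeeping point is to make sure one is consistent about whether the constraint is $\tilde\mu < \varepsilon$ or $|\tilde\mu| < \varepsilon$; since the estimator's squared bias is $\tilde\mu^2$, the substantively correct condition is $\tilde\mu^2 < \varepsilon^2$, and I would phrase the proof in those terms and then translate to the statement's notation.
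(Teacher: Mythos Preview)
Your proposal is correct and follows essentially the same approach as the paper's proof: both argue that $J\mapsto\tilde{\mathcal{C}}_\tau(J,\pi)$ is increasing while $J\mapsto\widetilde{\mathcal{M}}(J,\pi)$ is decreasing, so the optimum saturates the constraint, and then solve $\widetilde{\mathcal{M}}(J,\pi)=\varepsilon^2$ for $J$. Your version is slightly more detailed (you justify strict positivity of $\bar v(\pi)$ and discuss the $|\tilde\mu|$ versus $\tilde\mu$ issue), but the core argument is identical.
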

\begin{proof}
For any $\pi \in \Pi$, from (\ref{eq:tilde_c_tau}), $J \mapsto \tilde{\mathcal{C}}_{\tau}(J, \pi)$ is increasing. While from (\ref{eq:def_tilde_Mse}), $J \mapsto \widetilde{\mathcal{M}}(J, \pi)$ is decreasing. Therefore (\ref{eq:approx_pb_min_tilde_c_for_J}) is solved for the $J$ saturating the constraint. Clearly a solution exists if and only if $\pi = (q, K, R)\in \Pi$ is such that $|\mu(K, R)|^2 < \varepsilon^2$. Solving in $J \in (0, +\infty)$ the equation
$\widetilde{\mathcal{M}}(J, \pi) = \varepsilon^2$ give the result.
\end{proof}

It then natural to find the optimized parameter $\pi$ as a solution to,
\begin{equation}
\label{eq:problem_c_tilde_post_J}
    \underset{ \substack{\pi =(q,K,R) \in \Pi \\ |\tilde{\mu}(K, R)| < \varepsilon}}{\min} \tilde{\mathcal{C}}_{\tau}(J(\varepsilon, \pi), \pi)
\end{equation}

\subsubsection{Optimization of $q$}

To solve (\ref{eq:problem_c_tilde_post_J}), building upon Lemaire-Pagès \cite{lemairePages17}, we introduce the notion of effort of an MLMC estimator alongside with its tractable upper bound under $(\mathrm{Var}_{\beta})$.
\begin{definition}
    For all $\pi \in \Pi$ we define $\phi_{\tau}(\pi)$ the effort of an MLMC estimator defined by $\phi_{\tau}(\pi) := v(\pi)\kappa_{\tau}(\pi)$
    and $\bar{\phi}_{\tau}(\pi)$ its upper bound under $(\mathrm{Var}_{\beta})$ defined by $\bar{\phi}_{\tau}(\pi) := \bar{v}(\pi)\kappa_{\tau}(\pi)$.
\end{definition}

Noticing that, for all $\varepsilon > 0$, $\pi \in \Pi$,
\begin{equation}
\label{eq:prop_J_cost}
    \tilde{\mathcal{C}}_{\tau}(J(\varepsilon, \pi), \pi) = J(\varepsilon, \pi) \kappa_{\tau}(\pi) = \frac{\overline{v}(\pi) \kappa_{\tau}(\pi)}{\varepsilon^2 - \tilde{\mu}^2(K, R)} = \frac{\overline{\phi}_{\tau}(\pi)}{\varepsilon^2 - \tilde{\mu}^2(K, R)}\,,
\end{equation}
therefore problem (\ref{eq:problem_c_tilde_post_J}) becomes
\begin{equation*}
    \underset{ \substack{\pi = (q,K,R) \in \Pi \\ |\tilde{\mu}(K, R)| < \varepsilon}}{\min} \frac{\overline{\phi}_{\tau}(\pi)}{\varepsilon^2 - \tilde{\mu}^2(K, R)}
\end{equation*}

Notice how the parameter $q$ only impact the numerator of the value function and not the denominator nor the constraint. Consequently, the optimal $q$ as a function of $K$ and $R$ is the solution of the problem
\begin{equation}
\label{eq:min_problem_q}
    \underset{\substack{q \in [0, 1]^{\mathbb{N}} \\ \forall r \in \{1, \dots, R\} \, q_r > 0\\
    \sum_{r = 1}^R q_r = 1}}{\min} \; \overline{\phi}_{\tau}(q, K, R) \,.
\end{equation}

For convenience, we denote $\Pi_0 = (0, +\infty) \times \mathbb{N}$ the space of MLMC parameters other than $J$ and $q$. Following Lemaire-Pagès \cite{lemairePages17}, to optimize $q$ we rely on their following Lemma.

\begin{lemma}[\cite{lemairePages17}, Lemma 3.5]
\label{lem:cs_optimization}
    Let $R \in \mathbb{N}$ and for all $j \in \{1, \dots, R\}$, let $a_j > 0$, $b_j > 0$ and $q_j > 0$ such that $\sum_{r = 1}^R q_r = 1$, then
    \begin{equation*}
        \left( \sum_{j = 1}^R \frac{a_j}{q_j} \right) \left( \sum_{j = 1}^R b_j q_j \right) \geq \left( \sum_{j = 1}^R \sqrt{a_j b_j} \right)^2
    \end{equation*}
    and equality holds if and only if $q_j = \frac{\sqrt{a_j b_j^{-1}}}{\mu}$ where $\mu = \sum_{r = 1}^R \sqrt{a_j b_j^{-1}}$
\end{lemma}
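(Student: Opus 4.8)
The statement is a weighted Cauchy--Schwarz inequality, so the natural approach is to exhibit the right factorization and invoke Cauchy--Schwarz directly. The plan is to write, for each $j \in \{1, \dots, R\}$,
\[
\sqrt{a_j b_j} \;=\; \sqrt{\frac{a_j}{q_j}} \cdot \sqrt{b_j q_j},
\]
which is legitimate since $a_j, b_j, q_j > 0$. Summing over $j$ and applying the Cauchy--Schwarz inequality to the two sequences $\bigl(\sqrt{a_j/q_j}\bigr)_{j}$ and $\bigl(\sqrt{b_j q_j}\bigr)_{j}$ in $\mathbb{R}^R$ gives
\[
\sum_{j=1}^R \sqrt{a_j b_j} \;\le\; \left(\sum_{j=1}^R \frac{a_j}{q_j}\right)^{1/2}\left(\sum_{j=1}^R b_j q_j\right)^{1/2}.
\]
Squaring both sides yields the claimed inequality. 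Note that the normalization constraint $\sum_{r=1}^R q_r = 1$ plays no role in this part of the argument.

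For the equality case, I would recall that equality in Cauchy--Schwarz holds if and only if the two sequences are positively proportional, i.e.\ there exists $\lambda > 0$ with $\sqrt{a_j/q_j} = \lambda \sqrt{b_j q_j}$ for every $j$ (the degenerate case $\lambda = 0$ is excluded since all $a_j > 0$). Rearranging gives $q_j^2 = a_j/(\lambda^2 b_j)$, hence $q_j = \lambda^{-1}\sqrt{a_j b_j^{-1}}$ for all $j$. Now the constraint $\sum_{r=1}^R q_r = 1$ enters: summing the last identity forces $\lambda^{-1}\sum_{r=1}^R \sqrt{a_r b_r^{-1}} = 1$, i.e.\ $\lambda = \sum_{r=1}^R \sqrt{a_r b_r^{-1}} =: \mu$. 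Therefore equality holds precisely when $q_j = \sqrt{a_j b_j^{-1}}/\mu$, and conversely one checks by direct substitution that this choice does achieve equality.

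There is essentially no serious obstacle here: the only point requiring a little care is the equality-case bookkeeping, namely making sure that the proportionality constant is strictly positive (guaranteed by $a_j>0$) and that it is uniquely determined as $\mu$ by imposing $\sum_r q_r = 1$. The inequality itself is an immediate one-line consequence of Cauchy--Schwarz once the factorization above is written down.
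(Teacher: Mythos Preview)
Your proof is correct: the inequality is precisely Cauchy--Schwarz applied to the vectors $\bigl(\sqrt{a_j/q_j}\bigr)_j$ and $\bigl(\sqrt{b_j q_j}\bigr)_j$, and your treatment of the equality case via the proportionality condition combined with the normalization $\sum_r q_r = 1$ is complete. The paper itself does not give a proof of this lemma---it simply cites \cite{lemairePages17}, Lemma~3.5---but your argument is the standard one and is exactly what one would expect in the original reference.
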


\begin{proposition}
\label{prop:optim_form_for_q}
    Let $\pi_0 = (K, R) \in \Pi_0$, then (\ref{eq:min_problem_q}) admits a unique solution $(q_r(\pi_0))_{r \in \{1, \dots, R\}}$ defined by
    \begin{equation}
    \label{eq:optim_q}
        \forall r \in \{1, \dots, R\}, \quad q_r(\pi_0) = \frac{\bar{\sigma}(r, K)}{\sqrt{\gamma_{\tau}(K_r)}\mu_{\pi_0}}
    \end{equation}
    where 
    \begin{equation*}
        \mu_{\pi_0} = \sum_{r = 1}^{R} \frac{\bar{\sigma}(r, K)}{\sqrt{\gamma_{\tau}(K_r)}}
    \end{equation*}
    is a normalizing constant such that $\sum_{r = 1}^R q_r(\pi_0) = 1$.
    Letting $\bar{\phi}^*_{\tau} : \Pi_0 \longrightarrow \mathbb{R}$ be defined as
    \begin{equation*}
        \forall \pi_0 \in \Pi_0, \quad \bar{\phi}^*_{\tau}(\pi_0) = \left(\sum_{r = 1}^R \bar{\sigma}(r, K) \sqrt{\gamma_{\tau}(K_r)} \right)^2
    \end{equation*}
    then for all $\pi_0 \in \Pi_0$, 
    \begin{equation}
    \label{eq:prop:q}
        \bar{\phi}_{\tau}(q(\pi_0), \pi_0) = \bar{\phi}^*_{\tau}(\pi_0)
    \end{equation}
\end{proposition}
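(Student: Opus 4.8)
The plan is to apply Lemma \ref{lem:cs_optimization} directly with a judicious choice of the sequences $(a_j)$ and $(b_j)$. First I would fix $\pi_0 = (K, R) \in \Pi_0$ and recall that, by definition, $\bar{\phi}_\tau(q, K, R) = \bar{v}(q, K, R)\, \kappa_\tau(q, K, R) = \left(\sum_{r=1}^R \frac{\bar{\sigma}^2(r,K)}{q_r}\right)\left(\sum_{r=1}^R q_r \gamma_\tau(K_r)\right)$, using \eqref{eq:def_v_bar} and \eqref{eq:kappa_tau}. This is exactly the product form appearing in Lemma \ref{lem:cs_optimization} once we set $a_j := \bar{\sigma}^2(j, K)$ and $b_j := \gamma_\tau(K_j)$; both are strictly positive since $\bar{\sigma}_1 > 0$, the weights $A^R_r$ entering $\bar{\sigma}^2(r,K)$ for $r \geq 2$ are nonzero (a property of the ML2R/MLMC weights recalled earlier), and $\gamma_\tau(K_r) = \tau + K_r > 0$.

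With this identification, Lemma \ref{lem:cs_optimization} immediately gives the lower bound $\bar{\phi}_\tau(q, K, R) \geq \left(\sum_{r=1}^R \sqrt{a_r b_r}\right)^2 = \left(\sum_{r=1}^R \bar{\sigma}(r,K)\sqrt{\gamma_\tau(K_r)}\right)^2 = \bar{\phi}^*_\tau(\pi_0)$, valid for every admissible $q$, and the equality case of the lemma characterizes the unique minimizer as $q_r = \frac{\sqrt{a_r b_r^{-1}}}{\mu_{\pi_0}} = \frac{\bar{\sigma}(r,K)}{\sqrt{\gamma_\tau(K_r)}\,\mu_{\pi_0}}$ with normalizing constant $\mu_{\pi_0} = \sum_{r=1}^R \sqrt{a_r b_r^{-1}} = \sum_{r=1}^R \frac{\bar{\sigma}(r,K)}{\sqrt{\gamma_\tau(K_r)}}$. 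This is precisely \eqref{eq:optim_q}. Substituting this minimizer back yields $\bar{\phi}_\tau(q(\pi_0), \pi_0) = \bar{\phi}^*_\tau(\pi_0)$, which is \eqref{eq:prop:q}.

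There is essentially no serious obstacle here — the proposition is a packaging of Lemma \ref{lem:cs_optimization}. The only points requiring a sentence of care are: (i) checking that all the coefficients $\bar{\sigma}^2(r,K)$ and $\gamma_\tau(K_r)$ are strictly positive so that the lemma applies (in particular noting that the case $\tau = 0$ is still fine since $K_r \geq 1$), which guarantees both the existence and the \emph{uniqueness} of the minimizer over the open constraint set; and (ii) observing that the optimization in \eqref{eq:min_problem_q} is genuinely only over $q$ with $K$ and $R$ held fixed, so that the factor $\gamma_\tau(K_r)$ is a constant from the point of view of the minimization. I would also remark that the resulting $q_r(\pi_0)$ automatically lie in $(0,1]$ and sum to $1$ by construction of $\mu_{\pi_0}$, so the candidate is indeed feasible in $\Pi$. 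With these remarks in place the proof is complete in a few lines.
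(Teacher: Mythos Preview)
Your proposal is correct and follows exactly the same approach as the paper: expand $\bar{\phi}_\tau(q,K,R)$ as the product $\bar{v}(q,K,R)\,\kappa_\tau(q,K,R)$ and apply Lemma~\ref{lem:cs_optimization} with $a_j = \bar{\sigma}^2(j,K)$ and $b_j = \gamma_\tau(K_j)$. Your additional remarks on strict positivity of the coefficients and feasibility of the resulting $q$ are more explicit than the paper's own proof, which simply asserts that a direct application of the lemma gives the result.
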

\begin{proof}
    Let $(K, R) \in \Pi_0$. Notice that for $q \in [0,1]^{\mathbb{N}}$ such that for all $r \in \{1, \dots, R\}$, $q_r > 0$ and $\sum_{r = 1}^R q_r = 1$,
    \begin{equation*}
        \bar{\phi}_{\tau}(q, K, R) = \bar{v}(q, K, R) \kappa_{\tau}(q, K, R) = \left( \sum_{r = 1}^R \frac{\bar{\sigma}^2(r, K)}{q_r} \right) \left( \sum_{r = 1}^R q_r \gamma_{\tau}(K_r) \right) \,.
    \end{equation*}
    Then a direct application of Lemma \ref{lem:cs_optimization} with $a_j = \bar{\sigma}^2(r, K)$, $j \in \{1, \dots, R\}$ and $b_j = \gamma_{\tau}(K_j)$, $j \in \{1, \dots, R\}$ gives the result.
\end{proof}

\begin{remark}
    Notice that when $\tau = 0$ we recover the optimal parameters from Table \ref{tab:closed_mlmc_parameters_v2}.
\end{remark}

The optimal parameter $\pi_0 = (K,R) \in \Pi_0$ solving (\ref{eq:problem_c_tilde_post_J}) are then solution to the problem
\begin{equation}
\label{eq:optim_pb_R_K_both}
    \underset{\substack{\pi_0 \in \Pi_0 \\ |\tilde{\mu}(\pi_0)| < \varepsilon}}{\min} \frac{\bar{\phi}_{\tau}^*(\pi_0)}{\varepsilon^2 - \tilde{\mu}^2(\pi_0)}
\end{equation}

\subsubsection{Optimization of $K$ and $R$}

This optimization problem can be solved efficiently with a numerical optimization procedure. To do so we proceed in two steps. First fixing $R \in \mathbb{N}$ we solve,
\begin{equation}
\label{eq:pb_min_K_optim}
    \min_{\substack{K \in (0, +\infty) \\ |\tilde{\mu}(K, R)| < \varepsilon}} \frac{\bar{\phi}_{\tau}^*(K, R)}{\varepsilon^2 - \tilde{\mu}^2(K, R)} \,.
\end{equation}
Note that for $K \in (0, +\infty)$ in the standard MLMC case
\begin{equation*}
    |\tilde{\mu}(K, R)| < \varepsilon \Longleftrightarrow \underline{K}(\varepsilon, R) := \left \lfloor \frac{|c_1|^{\frac{1}{\alpha}}}{\varepsilon^{\frac{1}{\alpha}}2^{R-1}}  \right \rfloor + 1\leq \lceil K \rceil
\end{equation*}
while in the ML2R case
\begin{equation*}
    |\tilde{\mu}(K, R)| < \varepsilon \Longleftrightarrow \underline{K}(\varepsilon, R):=\left \lfloor \frac{\tilde{c}^{\frac{1}{\alpha}}}{\varepsilon^{\frac{1}{\alpha R}} 2^{\frac{R - 1}{2}}} \right \rfloor + 1 \leq \lceil K \rceil
\end{equation*}
thus the constraint is explicit and straightforward to implement in practice. Furthermore for all $K \in (0, + \infty)$, in the standard MLMC case, the value function can be written as
\begin{equation*}
    \frac{\bar{\phi}_{\tau}^*(K, R)}{\varepsilon^2 - \tilde{\mu}^2(K, R)}  = \frac{\left(\bar{\sigma}_1 \sqrt{\tau + \lceil K \rceil} + \sqrt{V_1} \lceil K \rceil^{\frac{-\beta}{2}}  \sum_{r = 2}^R \sqrt{\tau + \lceil K \rceil 2^{r-1}}2^{\frac{-\beta(r-1)}{2}} \right)^2}{\varepsilon^2 - \frac{c_1^2}{\lceil K \rceil^{2\alpha} 4^{\alpha(R-1)}}}
\end{equation*}
and in the ML2R case
\begin{equation*}
    \frac{\bar{\phi}_{\tau}^*(K, R)}{\varepsilon^2 - \tilde{\mu}^2(K, R)}  = \frac{\left(\bar{\sigma}_1 \sqrt{\tau + \lceil K \rceil} + \sqrt{V_1} \lceil K \rceil^{\frac{-\beta}{2}}  \sum_{r = 2}^R |W^R_r|\sqrt{\tau + \lceil K \rceil 2^{r-1}}2^{\frac{-\beta(r-1)}{2}} \right)^2}{\varepsilon^2 - \frac{\tilde{c}^{2R}}{\lceil K \rceil^{2\alpha R} 2^{\alpha R (R-1)}}} \,.
\end{equation*}
The value function is therefore fully explicit and easy to implement in practice. Moreover it is constant on each intervals $(n, n+1]$ for integers $n$ satisfying the constraint. Thus, the search for the optimal $K$ can therefore be restricted to integers values that satisfy the constraint. Accordingly we rewrite (\ref{eq:pb_min_K_optim}) as
\begin{equation}
\label{eq:pb_min_K_optim_entiers}
    \min_{K \in \{\underline{K}(\varepsilon, R), \underline{K}(\varepsilon, R) + 1, \dots  \}} \frac{\bar{\phi}_{\tau}^*(K, R)}{\varepsilon^2 - \tilde{\mu}^2(K, R)} \,.
\end{equation}
Since objective function diverges when $K$ increases, (\ref{eq:pb_min_K_optim_entiers}) is guaranteed to have a solution. In practice this problem is efficiently solved by standard optimization algorithms, as the objective function is typically unimodal. 

Let $K(\varepsilon, R)$ denotes a solution of (\ref{eq:pb_min_K_optim}). We then solve
\begin{equation}
\label{eq:pb_min_R_optim}
    \min_{R \in \{R_-(\varepsilon), R_-(\varepsilon)+1, \dots, R(\varepsilon)\}} \frac{\bar{\phi}^*_{\tau}(K(\varepsilon, R), R)}{\varepsilon^2 - \tilde{\mu}^2(K(\varepsilon, R), R)}
\end{equation}
where $R(\varepsilon)$ is the optimized $R$ found in Table \ref{tab:closed_mlmc_parameters_v2} and $R_{-}(\varepsilon)$ is such that $R_{-}(\varepsilon) \to +\infty$ as $\varepsilon \to 0$ and $1 \leq R_{-}(\varepsilon) \leq R(\varepsilon)$. For instance $R_{-}(\varepsilon) = \lceil \log_{10}(R(\varepsilon)) \rceil$ satisfy the conditions. This lower bound ensures the asymptotic guarantees stated in Theorem \ref{thm:new_optimization_resulting_cost} for the parameters. Nevertheless, in practical computations, the search for the optimal R can be performed over the range $1$ to $R(\varepsilon)$. This optimization problem is straightforward to solve, as the objective function can be evaluated efficiently.
\begin{remark}
\label{remark:in_practice_coef_cr}
    In practice, for the ML2R case, using $\tilde{c}^R$ as a proxy for $c_R$ in (\ref{eq:proxy_bias_ml2r}) is appropriate only when $R$ is large. However, since the optimization problems (\ref{eq:pb_min_K_optim}) and (\ref{eq:pb_min_R_optim}) also consider small values of $R$, a more accurate approximation of for $c_R$ is given by $c_1 a^R$ for some $a > 1$. The coefficient $c_1$ can typically be estimated reliably in a pre-processing phase. Accordingly, all instance of $\tilde{c}$ should be replace by $c_1^{\frac{1}{R}} a$ in the relevant expressions. Numerical experiments suggests that when dealing with indicator functions, choosing $a = 2$ or $3$ yields effective results. Note that in order to guarantee the satisfaction of the MSE constraint is satisfied, it is preferable to overestimate $c_R$ rather than to underestimate it. In our numerical experiments of Section \ref{sec:num_appli}, we will adopt this proxy with $a = 2$.
\end{remark}

\subsubsection{Complexity analysis for the new optimized parameters}
\label{sec:complexity_analysis}
In this section we formalize the new optimized parameters and prove  Theorem \ref{thm:new_optimization_resulting_cost}. Given a target precision $\varepsilon > 0$, we denote by $R^*(\varepsilon)$ the solution to the optimization problem~(\ref{eq:pb_min_R_optim}). Based on this value, we define $K^*(\varepsilon) := K^*(\varepsilon, R^*(\varepsilon))$, and we collect these into the vector $\pi^*_0(\varepsilon) := (K^*(\varepsilon), R^*(\varepsilon))$. Next, we define $q^*(\varepsilon) := q(\varepsilon, \pi^*_0(\varepsilon))$, and aggregate this with the previous parameters as $\pi^*(\varepsilon) := (q^*(\varepsilon), \pi_0(\varepsilon))$. Finally let $J^*(\varepsilon) := J(\varepsilon, \pi^*(\varepsilon))$. We summarize all the optimized parameters for the target precision $\varepsilon$ by the vector $\theta^*_{\varepsilon} := (J^*(\varepsilon), q^*(\varepsilon), K^*(\varepsilon), R^*(\varepsilon))$. For convenience, all these definitions are summarized in Table~\ref{tab:num_optim_mlmc_parameters}.

\begin{table}[H]
    \centering
    \renewcommand{\arraystretch}{3}
    \begin{tabular}{c | c | c}
        Parameter & Weighted (ML2R) & Classical (MLMC)\\
        \hline
        $R^*(\varepsilon)$ & \multicolumn{2}{|c}{$    \underset{{R \in \{R_-(\varepsilon), \dots, R(\varepsilon)\}} }{\arg\min} \; \frac{\bar{\phi}^*_{\tau}(K(\varepsilon, R), R)}{\varepsilon^2 - \tilde{\mu}^2(K(\varepsilon, R), R)}$} \\
        \hline
        $\underline{K}^*(\varepsilon)$& $\quad \quad \quad \quad \left \lfloor \frac{\tilde{c}^{\frac{1}{\alpha}}}{\varepsilon^{\frac{1}{\alpha R^*(\varepsilon)}} 2^{\frac{R^*(\varepsilon) - 1}{2}}} \right \rfloor + 1 \quad \quad \quad \quad$ & $ \left \lfloor \frac{|c_1|^{\frac{1}{\alpha}}}{\varepsilon^{\frac{1}{\alpha}}2^{R^*(\varepsilon)-1}}  \right \rfloor + 1$\\
        \hline
        $K^*(\varepsilon)$ & \multicolumn{2}{|c}{$    \underset{K \in \{\underline{K}^*(\varepsilon), \underline{K}^*(\varepsilon) + 1, \dots\}}{\arg\min} \frac{\bar{\phi}_{\tau}^*(K, R^*(\varepsilon))}{\varepsilon^2 - \tilde{\mu}^2(K, R^*(\varepsilon))}$} \\
        \hline
        $q^*(\varepsilon)$ & \multicolumn{2}{|c}{$q^*_1(\varepsilon) = \frac{\bar{\sigma}_1}{\sqrt{\tau + \lceil K^*(\varepsilon) \rceil}\mu^*_{\varepsilon}}\quad $ $\forall r = 2, \dots, R(\varepsilon) \quad q^*_r(\varepsilon) = \frac{\sqrt{V_1}|A_{r}^{R^*(\varepsilon)}|}{ \left \lceil K^*(\varepsilon) \right \rceil^{\frac{\beta }{2}}2^{\frac{\beta(r-1)}{2}} \sqrt{\tau + \left \lceil K^*(\varepsilon) \right \rceil 2^{r-1}}\mu^*_{\varepsilon}}$} \\
         &  \multicolumn{2}{|c}{with $\mu^*_{\varepsilon}$ such that $\sum_{r = 1}^{R^*(\varepsilon)} q^*_r(\varepsilon) = 1$} \\
         \hline
         $J(\varepsilon)$ & \multicolumn{2}{|c}{$\frac{\bar{v}(q^*(\varepsilon), K^*(\varepsilon), R^*(\varepsilon))}{\varepsilon^2 - \tilde{\mu}^2(K^*(\varepsilon), R^*(\varepsilon))}$}  \\
    \end{tabular}
    \vspace{0.5cm}
    \caption{New optimized parameters for the standard and weighted Multi-level estimators. Here $\tilde{c}$ is a constant such that $\tilde{c} > \tilde{c}_{\infty}$ in $(WE_{\alpha})$.}
    \label{tab:num_optim_mlmc_parameters}
\end{table}

Our objective is now to prove Theorem \ref{thm:new_optimization_resulting_cost}, which demonstrate a reduction in the computational cost when using $\theta^*_{\varepsilon}$ from Table \ref{tab:num_optim_mlmc_parameters} in place of $\theta_{\varepsilon}$ from Table \ref{tab:closed_mlmc_parameters_v2}, while preserving the same asymptotic MSE behavior. The proof of the theorem is based on technical Lemmas presented in Appendix \ref{apx:technical_lemmas_for_complexity_thm}.

\begin{theorem}
\label{thm:new_optimization_resulting_cost}
    Let $\varepsilon > 0$, $\tau \geq 0$, then $\tilde{\mathcal{C}}_{\tau}(\theta^*_{\varepsilon}) \leq \tilde{\mathcal{C}}_{\tau}(\theta_{\varepsilon})$. Assume that $(\mathrm{Var}_{\beta})$ holds. In the standard MLMC case assume furthermore that $(\mathrm{WE}_{\alpha, 1})$ holds while in the case of the ML2R assume that $(\mathrm{WE}_{\alpha})$ holds with $
    \underset{R \in \mathbb{N}}{\sup} \;\underset{k \in \mathbb{N}}{\sup} \; |\eta_R(k)| <+\infty$ instead. Then,
    \begin{equation*}
        \underset{\varepsilon \to 0}{\lim \sup} \; \varepsilon^{-2} \mathcal{M}(\theta^*_{\varepsilon}) \leq 1 \,.
    \end{equation*}
\end{theorem}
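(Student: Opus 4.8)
The plan is to prove the two assertions in turn: the pointwise cost bound $\tilde{\mathcal{C}}_{\tau}(\theta^*_{\varepsilon}) \le \tilde{\mathcal{C}}_{\tau}(\theta_{\varepsilon})$, and then the asymptotic control $\limsup_{\varepsilon\to0}\varepsilon^{-2}\mathcal{M}(\theta^*_{\varepsilon})\le 1$.

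For the cost bound, the key observation is that $\theta^*_{\varepsilon}$ is, by construction, an exact minimizer of $\tilde{\mathcal{C}}_{\tau}$ over the feasible set $\mathcal{F}_{\varepsilon} := \{\theta \in \Theta : \widetilde{\mathcal{M}}(\theta) \le \varepsilon^2,\ R \in \{R_-(\varepsilon),\dots,R(\varepsilon)\}\}$. Indeed, the three nested sub-optimizations defining $\theta^*_{\varepsilon}$ decouple and are each solved exactly: for fixed $(q,K,R)$ the smallest admissible $J$ is $J(\varepsilon,\pi)$ and produces cost $\bar{\phi}_{\tau}(\pi)/(\varepsilon^2-\tilde{\mu}^2)$ by~(\ref{eq:prop_J_cost}); Proposition~\ref{prop:optim_form_for_q} minimizes over $q$, leaving $\bar{\phi}^*_{\tau}(\pi_0)/(\varepsilon^2-\tilde{\mu}^2)$; and (\ref{eq:pb_min_K_optim_entiers}) and (\ref{eq:pb_min_R_optim}) minimize what remains over $K$ and over $R$ in the prescribed range. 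It therefore suffices to show $\theta_{\varepsilon}\in\mathcal{F}_{\varepsilon}$. That $R(\varepsilon)$ lies in the range is immediate, being its upper endpoint; and $\widetilde{\mathcal{M}}(\theta_{\varepsilon})\le\varepsilon^2$ is a short computation: substituting the explicit value of $K^{+}(\varepsilon)$ into (\ref{eq:proxy_bias_ml2r}) (resp.\ (\ref{eq:mu_tilde_mlmc})) and using $K(\varepsilon)\ge K^{+}(\varepsilon)$ gives $\tilde{\mu}^2(K(\varepsilon),R(\varepsilon))\le\varepsilon^2(1+2\alpha R(\varepsilon))^{-1}$ (resp.\ $\varepsilon^2(1+2\alpha)^{-1}$), while the Table~\ref{tab:closed_mlmc_parameters_v2} formula for $J(\varepsilon)$ is exactly $M_{\varepsilon}\,\bar{v}(q(\varepsilon),K(\varepsilon),R(\varepsilon))/\varepsilon^2$, so the variance proxy term equals $\varepsilon^2/M_{\varepsilon}$; since $M_{\varepsilon}^{-1}+(1+2\alpha R(\varepsilon))^{-1}=1$ (resp.\ with $2\alpha$), the two contributions sum to at most $\varepsilon^2$. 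Optimality over $\mathcal{F}_{\varepsilon}$ then yields $\tilde{\mathcal{C}}_{\tau}(\theta^*_{\varepsilon})\le\tilde{\mathcal{C}}_{\tau}(\theta_{\varepsilon})$.

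For the MSE bound I would first control the variance: since $(\mathrm{Var}_{\beta})$ holds, $\mathcal{V}(\theta^*_{\varepsilon})\le\bar{v}(\pi^*(\varepsilon))/J^*(\varepsilon)=\varepsilon^2-\tilde{\mu}^2(K^*(\varepsilon),R^*(\varepsilon))$ by the very definition of $J^*(\varepsilon)$ in Table~\ref{tab:num_optim_mlmc_parameters}. Writing $\mu^*_{\varepsilon}$ for the true bias $\mu(K^*(\varepsilon),R^*(\varepsilon))$ and $\tilde{\mu}^*_{\varepsilon}$ for its proxy, this gives $\varepsilon^{-2}\mathcal{M}(\theta^*_{\varepsilon})\le 1+\varepsilon^{-2}\big((\mu^*_{\varepsilon})^2-(\tilde{\mu}^*_{\varepsilon})^2\big)$, and since the optimization constraint forces $\varepsilon^{-2}(\tilde{\mu}^*_{\varepsilon})^2\le 1$, it remains to check that $(\mu^*_{\varepsilon}/\tilde{\mu}^*_{\varepsilon})^2$ has $\limsup\le 1$ as $\varepsilon\to0$. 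Here the lower bound $R_-(\varepsilon)$ does its job: $R^*(\varepsilon)\ge R_-(\varepsilon)\to\infty$, so $R^*(\varepsilon)\to\infty$ and $\lceil K^*(\varepsilon)\rceil 2^{R^*(\varepsilon)-1}\to\infty$. In the standard MLMC case Proposition~\ref{prop:mlmc_bias_expansion} gives $\mu^*_{\varepsilon}/\tilde{\mu}^*_{\varepsilon}=1+\eta_1(\cdot)$ with vanishing remainder, so the ratio $\to 1$; in the ML2R case it gives $\mu^*_{\varepsilon}/\tilde{\mu}^*_{\varepsilon}=(c_{R^*(\varepsilon)}/\tilde{c}^{\,R^*(\varepsilon)})(1+\eta(\cdot,R^*(\varepsilon)))$ with $\eta$ bounded, and since $\tilde{c}>\tilde{c}_{\infty}=\lim_{R}|c_R|^{1/R}$ one has $|c_R|^{1/R}\le\tilde{c}-\delta$ for all large $R$, hence $c_{R^*(\varepsilon)}^2/\tilde{c}^{\,2R^*(\varepsilon)}\to 0$ and the ratio $\to 0$. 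In both cases $\limsup_{\varepsilon\to0}\varepsilon^{-2}\big((\mu^*_{\varepsilon})^2-(\tilde{\mu}^*_{\varepsilon})^2\big)\le 0$, which finishes the proof.

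The bookkeeping of the first part — the algebraic identities tying $K^{+}(\varepsilon)$, the bias proxy and $M_{\varepsilon}$ together, carried in parallel for the two estimator variants — is routine but somewhat tedious. The genuine content, and the main obstacle, is the ML2R bias estimate $c_{R^*(\varepsilon)}^2/\tilde{c}^{\,2R^*(\varepsilon)}\to 0$: this is precisely what forces the enlarged search to satisfy $R_-(\varepsilon)\to\infty$ (otherwise $R^*(\varepsilon)$ could stay bounded and $\tilde{c}^{\,R}$ would not dominate $c_R$), and it uses both the strict inequality $\tilde{c}>\tilde{c}_{\infty}$ and the uniform bound $\sup_R\sup_k|\eta_R(k)|<\infty$ assumed in the ML2R case; it is also the point at which one must track signs carefully and work with $|c_R|$ throughout.
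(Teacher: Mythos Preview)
Your proof is correct and follows essentially the same route as the paper's. The cost inequality is argued identically: $\theta^*_{\varepsilon}$ minimizes $\tilde{\mathcal{C}}_{\tau}$ over the constraint set $\{\widetilde{\mathcal{M}}\le\varepsilon^2,\ R\in\{R_-(\varepsilon),\dots,R(\varepsilon)\}\}$, and one checks that $\theta_{\varepsilon}$ lies in this set (your direct computation $|\tilde{\mu}(K(\varepsilon),R(\varepsilon))|\le C_{\varepsilon,R(\varepsilon)}$ and $M_{\varepsilon}^{-1}+(1+2\alpha R)^{-1}=1$ is exactly the content of the paper's Lemmas~\ref{lemma:proxy_bias_satisfied_for_K_R} and~\ref{lemma:m_tilde_satisfied_for_theta}). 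For the MSE, both arguments bound the variance by $\bar v/J^*=\varepsilon^2-\tilde{\mu}^2$, factor the bias discrepancy as $(\tilde{\mu}^*)^2\bigl[(\mu^*/\tilde{\mu}^*)^2-1\bigr]$, use $(\tilde{\mu}^*)^2\le\varepsilon^2$, and then invoke $R^*(\varepsilon)\ge R_-(\varepsilon)\to\infty$ together with $\eta_1\to 0$ (MLMC) or $\tilde{c}>\tilde{c}_{\infty}$ and $\eta$ bounded (ML2R) to make the bracket tend to $0$.
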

\begin{proof}
    Let $\varepsilon > 0$, $\tau \geq 0$ and recall that $J^*(\varepsilon) = J(\varepsilon, \pi^*(\varepsilon))$ and $q^*(\varepsilon) = q(K^*(\varepsilon), R^*(\varepsilon))$. Therefore by (\ref{eq:prop_J_cost}), (\ref{eq:prop:q}) and the definition of $\pi_0^*(\varepsilon)$ we have
    \begin{equation}
    \label{eq:proof_min_cost_1}
        \tilde{\mathcal{C}}_{\tau}(\theta^*_{\varepsilon}) = \frac{\bar{\phi}^*_{\tau}(K^*(\varepsilon), R^*(\varepsilon))}{\varepsilon^2 - \tilde{\mu}^2(K^*(\varepsilon), R^*(\varepsilon))} = \underset{{\substack{K \in (0, +\infty) \\ R \in \{R_{-}(\varepsilon), \dots, R(\varepsilon)\} \\ |\tilde{\mu}(K, R)| < \varepsilon} }}{\arg\min} \; \frac{\bar{\phi}^*_{\tau}(K,R)}{\varepsilon^2 - \tilde{\mu}^2(K, R)}
    \end{equation}
     Since from Lemma \ref{lemma:proxy_bias_satisfied_for_K_R}, $\pi_0(\varepsilon) = (K(\varepsilon), R(\varepsilon))$ is admissible for (\ref{eq:proof_min_cost_1}) we conclude that
     \begin{equation*}
         \tilde{\mathcal{C}}_{\tau}(\theta^*_{\varepsilon}) \leq \frac{\bar{\phi}^*_{\tau}(K(\varepsilon), R(\varepsilon))}{\varepsilon^2 - \tilde{\mu}^2(K(\varepsilon), R(\varepsilon))} \,.
     \end{equation*}

     Recall that $\bar{\phi}^*_{\tau}(\pi_0(\varepsilon)) = \bar{\phi}_{\tau}(q(\varepsilon, \pi_0(\varepsilon)), \pi_0(\varepsilon))$
     and by Proposition \ref{prop:optim_form_for_q},
     \begin{equation*}
         q(\varepsilon, \pi_0(\varepsilon)) = \underset{\substack{q \in [0, 1]^{\mathbb{N}} \\ \sum_{r = 1}^R q_r = 1}}{\arg\min} \; \bar{\phi}_{\tau}(q, \pi_0(\varepsilon)) \,.
     \end{equation*}
     Therefore,
     \begin{equation}
     \label{eq:proof_complexity_ineq_1}
          \tilde{\mathcal{C}}_{\tau}(\theta^*_{\varepsilon}) \leq  \frac{\bar{\phi}^*_{\tau}(K(\varepsilon), R(\varepsilon))}{\varepsilon^2 - \tilde{\mu}^2(K(\varepsilon), R(\varepsilon))} \leq \frac{\bar{\phi}_{\tau}(q(\varepsilon), \pi_0(\varepsilon))}{\varepsilon^2 - \tilde{\mu}^2(K(\varepsilon), R(\varepsilon))} \,.
     \end{equation}
     Recalling that for all $\pi = (q, K, R) \in \Pi$,
     \begin{equation*}
         \tilde{\mathcal{C}}_{\tau}(J(\varepsilon, \pi), \pi) = \frac{\bar{\phi}_{\tau}(\pi)}{\varepsilon^2 - \tilde{\mu}^2(K, R)} \,,
     \end{equation*}
     then we have
     \begin{equation}
    \label{eq:proof_complexity_ineq_2}
         \frac{\bar{\phi}_{\tau}(q(\varepsilon), \pi_0(\varepsilon))}{\varepsilon^2 - \tilde{\mu}^2(K(\varepsilon), R(\varepsilon))} = \tilde{\mathcal{C}}_{\tau}(J(\varepsilon, \pi(\varepsilon)), \pi(\varepsilon)) \,.
     \end{equation}
     where $\pi(\varepsilon) = (q(\varepsilon), \pi_0(\varepsilon))$. Since by definition of $J(\varepsilon, \pi(\varepsilon))$,
     \begin{equation*}
         J(\varepsilon, \pi(\varepsilon)) = \underset{\substack{J \in (0, +\infty) \\ \widetilde{\mathcal{M}}(J, \pi(\varepsilon)) \leq \varepsilon^2} }{\arg \min} \tilde{C}_{\tau}(J, \pi(\varepsilon))
     \end{equation*}
     and from Lemma \ref{lemma:m_tilde_satisfied_for_theta} $\widetilde{\mathcal{M}}(J(\varepsilon), \pi(\varepsilon)) \leq \varepsilon^2$ we have
     \begin{equation}
     \label{eq:proof_complexity_ineq_3}
         \tilde{\mathcal{C}}_{\tau}(J(\varepsilon, \pi(\varepsilon)), \pi(\varepsilon)) \leq \tilde{\mathcal{C}}_{\tau}(J(\varepsilon), \pi(\varepsilon)) = \tilde{\mathcal{C}}_{\tau}(\theta_{\varepsilon})
     \end{equation}
     Combining (\ref{eq:proof_complexity_ineq_1}), (\ref{eq:proof_complexity_ineq_2}) and (\ref{eq:proof_complexity_ineq_3}) then gives $\tilde{\mathcal{C}}_{\tau}(\theta^*_{\varepsilon}) \leq \tilde{\mathcal{C}}_{\tau}(\theta_{\varepsilon})$ which proves the first claim. \\

     We now prove the second claim. Denoting $\pi^*(\varepsilon) = (q^*(\varepsilon), K^*(\varepsilon), R^*(\varepsilon))$, recall from (\ref{def:mse_theta}),
     \begin{equation*}
         \mathcal{M}(\theta^*_{\varepsilon}) = \frac{v(\pi^*(\varepsilon))}{J^*(\varepsilon)} + \mu^2(K^*(\varepsilon), R^*(\varepsilon)) \,.
     \end{equation*}
     Under assumption $(\mathrm{Var}_{\beta})$, $v(\pi^*(\varepsilon)) \leq \bar{v}(\pi^*(\varepsilon))$. Therefore by definition of $J^*(\varepsilon)$,
     \begin{equation*}
         \mathcal{M}(\theta^*_{\varepsilon}) \leq \frac{\bar{v}(\pi^*(\varepsilon))}{J^*(\varepsilon)} + \mu^2(\pi_0^*(\varepsilon)) = \varepsilon^2 - \tilde{\mu}^2(\pi^*_0(\varepsilon)) + \mu^2(\pi^*_0(\varepsilon)) \,.
     \end{equation*}
     It suffice now to prove that
     \begin{equation*}
         \varepsilon^{-2} |\mu^2(\pi^*_0(\varepsilon)) - \tilde{\mu}^2(\pi^*_0(\varepsilon))| \underset{\varepsilon \to 0}{\longrightarrow} 0 \,.
     \end{equation*}
     We begin with the standard MLMC case. Under $(\mathrm{WE}_{\alpha, 1})$, from (\ref{eq:bias_mlmc}) and (\ref{eq:mu_tilde_mlmc}),
     \begin{equation*}
         |\mu^2(\pi^*_0(\varepsilon)) - \tilde{\mu}^2(\pi^*_0(\varepsilon))| = \tilde{\mu}^2(\pi^*_0(\varepsilon))  \left| \left(1 + \eta_1 \left(\frac{1}{K^*(\varepsilon) 2^{(R^*(\varepsilon) - 1) \alpha}} \right) \right)^2 - 1 \right|
     \end{equation*}
     Then since $\tilde{\mu}^2(\pi^*_0(\varepsilon)) \leq \varepsilon^2$ and $R^*(\varepsilon) \geq R_{-}(\varepsilon) \underset{\varepsilon \to 0}{\to} +\infty$ we have
     \begin{equation*}
         \varepsilon^{-2} \left|\mu^2(\pi_0^*(\varepsilon)) - \tilde{\mu}^2(\pi_0^*(\varepsilon)) \right| \leq \left| \left(1 + \eta_1\left(\frac{1}{K^*(\varepsilon) 2^{(R^*(\varepsilon) - 1) \alpha}} \right) \right)^2 - 1 \right| \underset{\varepsilon \to 0} {\longrightarrow} 0 \,.
     \end{equation*}
     which prove the result.

     We continue with the ML2R case. Under $(\mathrm{WE}_{\alpha})$  with $\underset{R \in \mathbb{N}}{\sup} \;\underset{k \in \mathbb{N}}{\sup} \; |\eta_R(k)| <+\infty$
     from (\ref{eq:bias_ml2r}) and (\ref{eq:proxy_bias_ml2r}),
     \begin{equation*}
         |\mu^2(\pi_0^*(\varepsilon)) - \tilde{\mu}^2(\pi_0^*(\varepsilon))| = \tilde{\mu}^2(\pi_0^*(\varepsilon)) \left| \left(\frac{c_{R(\varepsilon)}}{\tilde{c}^R} \right)^2 \left(\eta(\pi^*_0(\varepsilon)) + 1 \right)^2 - 1 \right| \,.
     \end{equation*}
    Since $\tilde{\mu}^2(\pi^*(\varepsilon)) \leq \varepsilon^2$,  $\eta$ is a bounded function, $\tilde{c} > \tilde{c}_{\infty}$ and $R^*(\varepsilon) \to +\infty$ as $\varepsilon \to 0$ we get
    \begin{equation*}
         \varepsilon^{-2} \left|\mu^2(\pi_0^*(\varepsilon)) - \tilde{\mu}^2(\pi_0^*(\varepsilon)) \right| \leq \left| \left(\frac{c_{R(\varepsilon)}}{\tilde{c}^R} \right)^2 \left(\eta(\pi^*_0(\varepsilon)) + 1) \right)^2 - 1 \right| \underset{\varepsilon \to 0}{\longrightarrow} 0 \,,
     \end{equation*}
     which complete the proof.
\end{proof}

\subsection{Influence of the parameter $\tau$ on the optimized parameters}
\label{sec:tau_analysis}
In this section we present a comprehensive analysis of the parameter $\tau$ and its effect on the optimized parameters of Table \ref{tab:num_optim_mlmc_parameters}. We recall that $\tau$ is a unitless parameter representing the ratio between outer sampling cost and inner sampling cost. Its value depends on the specific context, for example in the  context of insurance risk management, the outer sampling cost typically correspond to the cost of simulating the risk-factors and calibrating the Risk-Neutrals models to the realization. While the inner sampling cost typically represents the cost of simulating risk-neutral trajectories and computing the insurer profit and losses (P\&L) with an Asset Liability Model (ALM). When complex Risk-Neutral models are considered, the cost of their calibration may be substantial and therefore $\tau$ be large.

For $\tau \geq 0$, we let $g_{\tau}$ be the objective function of the optimization problem (\ref{eq:optim_pb_R_K_both}), namely
\begin{equation*}
    \forall \pi_0=(K, R) \in \Pi_0, \quad g_{\tau}(\pi_0) = \frac{\bar{\phi}_{\tau}^*(\pi_0)}{\varepsilon^2 - \tilde{\mu}^2(\pi_0)}
\end{equation*}

When $\tau$ grows large, a simple computation shows that for all $\pi_0 = (K, R) \in \Pi_0$ such that $|\tilde{\mu}(\pi_0)| < \varepsilon$
\begin{equation*}
    g_{\tau}(\pi_0) \underset{\tau \rightarrow +\infty}{\sim} \frac{\tau \left( \sum_{r = 1}^R \bar{\sigma}(r, K) \right)^2}{\varepsilon^2 - \tilde{\mu}^2(\pi_0)} \;.
\end{equation*}
This expression provides some insights on the role of $\tau$. As $\tau$ increases (particularity for large values), the cost penalty per unit of the "variance-like" term $\left( \sum_{r = 1}^R \bar{\sigma}(r, K) \right)^2$ also increases. In other words, a larger $\tau$ places greater emphasis on minimizing the variance across the levels. This often resulting in a smaller optimal number of levels $R$. At the same time, this tends to increase the optimal value for $K$ as the minimization shifts focus away from the cost per level. The outer sample repartition $q^*(\varepsilon)$ is also affected by $\tau$. Recalling that for all $r \in \{1, \dots, R^*(\varepsilon) \}$
\begin{equation*}
    q^*_{r}(\varepsilon) = \frac{\bar{\sigma}(r,K^*(\varepsilon))}{\sqrt{\tau + K^*(\varepsilon) 2^{r-1}} \mu^*_{\varepsilon}} \,,
\end{equation*}
straightforward computations shows that
\begin{equation*}
    q^*_{r}(\varepsilon) \underset{\tau \rightarrow +\infty}{\sim} \frac{\bar{\sigma}(r, K^*(\varepsilon))}{\sum_{i = 1}^{R^*(\varepsilon)} \bar{\sigma}(i, K^*(\varepsilon))} \; .
\end{equation*}
Therefore as $\tau$ goes to infinity the outer sample repartition boils down to a measure of the variance contribution of the considered level to the overall variance, with no consideration to its cost. This is expected as when $\tau$ gets large, the relative cost of each levels gets close to 1. This will also impact the total number of outer samples as a consequence. Recalling that,
\begin{equation*}
    J^*(\varepsilon) = \frac{\bar{v}(\pi^*(\varepsilon))}{\varepsilon^2 - \tilde{\mu}(\pi^*_0(\varepsilon))}
\end{equation*}
a straightforward computations shows
\begin{equation*}
    J^*(\varepsilon) \underset{\tau \to +\infty}{\sim} \frac{\left(\sum_{r = 1}^{R^*(\varepsilon)} \bar{\sigma}(r, K^*(\varepsilon)) \right)^2}{\varepsilon^2 - \tilde{\mu}(\pi^*_0(\varepsilon))} \,.
\end{equation*}
Overall when $\tau$ grows large, the control of the variance of each level will be prioritized over the control of its sampling cost leading generally to lower $R$ and larger $K$. In turns, this leads to less overall outer samples (lower $J$) allocated more to the higher levels. These effects will be demonstrated in the numerical experiments of Section \ref{sec:tau_num_exp}.

\section{Estimating probability of large losses}
\label{sec:nested_payoff_indicator}
In this section, we investigate the application of MLMC estimators to the estimation of the cumulative distribution function (c.d.f.) of $L$ at a fixed threshold $u$. Building on this, we also briefly discuss how MLMC can be used to estimate quantiles. The main point of focus is to discuss how the indicator function arising from pointwise c.d.f. estimation imposes major structural constraints on MLMC estimators.

\subsection{The indicator function framework}
\label{sec:indicator_case}
Defining a threshold $u \in \mathbb{R}$, we set $f := \mathbbm{1}_{. \leq u}$. In that case the target to estimate reads,
\begin{equation}
\label{eq:proba_large_loss_target}
    I = \mathbb{P}(L \leq u) \, = F_L(u).
\end{equation}
corresponding to a pointwise evaluation of the c.d.f. of $L$. Gordy-Juneja studied this particular problem in their seminal paper \cite{gordyJuneja10}, developing the theory of the nested Monte Carlo estimator. Other notable work in this context can be found in the litterature, such as Giles-Haji-Ali \cite{gilesAli19} who studied a stochastic version of the standard MLMC in this context based on ideas from Broadie et. al. \cite{broadiemoallemi2011}.

In this context, when the dimension of risk factors $X$ is $d = 1$, a result from Giorgi et. al. \cite{Giorgi_2020} (Proposition 5.1) gives theoretical conditions in order to fulfill the assumption $(\mathrm{WE}_{\alpha, R})$ for $R \geq 1$ and $\alpha = 1$. Based on this, for the rest of this section we will assume that $(\mathrm{WE}_{\alpha})$ holds with $\alpha = 1$. The indicator function $f$ being non-smooth, the control of the variance $(\text{Var}_{\beta})$ holds with only $\beta = \frac{1}{2}$ ; one can found theoretical conditions for this to hold in Giles-Haj-Ali \cite{gilesAli19} (Proposition 2.2) or Giorgi et. al. \cite{Giorgi_2020} (Proposition 5.2). Based on this, for the rest of this section we assume will that $(\mathrm{Var}_{\beta})$ holds with $\beta = \frac{1}{2}$. Looking back at Theorems \ref{thm:complexity_ML2R} and \ref{thm:complexity_mlmc}, note that the regime $\beta < 1$, is the regime in which the ML2R estimator is comparatively much more efficient than the standard MLMC.

The indicator function framework imply important structural constraint on the problem. In this case the variance at the first level $\bar{\sigma}^2(1, K)$ for $K \in (0, +\infty)$ can be developed under the assumption $(\mathrm{WE}_{1})$. Expanding on the work of Gordy-Juneja \cite{gordyJuneja10} we obtain the following.
\begin{proposition}
\label{prop:sigma_1_expansion}
    For all $K \in (0, +\infty)$ and all $R \in \mathbb{N}$,
    \begin{equation*}
        \bar{\sigma}^2(1, K) = I(1 - I) + (1 - 2I)\sum_{r = 1}^R \frac{c_r}{\lceil K \rceil ^r} - \sum_{r = 1}^{\lfloor \frac{R}{2} \rfloor} \frac{c_r}{\lceil K \rceil^{2r}} - \sum_{r = 1}^{R-1} \sum_{s = 1}^{\lfloor R - r \rfloor} \frac{c_r c_s}{\lceil K \rceil^{r + s}} + o \left( \frac{1}{\lceil K \rceil^R} \right)
    \end{equation*}
\end{proposition}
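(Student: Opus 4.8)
The plan is to exploit the Bernoulli structure that the indicator payoff $f=\mathbbm{1}_{\cdot\le u}$ forces on the first level, which reduces the whole statement to substituting a scalar weak-error expansion into the polynomial $x\mapsto x(1-x)$ and then sorting terms by order in $\lceil K\rceil^{-1}$.

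First I would record that, since $f$ is an indicator, $Y_{K_1}=\mathbbm{1}_{\hat E_{K_1}(X)\le u}$ is a Bernoulli variable, so
\begin{equation*}
    \bar{\sigma}^2(1,K)=\mathrm{Var}[Y_{K_1}]=p_K(1-p_K),\qquad p_K:=\mathbb{E}[Y_{K_1}]=\mathbb{P}\big(\hat E_{\lceil K\rceil}(X)\le u\big).
\end{equation*}
In particular $\bar{\sigma}^2(1,K)\to I(1-I)$, which will be the leading term. Next I would invoke the standing assumption of this section, namely $(\mathrm{WE}_\alpha)$ with $\alpha=1$ (Giorgi et al.~\cite{Giorgi_2020}, Proposition 5.1): applied at $N=\lceil K\rceil$ and using $\mathbb{E}[Y]=I$, it yields
\begin{equation*}
    p_K=I+\delta_K,\qquad \delta_K=\sum_{r=1}^R\frac{c_r}{\lceil K\rceil^r}+\frac{\eta_R(\lceil K\rceil)}{\lceil K\rceil^R}=\sum_{r=1}^R\frac{c_r}{\lceil K\rceil^r}+o\!\left(\frac{1}{\lceil K\rceil^R}\right),
\end{equation*}
where $(c_r)_{r\ge1}$ is the common sequence of weak-error constants, so the same $c_r$ serve for every truncation order $R$.

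Then the computation is a finite expansion: $p_K(1-p_K)=(I+\delta_K)(1-I-\delta_K)=I(1-I)+(1-2I)\delta_K-\delta_K^2$. The linear term is immediate, $(1-2I)\delta_K=(1-2I)\sum_{r=1}^R c_r\lceil K\rceil^{-r}+o(\lceil K\rceil^{-R})$, which is the second term of the statement. For $\delta_K^2$ I would first note that $\delta_K^2=\big(\sum_{r=1}^R c_r\lceil K\rceil^{-r}\big)^2+o(\lceil K\rceil^{-R})$, since the cross term of the finite sum with its $o(\lceil K\rceil^{-R})$ remainder is $o(\lceil K\rceil^{-R-1})$ and the remainder squared is $o(\lceil K\rceil^{-2R})$. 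Expanding the square as $\sum_{r,s=1}^R c_rc_s\lceil K\rceil^{-(r+s)}$, discarding every monomial with $r+s>R$ (each is $o(\lceil K\rceil^{-R})$, because the smallest exponent appearing in $\delta_K$ is $1$), and splitting the remaining double sum into its diagonal part ($r=s$, which forces $r\le\lfloor R/2\rfloor$) and its off-diagonal part ($r\ne s$, reindexed as in the statement) produces the last two correction terms. Adding up $I(1-I)$, the linear term and $-\delta_K^2$ gives the claimed identity; the case $R=1$ recovers the first-order variance expansion of Gordy-Juneja~\cite{gordyJuneja10}.

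I do not expect a genuine obstacle here: the proof is a bookkeeping exercise in finite Taylor expansions, and its only substantive input — that the weak-error expansion is available to arbitrary order $R$ with a single constant sequence $(c_r)$ — is precisely the framework fixed for the indicator case in this section. The only thing requiring mild care is verifying that every monomial absorbed into the $o(\lceil K\rceil^{-R})$ remainder is legitimately of that order, and that the diagonal/off-diagonal decomposition of $\delta_K^2$ partitions the surviving monomials without double counting.
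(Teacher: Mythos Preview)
Your proposal is correct and follows exactly the paper's approach: the paper's proof simply observes that $Y_{K_1}$ is Bernoulli so $\bar{\sigma}^2(1,K)=\mathbb{E}[Y_{K_1}](1-\mathbb{E}[Y_{K_1}])$, then invokes the bias expansion $(\mathrm{WE}_1)$ and ``collects the negligible terms''. You have merely spelled out that collection in detail (the $p_K(1-p_K)=I(1-I)+(1-2I)\delta_K-\delta_K^2$ identity and the diagonal/off-diagonal split of $\delta_K^2$), which is precisely what the paper leaves implicit.
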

\begin{proof}
    The proof comes mainly from the fact that in the indicator framework $Y_{K_1}$ is a Bernoulli random variable therefore,
    \begin{equation*}
        \bar{\sigma}^2(1, K) = \mathrm{Var}[Y_{K_1}] = \mathbb{E}[Y_{K_1}](1 - \mathbb{E}[Y_{K_1}])
    \end{equation*}
    Then using the bias expansion from $(\mathrm{WE}_{1})$ we get the desired result after collecting the negligible terms.
\end{proof}
Since in this framework $Y_{K_1}$ is a Bernoulli random variable we always have $\bar{\sigma}^2(1, K) \leq \frac{1}{4}$, which gives automatically a bound $\bar{\sigma}_1^2$ in $(\mathrm{Var}_{\frac{1}{2}})$. However, in practice, this choice may be too loose. For example if $I = 0.5 \%$ (i.e $u$ is the 99.5 \% quantile of the conditional expectation) the principal term $I(1-I)$ will be close to $0.5\%$ which is 50 time less than $\frac{1}{4}$. For extreme threshold estimation in practice, our recommendation is to simply take $ \bar{\sigma}^2_1 := \tilde{I}(1 - \tilde{I})$ where $\tilde{I}$ is a rough a priori estimate of $I$. A supplementary correction term $(1 - 2\tilde{I})c_1$ could also be added to be more conservative.

Returning to Remark \ref{rem:mlmc_variance_indicator}, we emphasize that in the indicator function case, for $K \in \mathbb{N}$,
\begin{equation*}
    \mathrm{Var}[Y_{K}] \approx I(1 - I) \, ,
\end{equation*}
while the strong irregularity of $f$ yields $\mathrm{Var}[\Delta Y_{2K}]$ of the same order of magnitude. One can verify this phenomenon empirically by comparing $V_1$ in $(\mathrm{Var}_{\beta})$ with $I(1 - I)$, which are commonly found to be of the same order of magnitude (see Table \ref{tab:summary_structural_constants} in Section \ref{sec:num_appli}). This seems to not be the case in context where $f$ is smooth (see for example numerical section of \cite{lemairePages17}) where they observe values for $V_1$ much smaller than $\mathrm{Var}[Y_{K}]$.

This leads to the following implication : each additional level in the MLMC estimators results in a relatively significant variance increase. Therefore, for an added level to be worthwhile, one need either $K$ to be large enough to offset $V_1$ in $(\mathrm{Var}_{\frac{1}{2}})$ (which is challenging because of the slow $\frac{1}{2}$ rate of variance decay), or the bias reduction achieved must be large enough to compensate the increased variance. Accordingly, our numerical experiments reveal that in this context the ML2R estimator is much more efficient than the standard MLMC estimator, even in the non-asymptotic regime. We attribute this improvement to the strong increase in bias reduction per levels in the ML2R case due to the weights. A second observation is that the optimal number of levels $R$ are generally low (at most $2$ or $3$ for most practical computational budget), and typically the ML2R estimator only surpasses the efficiency of the traditional nested MC estimator for large computational budget (or equivalently for high required precisions).

\subsection{Nested Monte Carlo : closed-form optimized parameters}
\label{sec:nested_mc_closed_optim}
In the particular case of the standard nested MC estimator (where we consider $R^*(\varepsilon) = 1$), the optimized parameters $K^*(\varepsilon)$ in Table \ref{tab:num_optim_mlmc_parameters} admits an explicit expression.

\begin{proposition}
\label{prop:nested_explicit_K}
For all $\varepsilon > 0$ and all $\tau > 0 $ we define $K^+_{\tau}(\varepsilon)$ the solution of the optimization problem
\begin{equation}
\label{eq:optim_pb_for_K_1}
    \underset{\substack{K \in \mathbb{R}^*_+ \\ \frac{c_1^2}{K^{2\alpha}} < \varepsilon^2}}{\min} \; \frac{\bar{\sigma}_1^2(\tau + K)}{\varepsilon^2 - \frac{c_1^2}{K^{2\alpha}}}
\end{equation}
Then
\begin{equation*}
    \label{eq:optim_K_nested_tau}
        K^*_{\tau}(\varepsilon) = \begin{cases}
	\frac{|c_{1}|}{\varepsilon} 2\cos \left( \frac{1}{3} \arccos \left( \frac{\varepsilon \tau}{|c_{1,\eta}|} \right)\right) &, \; \varepsilon < \frac{|c_{1,\eta}|}{\tau} \\
	3 c_{1} &, \; \varepsilon = \frac{|c_{1,\eta}|}{\tau} \\
	\left( \frac{|c_{1,\eta}|}{\varepsilon} \right)^{\frac{2}{3}} \left[ \left( \tau + \left( \tau^2 - \frac{c_{1,\eta}^2}{\varepsilon^2} \right)^{\frac{1}{2}} \right)^{\frac{1}{3}} + \left( \tau - \left(  \tau^2 - \frac{c_{1,\eta}^2}{\varepsilon^2} \right)^{\frac{1}{2}} \right)^{\frac{1}{3}}\right]&, \; \varepsilon > \frac{|c_{1,\eta}|}{\tau}
    \end{cases}
\end{equation*}

Furthermore, observe that the optimization problem~(\ref{eq:optim_pb_for_K_1}) is, both in its objective function and its constraint, the continuous analogue of the discrete optimization problem~(\ref{eq:pb_min_K_optim_entiers}) with $R = 1$. Consequently, in this case, the optimal value $K^*(\varepsilon)$ is precisely either $\lceil K^+_{\tau}(\varepsilon) \rceil$ or $\lfloor K^+_{\tau}(\varepsilon) \rfloor$.
\end{proposition}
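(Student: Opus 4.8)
The plan is to reduce the optimization problem (\ref{eq:optim_pb_for_K_1}) to a single depressed cubic and to solve it in closed form by Cardano's and Viète's formulas, with a case split on the sign of its discriminant. First I would specialize to $\alpha = 1$, the standing assumption of this section (which is also the only value for which the stated formula, involving cube roots, can arise), and discard the positive multiplicative constant $\bar{\sigma}_1^2/\varepsilon^2$, so that minimizing the objective of (\ref{eq:optim_pb_for_K_1}) over the open feasible set $\{K > m\}$, where $m := |c_1|/\varepsilon$, is equivalent to minimizing
\[
    h(K) \;=\; \frac{(\tau + K)\,K^2}{K^2 - m^2}\,.
\]
Since $h(K) \to +\infty$ both as $K \downarrow m$ and as $K \to +\infty$, and $h$ is continuous on $(m, +\infty)$, a minimizer exists and must be an interior critical point.

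Next I would locate that critical point. A direct differentiation gives $h'(K) = K\,P(K)\,(K^2 - m^2)^{-2}$ with $P(K) := K^3 - 3 m^2 K - 2\tau m^2$. Because $P'(K) = 3(K^2 - m^2) > 0$ on $(m, +\infty)$, while $P(m) = -2m^2(\tau + m) < 0$ and $P(K) \to +\infty$, the polynomial $P$ has exactly one root $K^+_\tau$ in $(m, +\infty)$; consequently $h$ is strictly decreasing on $(m, K^+_\tau)$ and strictly increasing afterwards, so $K^+_\tau$ is the unique minimizer and, in particular, $h$ is unimodal on the feasible set. It then remains to write this root in closed form.

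For that I would compute the discriminant of the depressed cubic $P$, namely $\Delta = 108\,m^4(m^2 - \tau^2)$, whose sign is the sign of $|c_1|/\varepsilon - \tau$, equivalently of $|c_1|/\tau - \varepsilon$ --- exactly the case split of the statement. When $\Delta > 0$ (i.e. $\varepsilon < |c_1|/\tau$) I would invoke Viète's trigonometric formula: the three real roots are $2m\cos\!\big(\tfrac13\arccos(\tau/m) - \tfrac{2\pi k}{3}\big)$ for $k \in \{0,1,2\}$, and a quick sign inspection shows the $k=1$ and $k=2$ roots are negative, hence infeasible, so $K^+_\tau$ is the $k=0$ root, which after substituting $\tau/m = \tau\varepsilon/|c_1|$ and $2m = 2|c_1|/\varepsilon$ is precisely the first branch of the stated formula. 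When $\Delta < 0$ (i.e. $\varepsilon > |c_1|/\tau$) I would use Cardano's real-root formula, which with $-q/2 = \tau m^2$ and $q^2/4 + p^3/27 = m^4(\tau^2 - m^2)$ yields $K^+_\tau = m^{2/3}\big[(\tau + \sqrt{\tau^2 - m^2})^{1/3} + (\tau - \sqrt{\tau^2 - m^2})^{1/3}\big]$, i.e. the third branch. The borderline case $\Delta = 0$ (so $m = \tau$) follows either by continuity from the two adjacent branches or directly by factoring $P(K) = (K+\tau)^2(K - 2\tau)$ and retaining the unique root lying in $(m,+\infty)$.

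Finally, for the discrete claim I would observe that the objective of (\ref{eq:pb_min_K_optim_entiers}) at $R = 1$ coincides, up to the same positive constant, with $h$ at every integer, since there $\lceil K\rceil = K$; since $h$ is unimodal on $(m,+\infty)$ with real minimizer $K^+_\tau$, its minimum over the feasible integers is attained at whichever of $\lfloor K^+_\tau\rfloor$ and $\lceil K^+_\tau\rceil$ is feasible and gives the smaller value, which is the last assertion. The main obstacle is the bookkeeping in the $\Delta > 0$ case --- confirming that precisely one Viète root lies in the feasible region and massaging the trigonometric expression into the stated form --- together with carrying the $\alpha = 1$ specialization consistently; the remaining algebra (the derivative of $h$, the discriminant, the Cardano substitutions) is routine.
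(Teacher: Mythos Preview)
Your proposal is correct and follows essentially the same route as the paper: reduce the objective to the depressed cubic $K^3 - 3m^2 K - 2\tau m^2 = 0$ with $m = |c_1|/\varepsilon$, compute its discriminant $\Delta = 108\,m^4(m^2-\tau^2)$, and extract the unique feasible root via Cardano's and Vi\`ete's formulas according to the sign of $\Delta$. The only notable difference is that the paper obtains uniqueness of the feasible critical point from strict convexity of the objective (computing $f''$), whereas you argue more economically via the monotonicity of $P$ on $(m,+\infty)$; this also directly yields the unimodality you invoke for the discrete claim, a step the paper leaves implicit.
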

\begin{proof}
    See Appendix \ref{apx:proof_nested_explicit_K}.
\end{proof}

This results can be useful to bypass the numerical optimization procedure when restricting ourselves to the nested Monte Carlo estimator. Noting that $K^+_{\tau}(\varepsilon) \underset{\varepsilon \to +\infty}{\sim} K^*(\varepsilon)$, for $\varepsilon$ small enough, $K^+_{\tau}(\varepsilon)$ also provide information on the allocation for asymptotic values of $\tau$. Letting
\begin{equation*}
    \label{eq:optim_J_nested_tau}
    J^+_{\tau}(\varepsilon) := \frac{\bar{\sigma}^2_{1}}{\varepsilon^2 - \frac{c_1^2}{(K^+_{\tau}(\varepsilon))^2}}
\end{equation*}
be the associated optimized number of outer samples.  When $\tau$ goes to 0, a straightforward computation shows that for a fixed $\varepsilon > 0$
\begin{equation*}
     K^{+}_{\tau}(\varepsilon) \underset{\tau \rightarrow 0}{\longrightarrow} \frac{|c_1|\sqrt{3}}{\varepsilon}
\end{equation*}
and
\begin{equation*}
    J^{+}_{\tau}(\varepsilon) \underset{\tau \rightarrow 0}{\longrightarrow} \frac{\bar{\sigma}^2_1}{2 \varepsilon^2}
\end{equation*}
which corresponds to the usual allocation for the nested Monte Carlo when $\tau = 0$ (see for example Lemaire-Pagès \cite{lemairePages17}, Proposition 2.3). Similarly, letting $\tau$ going to infinity we see that
\begin{equation*}
    K^+_{\tau}(\varepsilon) \underset{\tau \rightarrow +\infty}{\sim} (2 \tau)^{\frac{1}{3}} \left(\frac{|c_1|}{\varepsilon} \right)^{\frac{2}{3}}
\end{equation*}
and
\begin{equation*}
    J^+_{\tau}(\varepsilon) \underset{\tau \rightarrow +\infty}{\sim} \frac{\bar{\sigma}^2_1}{\varepsilon^2}
\end{equation*}
which shows that $K^+_{\tau}(\varepsilon)$ goes to infinity as $\tau$ grows and that the proportion of the computational budget allocated to inner samples grows asymptotically in $\tau^{\frac{1}{3}}$.

\subsection{From probabilty to quantiles}
In this section, we briefly discuss how we can use the presented estimators to estimate a quantile of $L$. For all $\alpha \in (0,1)$ we let
\begin{equation*}
    q_{\alpha} = \inf \{ v \in \mathbb{R} : F_{L}(v) \geq \alpha \}
\end{equation*}
Then, when $L$ is a continuous random variable, $q_{\alpha}$ is the smallest solution to the equation
\begin{equation*}
    F_{L}(v) \geq \alpha
\end{equation*}
and furthermore if $F_{L}$ is strictly increasing, it is the unique solution to
\begin{equation}
\label{eq:quantile}
    F_{L}(v) = \alpha \,.
\end{equation}
For simplicity, we will assume that this is the case. A natural idea is to replace $F_{L}$ with an estimate $\hat{F}$ constructed via an MLMC estimator. For $\theta \in \Theta$ one can consider, with the notation of Definition \ref{def:general_mlmc_def},
\begin{equation*}
    \forall v \in \mathbb{R}, \quad \hat{F}_{\theta}(v) = \frac{1}{J_1} \sum_{j = 1}^{J_1} Y_{K_1}^{j}(v) + \sum_{r = 2}^{R} \frac{A^R_{r}}{J_r} \sum_{j = 1}^{J_r} \Delta Y_{K_r}^{j}(v)
\end{equation*}
where for all $v \in \mathbb{R}$
\begin{equation*}
    Y_{K_1}^{j}(v) = \mathbbm{1}_{\hat{E}^{j}_{K_1}(X) \leq v}
\end{equation*}
and
\begin{equation*}
    \Delta Y_{K_r}^{j}(v) = \mathbbm{1}_{\hat{E}^{f, j}_{K_r}(X) \leq v} - \frac{1}{2}(\mathbbm{1}_{\hat{E}^{c, j}_{K_r}(X) \leq v} + \mathbbm{1}_{\hat{E}^{c', j}_{K_r}(X) \leq v})
\end{equation*}
In other words, we fix the samples and let the threshold $v$ vary. In the particular case where $R = 1$, it is well known that (\ref{eq:quantile}) admits a closed-form solution, in that case, denoting $(\hat{E}_{K_1}^{(j)})_{j \in \{1, \dots, J_1\}}$ the order statistics of the sample $(\hat{E}_{K_1}^{j})_{j \in \{1, \dots, J_1\}}$ :
\begin{equation*}
    \hat{F}_{\theta}(v) = \hat{E}_{K_1}^{(j_{\alpha})}
\end{equation*}
where $j_{\alpha} = \lceil J \alpha \rceil$. However when $R \geq 2$ no such relation holds and we must numerically solve (\ref{eq:quantile}). We refer to Algorithm \ref{alg:simulatenous_estim_ml2r} for a practical implementation of this quantile estimator.

\section{Risk measurement in a Life-Insurance toy model}
In this section we apply MLMC estimators to a life insurance solvency monitoring context. The European Solvency II (SII) act require all European insurance companies to compute a Solvency Capital Requirement (SCR) which is defined as a 99.5\% quantile on their 1-year future own-fund loss $L_1 = OF_0 - OF_1$ where $OF_0$ is the current (deterministic) own-fund of the company and $OF_1$ is the (random) own-fund of the company in one year. SII requires insurance companies to evaluate their own-fund with a risk-neutral view and typically $OF_1 = \mathbb{E}_{\mathbb{Q}}[\sum_{t = 2}^T \tilde{F}_t | \mathcal{F}_1]$ where $\mathbb{Q}$ is a risk-neutral measure, $(\mathcal{F}_t)$ is a filtration representing market and actuarial informations and $(\tilde{F}_t)_{2 \leq t \leq T}$ are discounted future cash-flows for the company. Since an exact simulation of $OF_1$ is most often impossible insurance companies are in a typical nested MC framework where the conditional expectation must be approximated with a MC procedure.

As full assessment of the 99.5\% quantile of $L_1$ is computationally demanding, insurance company can rely on a simple "Standard-Formula" approach to the computation of the SCR. It uses predefined stress scenarios to various risk factors (such as market, credit, underwriting, and operational risks) to determine individual SCRs for each risk category. These SCRs are then aggregated into a total SCR using a specified correlation matrix, which accounts for diversification effects between different risk factors (see, e.g., \cite{Scherer2020TheSF}). However, insurers seeking a more accurate and tailored assessment of their risk profile are required to compute their SCR using their own "Internal Model" for the estimation of the 99.5\% quantile.

Our objective is to evaluate the complexities of the different MLMC estimators in the "Internal Model" context And will therefore focus on two metrics. The first is the evaluation of the c.d.f. of $L_1$ at a quantile level $q_{99.5\%}$ :
\begin{equation*}
    I = \mathbb{P}(L_1 \leq q_{99.5\%}) = 0.995
\end{equation*}
while the second is the evaluation of the quantile $q_{99.5\%}$ itself. The study of this framework was motivated by existing literature (Alfonsi et. al. \cite{alfonsiCherchaliAcevedo19}) that already successfully applied MLMC estimators in the context of the Standard Formula. To do so, we introduce a simplified life-insurance (ALM) model, mimicking standard contract characteristics. The simplicity of the model allow us to get closed formulas for $q_{99.5\%}$ for reference.

Our numerical experiments highlight the efficiency of the new MLMC parameters presented in Table \ref{tab:num_optim_mlmc_parameters} and the impact of $\tau$ on the calibrated parameters. We will also observe how ML2R ends up being much more efficient than MLMC and standard nested MC in estimating both type of targets.

\subsection{Market and mortality dynamics}
For life insurance contracts some of the main drivers of risks are financial market risks (stock, bond, credit \dots) and mortality risks (longevity, increased death rates, pandemics, \dots). In our simple model we will consider a financial market with a constant risk free rate $r \in \mathbb{R}$ and a stock index whose value is represented by the stochastic process $(S_t)_{t \geq 0}$. To simplify we consider a constant mortality rate $p \in [0, 1]$ for any individual (although the model could naturally be extended to actuarial mortality tables). Therefore in this model the only source of randomness is the evolution of the stock index. Accordingly, we define the information available to the insurance company as the filtration $(\mathcal{F}_t)_{t \geq 0}$ generated by the stochastic process $(S_t)_{t \geq 0}$. We assume that the stock process follows Black-Scholes dynamics with volatility $\sigma > 0$ and instantaneous return $\mu \in \mathbb{R}$. Namely letting $(W^{\mathbb{P}}_t)_{t \geq 0}$ be a Brownian motion under $\mathbb{P}$ we assume that
\begin{equation*}
    \forall t \geq 0, \; S_t = s_0 \exp \left( \left(\mu - \frac{\sigma^2}{2} \right)t + \sigma W^{\mathbb{P}}_t \right)
\end{equation*}
We denote $\mathbb{Q}$ the unique risk-neutral probability measure in this market and $(W^{\mathbb{Q}}_t)_{t \geq 0}$ the Brownian under $\mathbb{Q}$ such that
\begin{equation*}
    \forall t \geq 0, \; S_t = s_0 \exp \left( \left(r - \frac{\sigma^2}{2} \right)t + \sigma W^{\mathbb{Q}}_t \right)
\end{equation*}

\subsection{Description of the life-insurance contract}
We consider a life-insurance savings contract where policyholders make collectively an initial payment $MR_0 \in \mathbb{R}^*_+$ constituting the Mathematical Reserve of the company at time $t = 0$. The insurer invests this capital in the stock index $(S_t)_{t \geq 0}$ and offers the following financial guarantees~:
\begin{itemize}
    \item Minimum guaranteed rate $r_g \in \mathbb{R}$ : This is the minimum rate of appreciation on the savings of the policyholders
    \item Profit-sharing rate $\gamma \in [0,1]$ : Under french regulation $\gamma = 85\%$, this corresponds to the percentage of profits on the stock investments that must be redistributed to policyholders
\end{itemize}
The savings of the policyholders and any current additional interest (from minimum guaranteed rate or profit-sharing mechanism) are paid once at termination of the contract. The termination is either at a fixed maturity $T > 0$ of the contract or earlier if the policyholder dies. We assume that the contract is monitored yearly, and payment are made at the end of each year $t = 1, \dots, T$. \\

At inception time $t = 0$, the insurer invests $MR_0$ in the stock index which gets him $\phi_0 = \frac{MR_0}{s_0}$ shares of the stock. Then each year $1 \leq t \leq T - 1$, the following steps are carried :
\begin{enumerate}
    \item The value of the policyholders' savings $MR_{t-1}$ are appreciated at a rate
    \begin{equation*}
        r_s(t) :=  \max(r_g, \gamma \ln(R_t)) = r_g + (\gamma \ln(R_t) - r_g)^+
    \end{equation*}
    where $R_t = \frac{S_t}{S_{t-1}}$, $t \in \{1, \dots, T\}$, satisfying the financial guarantees of the minimum guaranteed rate and the profit sharing mechanism. The appreciated value is then
    \begin{equation}
    \label{eq:alm_update_mr_tilde}
        \widetilde{MR}_t := MR_{t-1} (1 + r_s(t))
    \end{equation}
    
    \item A proportion $p$ of policyholders dies during the year, triggering a termination of their contract. The insurer therefore must pay
    \begin{equation*}
        \widetilde{MR}_t p
    \end{equation*}
    to the deceased policyholders. This requires to sell
    \begin{equation*}
        \Delta \phi_t := \frac{-\widetilde{MR}_t p}{S_t}
    \end{equation*}
    share of the stock to provide the liquidity. The insurance company is left with $\phi_t := \phi_{t-1} + \Delta \phi_t$ shares of the stock.

    \item The value of the remaining policyholders' savings is then updated by 
    \begin{equation}
    \label{eq:alm_update_mr}
        MR_t := \widetilde{MR}_{t} (1 - p)
    \end{equation}
\end{enumerate}
At termination date $t = T$, the same steps are carried but with $p = 1$ (as all policyholders terminates their contract regardless if they died or not). The shareholders of the insurance company are then left with $\phi_T S_T \in \mathbb{R}$ from the liquidation of the remaining asset shares.

\subsection{Solvency monitoring of the portfolio}
An insurance company is said to be solvent if its Own Funds are positive ; here the OF is defined as the discounted risk-neutral expectation of the future cash-flows generated by the contract conditionally to current information :
\begin{equation*}
    \forall 0 \leq t \leq T, \quad OF_t := \mathbb{E}_{\mathbb{Q}}[e^{-r(T-t)} \phi_T S_T | \mathcal{F}_t]
\end{equation*}
The OF in the Solvency II framework represents the difference between the market value of the assets of the insurance company and the value a rationale agent would consent to bear the responsibility of the liabilities of the company.

At time $t = 0, \dots, T$, the SCR can then be defined as a $99.5\%$ quantile on the distribution of $-OF_{t+1}$ conditionally to $\mathcal{F}_t$ :
\begin{equation*}
    SCR_t := OF_{t} + \inf \{ x \in \mathbb{R} : \mathbb{P}(-OF_{t+1} \leq x | \mathcal{F}_t) \geq 99.5\% \} = q_{99.5\%}(L_t | \mathcal{F}_t)
\end{equation*}
where $L_t = OF_t - OF_{t+1}$ is the loss in Own Funds over one year in the future. The following proposition show that, in our model, $OF_t$ can be explicitly computed based on the state variables $\phi_t$, $MR_t$, $S_t$, as well as on the trajectory $(s_0, S_1, \dots, S_t)$. The proof of this proposition is based on technical Lemmas presented in Appendix \ref{apx:technical_lemma_of}. For convenience we first introduce the following notations :
\begin{definition}
    Let $(d_u)_{u \in\{1, \dots, T\}}$ be the sequence valued in $\mathbb{R}$ defined by,
    \begin{equation*}
        d_u := \begin{cases}
            p & u \in \{1, \dots, T-1\} \\
            1 & u = T
        \end{cases}
    \end{equation*}
    and call $(d_u)_{u \in\{1, \dots, T\}}$ the sequence of exit rates. Further we define, $z := 1 + r_g + \gamma \sigma (\Phi^{'}(d) + d \Phi(d))$ where $d := \frac{r - \frac{\sigma^2}{2} - \frac{r_g}{\gamma}}{\sigma} $.
\end{definition}

\begin{proposition}
\label{prop:closed_formula_neg_OF}
    Let $t \in \{0, \dots, T\}$,
    \begin{equation*}
        OF_t = \mathbb{E}_{\mathbb{Q}}[e^{-r(T-t)} \phi_T S_T |\mathcal{F}_t] = \phi_t S_t - MR_t \left[p \sum_{u = t}^{T - 1} e^{-r(u-t)} (1 - p)^{u - t - 1} z^{u - t} + e^{-r(T-t)} (1 - p)^{T - t - 1} z^{T - t} \right]
    \end{equation*}
    In particular, letting $\psi_t$ be the function from $\mathbb{R}^{t+1}$ to $\mathbb{R}$ defined by,
    \begin{equation*}
        \forall x = (x_0, x_1, \dots, x_t) \in (0, +\infty)^{t+1}, \; \psi_t(x) = g_t(x) x_t-  f_t(x) \left[p \sum_{u = t}^{T - 1} e^{-r(u-t)} (1 - p)^{u - t - 1} z^{u - t} + e^{-r(T-t)} (1 - p)^{T - t - 1} z^{T - t} \right] \,,
    \end{equation*}
    where 
    \begin{equation*}
        \forall x = (x_0, \dots, x_t) \in (0, +\infty)^{t+1}, \quad f_t(x) = MR_0 \prod_{u = 1}^{t} (1 - d_u) (1 + \rho(x_{u-1}, x_u)) \,,
    \end{equation*}
    \begin{equation*}
        \forall x = (x_0, x_1, \dots, x_t) \in (0, +\infty)^{t+1}, \quad g_t(x) = \phi_0 - MR_0 \sum_{i = 1}^{t} \frac{d_i}{x_i} \prod_{j = u + 1}^{i-1} (1 - d_j) \prod_{j = u + 1}^i (1 + \rho(x_{j-1}, x_j)) \,,
    \end{equation*}
    and
    \begin{equation*}
        \forall (x, x') \in (0, +\infty)^{2}, \quad \rho(x, x') = \max\left(r_g, \gamma\ln\left(\frac{x'}{x} \right) \right) \,,
    \end{equation*}
    then $\mathbb{E}_{\mathbb{Q}}[e^{-r(T-t)} \phi_T S_T |\mathcal{F}_t] = \psi_t(s_0, S_1, \dots, S_t)$ \,.
\end{proposition}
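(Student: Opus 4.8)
The plan is to unwind $\phi_T S_T$ through the yearly update rules \eqref{eq:alm_update_mr_tilde}--\eqref{eq:alm_update_mr} and then integrate the resulting expression term by term under $\mathbb{Q}$, exploiting two structural facts: the discounted stock $(e^{-ru}S_u)_{u}$ is a $\mathbb{Q}$-martingale, and the yearly returns $R_u = S_u/S_{u-1}$ are i.i.d.\ (lognormal) under $\mathbb{Q}$ and, being built from Brownian increments posterior to time $t$, independent of $\mathcal{F}_t$ for $u>t$.

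\emph{Step 1: unwinding the portfolio.} Telescoping $\phi_u = \phi_{u-1}+\Delta\phi_u$ with $\Delta\phi_u = -\widetilde{MR}_u d_u/S_u$ from $t+1$ to $T$ gives $\phi_T = \phi_t - \sum_{u=t+1}^{T}\widetilde{MR}_u d_u/S_u$, hence
\begin{equation*}
    \phi_T S_T = \phi_t S_T - \sum_{u=t+1}^{T}\widetilde{MR}_u\, d_u\,\frac{S_T}{S_u}\,.
\end{equation*}
A preliminary lemma (the reserve identity of Appendix \ref{apx:technical_lemma_of}) records the closed form obtained by iterating the reserve recursion: for $u>t$,
\begin{equation*}
    \widetilde{MR}_u = MR_t\,(1-p)^{\,u-t-1}\prod_{j=t+1}^{u}\bigl(1+r_s(j)\bigr)\,,
\end{equation*}
where $r_s(j) = \rho(S_{j-1},S_j)$ and $MR_t$ is $\mathcal{F}_t$-measurable. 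The same iteration, started at time $0$, exhibits $MR_t$, $\phi_t$ and $S_t$ as deterministic functions of $(s_0,S_1,\dots,S_t)$; these are exactly the functions $f_t$, $g_t$ and the coordinate $x_t$ appearing in the statement, once $\rho$ is identified with $r_s$.

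\emph{Step 2: conditional expectation of each term.} Apply $\mathbb{E}_{\mathbb{Q}}[e^{-r(T-t)}\,\cdot\mid\mathcal{F}_t]$ to the display above; integrability is harmless since $\phi_T S_T$ and each summand are finite products of lognormal factors and of $(\,\cdot\,)^+$ of Gaussians, hence lie in every $L^p$. The martingale property gives $\mathbb{E}_{\mathbb{Q}}[e^{-r(T-t)}S_T\mid\mathcal{F}_t]=S_t$, so the leading term contributes $\phi_t S_t$. For a summand with $u>t$, substitute the reserve identity, pull out the $\mathcal{F}_t$-measurable factor $MR_t(1-p)^{u-t-1}d_u$, and rewrite the remainder as
\begin{equation*}
    \prod_{j=t+1}^{u}\bigl(1+r_s(j)\bigr)\cdot e^{-r(T-t)}\frac{S_T}{S_u}
    \;=\; e^{-r(u-t)}\prod_{j=t+1}^{u}\bigl(1+r_s(j)\bigr)\prod_{j=u+1}^{T} e^{-r}R_j\,.
\end{equation*}
Under $\mathbb{Q}$ the first product depends only on $R_{t+1},\dots,R_u$ and the second only on $R_{u+1},\dots,R_T$, all independent of $\mathcal{F}_t$, so the conditional expectation factorizes as $e^{-r(u-t)}\bigl(\mathbb{E}_{\mathbb{Q}}[1+r_s(1)]\bigr)^{u-t}\bigl(\mathbb{E}_{\mathbb{Q}}[e^{-r}R_1]\bigr)^{T-u}$. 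The martingale property gives $\mathbb{E}_{\mathbb{Q}}[e^{-r}R_1]=1$, and $\mathbb{E}_{\mathbb{Q}}[1+r_s(1)]=z$ is the second technical lemma: since $\ln R_1\sim\mathcal{N}(r-\tfrac{\sigma^2}{2},\sigma^2)$ under $\mathbb{Q}$ and $r_s(1)=r_g+(\gamma\ln R_1-r_g)^+$, the Gaussian identity $\mathbb{E}[(G-a)^+]=(m-a)\Phi\!\bigl(\tfrac{m-a}{\sigma}\bigr)+\sigma\Phi'\!\bigl(\tfrac{m-a}{\sigma}\bigr)$ for $G\sim\mathcal{N}(m,\sigma^2)$, applied with $m=r-\tfrac{\sigma^2}{2}$ and $a=r_g/\gamma$ (so that $(m-a)/\sigma=d$), yields $\mathbb{E}_{\mathbb{Q}}[(\gamma\ln R_1-r_g)^+]=\gamma\sigma\bigl(\Phi'(d)+d\,\Phi(d)\bigr)$ and therefore $\mathbb{E}_{\mathbb{Q}}[1+r_s(1)]=1+r_g+\gamma\sigma(\Phi'(d)+d\,\Phi(d))=z$.

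\emph{Step 3: assembling.} Collecting the contributions, each $u\in\{t+1,\dots,T\}$ term is subtracted with coefficient $MR_t(1-p)^{u-t-1}d_u e^{-r(u-t)}z^{u-t}$, so
\begin{equation*}
    OF_t = \phi_t S_t - MR_t\sum_{u=t+1}^{T}(1-p)^{u-t-1}d_u\,e^{-r(u-t)}z^{u-t}\,;
\end{equation*}
isolating the $u=T$ contribution (where $d_T=1$) and using $d_u=p$ for $t<u<T$ recasts this as the bracketed expression in the statement, and substituting $MR_t=f_t(s_0,\dots,S_t)$, $\phi_t=g_t(s_0,\dots,S_t)$, $S_t=x_t$ from Step 1 gives the representation $OF_t=\psi_t(s_0,S_1,\dots,S_t)$. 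The bulk of the work is the bookkeeping of Step 1 — the products of yearly appreciation factors, the exit-rate coefficients, and the telescoping of the share process — together with the Gaussian integral behind $z$; once these are in place, the conditioning argument of Step 2 is essentially mechanical.
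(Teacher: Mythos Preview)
Your proof is correct and follows essentially the same route as the paper: unwind $\phi_T S_T$ via the telescoping of $(\phi_u)$ and the reserve recursion (the paper packages this as Lemma~\ref{lemma:rec_relation_phi_alm}), pull out the $\mathcal{F}_t$-measurable factors, factorize the remaining conditional expectation using independence of the yearly returns and the martingale property of $(e^{-ru}S_u)$, and invoke the Gaussian identity for $z$ (the paper's Lemma~\ref{lemma:z_formula_alm}). Your write-up is slightly more explicit on integrability and on the computation of $z$, but the architecture is identical.
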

\begin{proof}
    For all $t \in \{0, \dots, T\}$, from Lemma \ref{lemma:rec_relation_phi_alm},
    \begin{equation*}
        \phi_T S_t = \phi_t S_T - MR_t \sum_{i = t + 1}^{T} \frac{d_i S_T}{S_i} \prod_{j = t+1}^{i-1} (1 - d_j) \prod_{j = t+1}^{i} ( 1 + \rho(S_{j-1}, S_j))
    \end{equation*}
    Now since $\phi_t$ and $MR_t$ are $\mathcal{F}_t$-measurable,
    \begin{equation*}
        \mathbb{E}_{\mathbb{Q}}[\phi_T S_T | \mathcal{F}_t] = \phi_t \mathbb{E}_{\mathbb{Q}}[S_T | \mathcal{F}_t] - MR_t \sum_{i = t+1}^{T} d_i \prod_{j = t+1}^{i-1} (1 - d_j) \mathbb{E}_{\mathbb{Q}} \left[\frac{S_T}{S_i} \prod_{j = t+1}^{i} (1 + \rho(S_{j-1}, S_j)) | \mathcal{F}_t \right]
    \end{equation*}
    A well-known fact in the Black-Scholes model is that
    \begin{equation*}
        \mathbb{E}_{\mathbb{Q}}[S_T | \mathcal{F}_t] = S_t e^{r(T - t)} \,.
    \end{equation*}
    Furthermore, for all $i \in \{t+1, \dots, T\}$, $\frac{S_t}{S_i} \prod_{j = t+1}^{i} (1 + \rho(S_{j-1}, S_j))$ is independent from $\mathcal{F}_t$, therefore
    \begin{align*}
        \mathbb{E}_{\mathbb{Q}} \left[\frac{S_T}{S_i} \prod_{j = t+1}^{i} (1 + \rho(S_{j-1}, S_j)) | \mathcal{F}_t \right] &= \mathbb{E}_{\mathbb{Q}} \left[\frac{S_T}{S_i} \prod_{j = t+1}^{i} (1 + \rho(S_{j-1}, S_j)) \right] \\
        &= \mathbb{E}_{\mathbb{Q}} \left[\frac{S_T}{S_i} \right] \prod_{j = t+1}^{i} \mathbb{E}[1 + \rho(S_{j-1}, S_j)] \\
        &= e^{T-i} z^{i - t}
    \end{align*}
    where the second equality comes from the independence of the random variable, and the last from Lemma \ref{lemma:z_formula_alm}. In turns we get
    \begin{equation*}
        \mathbb{E}_{\mathbb{Q}}[e^{-r(T-t)}\phi_T S_T | \mathcal{F}_t] = \phi_t S_t - MR_t \sum_{i = t+1}^{T} d_i e^{(i-t)} z^{i-t} \prod_{j = t+1}^{i-1} (1 - d_j)
    \end{equation*}
    Using that $d_i = p$ for $i \in \{0, \dots, T-1\}$ and $d_T = 1$ then give the first claim. The second claim is a direct application of the first claim with Lemma \ref{lemma:mr_rec_formula_alm} and Lemma \ref{lemma:rec_relation_phi_alm}.
\end{proof}

From now on, to simplify we will set ourselves in the case $t = 0$. Notice then that $L_1 = \psi_0(s_0) - \psi_1(s_0, S_1)$ and recalling that $s_0$ is deterministic we let $\psi$ be the (deterministic) function from $(0, +\infty)$ to $\mathbb{R}$ defined by,
    \begin{equation*}
        \forall x \in (0, + \infty), \quad \psi(x) := \psi_0(s_0) - \psi_1(s_0, x) \,,
    \end{equation*}
    then $\psi$ is such that $L_1 = \psi(S_1) \; a.s$. \\
    
    To relate this framework to the context of the notation of the nested MC framework used in previous sections, let $X$ be a random variable whose distribution under $\mathbb{P}$ is the same as that of $S_1$ under $\mathbb{P}$. Define $U = (U_1, \dots, U_{T-1})$ as a random vector, independent of $X$, whose distribution under $\mathbb{P}$ matches that of the independent Gaussian increments $(W_{2}^{\mathbb{Q}} - W^{\mathbb{Q}}_{1}, \dots, W_{T}^{\mathbb{Q}} - W_{T-1}^{\mathbb{Q}})$ under $\mathbb{Q}$. For all $t > 0$, we let $\phi_{x, t}$ be the functional mapping $t$ Gaussian increments to the value of the stock at time $t$ in the Risk-Neutral Black-Scholes model, starting from $x$ at time $0$. Now, define the function $F$ for $x \in (0, +\infty)$, $u = (u_1, \dots u_{T-1}) \in (0, +\infty)^{T-1}$ by
    \begin{equation*}
        F(x, u) = \psi_0(s_0) - e^{-r(T-t)} g_T(s_0, x, \phi_{x, 1}(u_1), \dots, \phi_{x, T-1}(u_1, \dots, u_{T-1})) \phi_{x, T-1}(u_1, \dots, u_{T-1})
    \end{equation*}
    where $g_T$ (as defined in \ref{eq:phi_eq_g}) maps stock price trajectories to the terminal number of stock shares. With these definitions, we have $L_1 = \mathbb{E}[F(X, U) | X]$ where $X$ and $U$ are independent. The propositions below give us a closed formula for $SCR_0$ that we will use as reference value for our numerical experiments.
\begin{proposition}
\label{prop:decreasing_condition_psi}
    The function $\psi$ is decreasing on $(0, x_1]$ and $[x_2, +\infty)$ where
    \begin{equation*}
    \begin{cases}
        x_1 := s_0 e^{\frac{r_g}{\gamma}} \\
        x_2 := s_0 \gamma \left((1 - p) \left[p\sum_{t = 1}^{T-2} e^{-rt} (1 - p)^{t - 1} z^t + e^{-r(T-1)}(1-p)^{T-2} z^{T-1} \right] + p \right)
    \end{cases}
    \end{equation*}
    In particular if $x_1 \geq x_2$ the value function $\psi$ is non-increasing on $\mathbb{R}^*_+$.
\end{proposition}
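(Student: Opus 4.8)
The plan is to reduce the statement to the elementary monotonicity of the one-variable map $x \mapsto \psi_1(s_0, x)$. Indeed, by construction $\psi(x) = \psi_0(s_0) - \psi_1(s_0, x)$ and $\psi_0(s_0)$ is a constant, so $\psi$ is decreasing on an interval exactly when $x \mapsto \psi_1(s_0, x)$ is increasing on it.

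First I would produce a closed form for $\psi_1(s_0, \cdot)$. Applying Proposition~\ref{prop:closed_formula_neg_OF} at $t = 1$ — equivalently, unwinding one step of the ALM recursion, $\widetilde{MR}_1 = MR_0(1 + \rho(s_0, S_1))$, $\phi_1 S_1 = \phi_0 S_1 - p\,\widetilde{MR}_1$, $MR_1 = (1-p)\widetilde{MR}_1$ — and using $\phi_0 = MR_0/s_0$, one obtains
\[
    \psi_1(s_0, x) = \frac{MR_0}{s_0}\, x \;-\; MR_0\,\big(1 + \rho(s_0, x)\big)\, B,
\]
where $B > 0$ is the finite, $x$-independent constant obtained by combining the bracketed discounted geometric sum of Proposition~\ref{prop:closed_formula_neg_OF} at level $t=1$ with the prefactor coming from $\phi_1 S_1$ and $MR_1$. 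Matching indices with the definition of $x_2$ then yields the key identity $x_2 = s_0\,\gamma\, B$.

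Next I would split on the two branches of $\rho(s_0, x) = \max(r_g, \gamma\ln(x/s_0))$. On $(0, x_1]$, with $x_1 = s_0 e^{r_g/\gamma}$, we have $\rho(s_0, x) = r_g$, so $\psi_1(s_0, x) = \frac{MR_0}{s_0}x - MR_0(1+r_g)B$ is affine with slope $MR_0/s_0 > 0$, hence strictly increasing; therefore $\psi$ is decreasing on $(0, x_1]$. On $[x_1, +\infty)$ we have $\rho(s_0, x) = \gamma\ln(x/s_0)$, so $\psi_1(s_0, \cdot)$ is differentiable there with
\[
    \frac{\partial}{\partial x}\psi_1(s_0, x) = \frac{MR_0}{s_0} - \frac{MR_0\, B\, \gamma}{x} = \frac{MR_0}{s_0\, x}\,\big(x - s_0\gamma B\big) = \frac{MR_0}{s_0\, x}\,(x - x_2),
\]
which is nonnegative for $x \ge x_2$. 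Hence $\psi_1(s_0,\cdot)$ is increasing on $[\max(x_1, x_2), +\infty)$; and in the remaining case $x_2 < x_1$ it is also increasing on $[x_2, x_1]$ by the affine branch, so in all cases $\psi_1(s_0,\cdot)$ is increasing on $[x_2, +\infty)$, i.e.\ $\psi$ is decreasing there. Finally, if $x_1 \ge x_2$ then $(0, x_1] \cup [x_2, +\infty) = \mathbb{R}^*_+$ with the two intervals overlapping (at least at $x_1$), so $\psi$ is non-increasing on all of $\mathbb{R}^*_+$.

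The derivative computation and the case split are routine; the one genuinely fiddly point is the first step, where one must carefully track the exit-rate convention ($d_u = p$ for $u < T$ and $d_T = 1$), the discount factors, and the powers of $(1-p)$ and $z$ in the geometric sums defining $OF_1$, and verify that the constant $B$ thereby produced equals exactly $x_2/(s_0\gamma)$.
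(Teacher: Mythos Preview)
Your proposal is correct and follows essentially the same route as the paper: split on the two branches of $\rho(s_0,\cdot)$ at the threshold $x_1=s_0e^{r_g/\gamma}$, compute the derivative on each branch, and read off the sign. Your explicit identification $\psi_1(s_0,x)=\phi_0 x - MR_0(1+\rho(s_0,x))B$ with $x_2=s_0\gamma B$ is in fact tidier than the paper's displayed derivative (which drops the $p+(1-p)\cdot$ prefactor in the bracket), and your handling of the overlap when $x_2<x_1$ via the affine branch is equivalent to the paper's appeal to continuity at $x_1$.
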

\begin{proof}
    A straightforward calculation shows that $\psi$ is differentiable on $(0, x_1)$ and $(x_1, +\infty)$ with
    \begin{equation*}
        \psi'(x) = \begin{cases}
            - \phi_0 & x < x_1 \\
            MR_0\frac{\gamma}{x} \left[p\sum_{t = 1}^{T-2} e^{-rt} (1 - p)^{t - 1} z^t + e^{-r(T-1)}(1-p)^{T-2} z^{T-1}  \right] - \phi_0 & x_1 < x \\
        \end{cases}
    \end{equation*}
    Therefore when $x < x_1$, and when $x_2 \geq x$ we have $\psi'(x) \leq 0$, noting that $\psi$ is continuous at $x_1$ we get the result.
\end{proof}

\begin{proposition}
    With the notations of Proposition \ref{prop:decreasing_condition_psi}, if $x_1 \geq x_2$ we have
    \begin{equation*}
        SCR_0 = q_{1 - \alpha}(\psi(S_1)) = \psi \left(s_0 \exp \left(\mu - \frac{\sigma^2}{2} + \sigma \Phi^{-1}(\alpha) \right) \right)
    \end{equation*}
    with $\alpha = 0.5\%$
\end{proposition}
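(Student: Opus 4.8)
The plan is to reduce the statement to a change-of-variable rule for quantiles under a monotone map. Recall from the construction of the model that $L_1 = \psi(S_1)$ almost surely, and that at $t=0$ the filtration $\mathcal{F}_0$ is trivial, so that $SCR_0 = q_{99.5\%}(L_1) = q_{1-\alpha}(\psi(S_1))$ with $\alpha = 0.5\%$. Hence it suffices to identify the $(1-\alpha)$-quantile of $\psi(S_1)$.

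First I would record the two structural facts to be used. On the one hand, under the hypothesis $x_1 \geq x_2$, Proposition \ref{prop:decreasing_condition_psi} gives that $\psi$ is non-increasing on $(0,+\infty)$; moreover $\psi$ is continuous there, since it is built from the continuous maps $\phi_{x,t}$, $g_T$ and $\rho$ (continuity at the junction point $x_1$ was already observed in the proof of Proposition \ref{prop:decreasing_condition_psi}). On the other hand, under $\mathbb{P}$ we have $S_1 = s_0\exp\big((\mu - \tfrac{\sigma^2}{2}) + \sigma W_1^{\mathbb{P}}\big)$ with $W_1^{\mathbb{P}} \sim \mathcal{N}(0,1)$, so $S_1$ has a continuous and strictly increasing cumulative distribution function; since $x \mapsto s_0\exp\big((\mu-\tfrac{\sigma^2}{2}) + \sigma x\big)$ is increasing, the $\alpha$-quantile of $S_1$ is exactly $z_\alpha := s_0\exp\big((\mu-\tfrac{\sigma^2}{2}) + \sigma\Phi^{-1}(\alpha)\big)$, and this value satisfies $\mathbb{P}(S_1 \leq z_\alpha) = \mathbb{P}(S_1 < z_\alpha) = \alpha$ and $\mathbb{P}(S_1 < z_\alpha + \delta) > \alpha$ for every $\delta > 0$.

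The core of the argument is then the identity $q_{1-\alpha}\big(\psi(S_1)\big) = \psi(z_\alpha)$, proved by two inclusions. For $q_{1-\alpha}(\psi(S_1)) \leq \psi(z_\alpha)$: since $\psi$ is non-increasing, $\{S_1 \geq z_\alpha\} \subseteq \{\psi(S_1) \leq \psi(z_\alpha)\}$, hence $\mathbb{P}(\psi(S_1) \leq \psi(z_\alpha)) \geq \mathbb{P}(S_1 \geq z_\alpha) = 1-\alpha$, so $\psi(z_\alpha)$ belongs to the set defining the quantile. For the reverse inequality, fix $y < \psi(z_\alpha)$; by continuity of $\psi$ at $z_\alpha$ there is $\delta > 0$ with $\psi(z) > y$ on $(z_\alpha - \delta, z_\alpha + \delta)$, and together with $\psi(z) \geq \psi(z_\alpha) > y$ for $z \leq z_\alpha$ this yields $\{S_1 < z_\alpha + \delta\} \subseteq \{\psi(S_1) > y\}$, i.e. $\{\psi(S_1) \leq y\} \subseteq \{S_1 \geq z_\alpha + \delta\}$. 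Therefore $\mathbb{P}(\psi(S_1) \leq y) \leq 1 - \mathbb{P}(S_1 < z_\alpha + \delta) < 1 - \alpha$, so no $y < \psi(z_\alpha)$ can be the quantile, giving $q_{1-\alpha}(\psi(S_1)) \geq \psi(z_\alpha)$. Substituting the explicit value of $z_\alpha$ concludes.

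The step I expect to require the most care is this quantile change-of-variable: because $\psi$ is only non-increasing and not strictly decreasing — it is affine in $\log x$ on $(0,x_1]$ and of the form $\text{const}\cdot\log x - \phi_0 x$ beyond $x_1$, hence could a priori be locally flat — one cannot simply invert $\psi$, and the argument must lean on the continuity of $\psi$ combined with the strict monotonicity of the law of $S_1$ to exclude a smaller admissible value. Everything else (identifying $SCR_0$ with $q_{1-\alpha}(L_1)$ at $t=0$, and the lognormal quantile formula) is routine.
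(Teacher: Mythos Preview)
Your proof is correct and follows the same approach as the paper's, which simply states that the result is an immediate consequence of $\psi$ being non-increasing together with $u \mapsto s_0\exp\big((\mu - \tfrac{\sigma^2}{2}) + \sigma u\big)$ being non-decreasing. You have supplied the details of the quantile change-of-variable that the paper leaves implicit, including the care needed when $\psi$ is only non-increasing rather than strictly decreasing.
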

\begin{proof}
    This is an immediate consequence of the non-increasing property of $x \mapsto \psi(x)$ and the non-decreasing property of $u \mapsto s_0 \exp \left(\mu - \frac{\sigma^2}{2} + \sigma u \right)$.
\end{proof}
In our numerical experiments, we will use the parameters reported in Table \ref{tab:params_alm} which yields $q_{99.5\%} \approx 252.76$.

In Figure \ref{fig:hist_and_psi_x} a) we show an histogram of exact sampling of $L_1$ and in Figure \ref{fig:hist_and_psi_x} b) we represent $\psi$. We observe that the loss grows as the stock perform poorly during the first year, with a comparatively steep increase when going below the 100 mark. This result in no loss in own-fund for most of scenarios and a comparatively severe loss in extreme downward stock scenarios.

\begin{figure}[H]
    \centering
    \subfloat[\centering Density estimation of $L_1$]{{\includegraphics[width=.4 \textwidth]{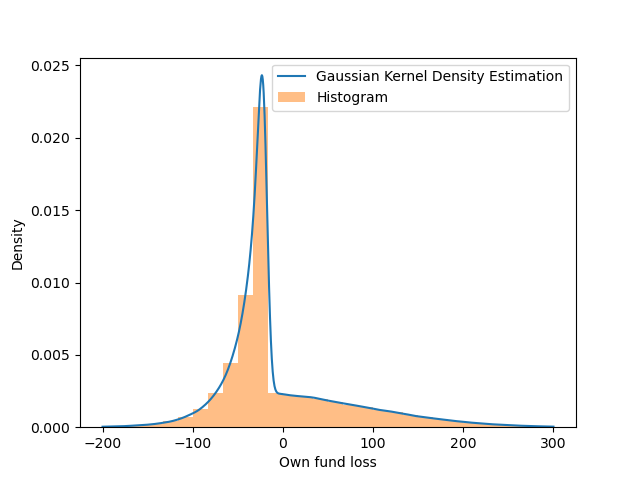} }}
    \qquad
    \subfloat[\centering $\psi(x)$ as function of $x$]{{\includegraphics[width=.4 \textwidth]{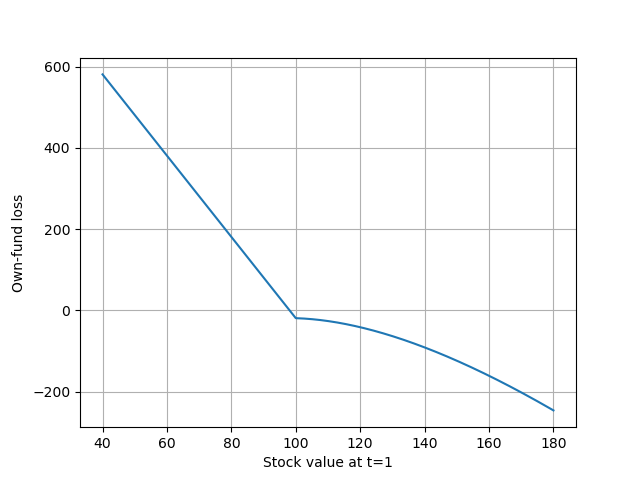} }}
    \caption{Characteristics of the own-fund loss}
    \label{fig:hist_and_psi_x}
\end{figure}

\section{Numerical experiments}
\label{sec:num_appli}
\subsection{Structural constants in the framework}
In this section we conduct a pre-processing where we estimate the structural constants of the framework. Here we do not restrict ourselves on the simulation budget used to estimate these constants. In practice only a fraction of the total computational budget should be used.

\subsubsection{Bias structural constants}
Following $(\mathrm{WE}_{1})$, we begin with the the structural constant $c_1$, based on the relation~:
\begin{equation*}
    \forall K \in \mathbb{N}, \quad \mathbb{E}[\Delta Y^A_{2K}] = \mathbb{E}[Y_{2K} - Y_K] \approx -\frac{c_1}{2K}
\end{equation*}
and with the structural constant $c_2$ based on the relation :
\begin{equation*}
    \forall K \in \mathbb{N}, \quad \mathbb{E}[2Y_{4K} - 3Y_{2K} + Y_K] \approx \frac{6 c_2}{(4K)^2}
\end{equation*}
In our setting choosing $c_1 = 0.025$ and $c_2 = 0.05$ seems empirically appropriate (see Figure \ref{fig:structural_const_plots} a) and b) ). Going back to Remark \ref{remark:in_practice_coef_cr}, we make the assumption that the sequence of coefficient $(c_r)_{r \in \mathbb{N}}$ follows the parametric form $c_r = c_1 a^{r - 1}$ where we choose $a = \frac{c_2}{c_1} = 2$. In particular this leads to $\tilde{c} = a = 2$. Notice however that $c_2$ is very hard to estimate (as shown by the large confidence interval in Figure \ref{fig:structural_const_plots} b) ) and is mostly infeasible in practice, therefore one must often blindly choose $a$. Our recommendation is to take typically $a$ to be $2$ or $3$.

\subsubsection{Variance structural constants}
The structural constant $\bar{\sigma}_1^2$ is choose based on Proposition \ref{prop:sigma_1_expansion}, namely we consider $\bar{\sigma}_1^2 \approx I(1-I) \approx 0.5\%$. Following $(\mathrm{Var}_{1/2})$ the structural constant $V_1$ is based on the relation:
\begin{equation*}
    \forall K \in \mathbb{N}, \quad \text{Var}[\Delta Y_{2K}] \approx \frac{V_1}{\sqrt{2K}}
\end{equation*}
Note that this constant depend on whether we consider the antithetic version of the level, $\Delta Y_{2K}^{A}$ or the non-antithetic version $\Delta Y^{S}_{K_r}$. In Figure \ref{fig:structural_const_plots} c) we plot the estimated relation in the framework for both the antithetic level and non antithetic level. In the end we see that $V^{A}_1 = 0.01$ with antithetic sampling and $V^{S}_1 = 0.02$ without are appropriate. As an illustration of Theorem \ref{thm:systematic_antithetic}, the antithetic sampling divide roughly by 2 the variance on each levels despite the fact that we are using indicator function payoff. 

In Table \ref{tab:summary_structural_constants} we summarize the structural parameters used in the experimentation. Notice how $\bar{\sigma}_1^{2}$ and $V_{1}$ are of the same order of magnitude, this is a common phenomenon in indicator function $f$ framework. This entail that each additional levels in MLMC estimators are (comparatively with the variance on the first level) adding a significant amount of variance. Therefore qualitatively, we expect to use a rather low number of levels.

\begin{remark}
    Since the structural constants are estimated using $f(x) = \mathbbm{1}_{x \leq q_{99.5\%}}$, if we are interested in estimating the quantile in practice a rough a-priori value for $q_{99.5\%}$ is required in order to estimate the strctural constants.
\end{remark}

\begingroup
\setlength{\tabcolsep}{10pt}
\renewcommand{\arraystretch}{1.5}
\begin{table}[H]
\centering
 \begin{tabular}{c | c} 
 \hline
 Parameter & Value \\
 \hline
 Risk free rate $r$ & $5\%$  \\
 Stock vol $\sigma$ & $15\%$ \\
 Stock drift $\mu$ & $8\%$ \\
 Stock initial value $s_0$ & $100$ \\
 \hline
 Horizon $T$ & $10$ years \\
 Minimum guaranteed rate $r_g$ & $0 \%$ \\
 Profit-sharing rate $\gamma$ & $85\%$ \\
 Death rate $p$ & $2\%$ \\
 Initial Mathematical Reserve $MR_0$ & 1000 \\
 \hline
 \end{tabular}
\vspace{0.5cm}
\caption{Parameters for the numerical experiments}
\label{tab:params_alm}
\end{table}
\endgroup

\begin{figure}
    \centering
    \subfloat[\centering First order structural bias plot. In blue the empirical estimation of $|\mathbb{E} (Y_{2K} - Y_K)|$. In faded blue the 95\% confidence interval. In orange the proxy relation for $c_1 = 0.025$.]{{\includegraphics[width=.45 \textwidth]{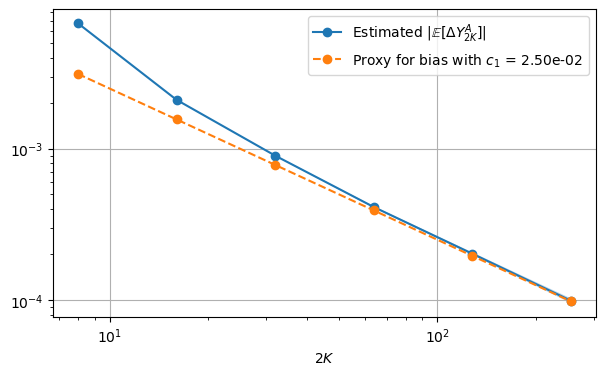} }}
    \qquad
    \subfloat[\centering Second order structural bias plot. In blue the empirical estimation of $|\mathbb{E} (2Y_{4K} -3 Y_{2K} + Y_K)|$. In faded blue the 95\% confidence interval. In orange the proxy relation for $c_2 = 0.05$.]{{\includegraphics[width=.45 \textwidth]{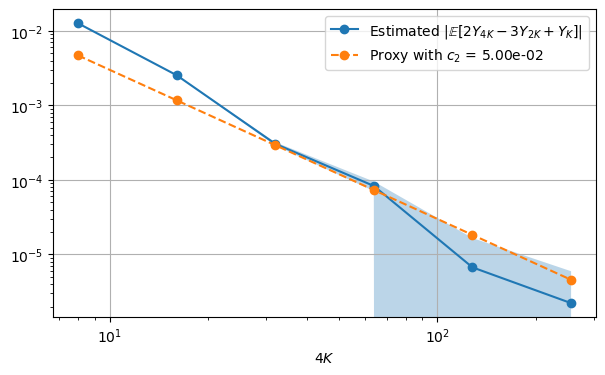} }}
    \subfloat[\centering Variance structural bias plot. In solid blue the empirical estimation of $\mathrm{Var}(\Delta Y^{A}_{2K})$ and in dashed blue the proxy relation for $V_1^{A} = 0.01$. In solid orange the empirical estimation of $\mathrm{Var}(\Delta Y^{S}_{2K})$ and in dashed orange the proxy relation for $V_1^{S} = 0.02$.]{{\includegraphics[width=.45 \textwidth]{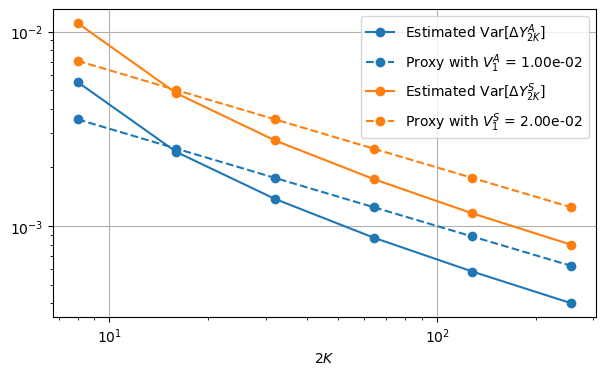} }}
    \caption{Structural constants plots.}
    \label{fig:structural_const_plots}
\end{figure}

\begingroup
\setlength{\tabcolsep}{10pt}
\renewcommand{\arraystretch}{1.5}
\begin{table}[H]
    \centering
    \begin{tabular}{c | c }
        \hline
        Structural constant & Value \\
        \hline
        First bias coefficient $c_1$ & $0.025$\\
        Bias coefficient geometric growth $a$ & $2$\\
        Variance decay constant with antithetic sampling $V^{A}_1$ & $0.010$ \\
        Variance decay constant without antithetic sampling $V^{C}_1$ & $0.020$\\
        Variance at the first level $\bar{\sigma}_1^2$ & $0.5\%$ \\
        \hline
    \end{tabular}
    \vspace{0.5cm}
    \caption{Summary of the structural constants in the framework.}
    \label{tab:summary_structural_constants}
\end{table}
\endgroup

\subsection{Methodology and Benchmarking}
\label{sec:num_methodo_benchmarking}
For all prescribed precision $\varepsilon > 0$, we consider 5 estimators :
\begin{enumerate}
    \item the standard nested Monte Carlo estimator that consist of using the parameters in Table \ref{tab:num_optim_mlmc_parameters} where $R^*(\varepsilon) = 1$ is forced regardless of $\varepsilon$
    \item the weighted Multi-level Monte Carlo (ML2R) estimator with the parameters of Table \ref{tab:num_optim_mlmc_parameters} left column
    \item the classical Multi-level Monte Carlo (MLC) estimator with the parameters of Table \ref{tab:num_optim_mlmc_parameters} right column
    \item the ML2R estimator with the standard parameters of Table \ref{tab:closed_mlmc_parameters_v2} left column
    \item the MLMC estimator with the standard parameters of Table \ref{tab:closed_mlmc_parameters_v2} right column
\end{enumerate}

The methodology is the following, we consider a sequence of target precision $(\varepsilon_n)_{n \in \{1, \dots, N\}}$. For each estimator $\hat{I}$ and each $n \in \{1, \dots, N\}$ we compute the corresponding parameters $\theta_n = \theta_{\varepsilon_n} \in \Theta$. The computational cost associated with each set of parameters is then evaluated.

For each set of parameter $\theta_n$, we perform $M \in \mathbb{N}$ independent estimation of $I$ and $q_{99.5\%}$ by generating samples according to the specified parameters. Using these estimations, we calculate the empirical RMSE. Importantly, for each sample, we simultaneously estimate both $I$ and $q_{99.5\%}$ from the same sample. To illustrate we report in Algorithm \ref{alg:simulatenous_estim_ml2r}, the pseudo-code for the simultaneous estimation in the ML2R case with antithetic sampling. Although the root-finding step in Algorithm \ref{alg:simulatenous_estim_ml2r} incurs additional computational cost that increases with $J$, we disregard this expense since, in practical applications, it would be negligible compared to the cost of the sampling step.

\begin{algorithm}
\caption{Estimate simultaneously $I$ and $q_{99.5\%}$ with an antithetic ML2R.}
\begin{algorithmic} 
\Require $\theta = (J, q, K, R) \in \Theta$, $u$ the c.d.f. evaluation point, $p$ the quantile level
\State $K_1 \gets \lceil K \rceil$
\State $J_1 \gets \lceil J q_1 \rceil$
\State $E_{K_1}$ be an i.i.d sample of $\hat{E}_{K_1}(X)$ with size $J_1$
\For{$r = 2, ..., R$}
\State $K_r \gets \lceil K \rceil 2^{r-1}$
\State $J_r \gets \lceil J q_r \rceil$
\State $(E^f_{K_r}, E_{K_r}^{c}, E_{K_r}^{c'})$ be an i.i.d sample of $(\hat{E}^f_{K_r}(X), \hat{E}^{c}_{K_r}(X), \hat{E}^{c'}_{K_r}(X))$ with size $J_r$
\EndFor
\Procedure{Evaluate}{v}
\State $\hat{I} \gets \frac{1}{J_1} \sum_{j = 1}^{J_1} \mathbbm{1}_{E_{K_1}[j] \leq v}$
\For{$r = 2, ..., R$}
\State $\hat{I} \gets \hat{I} + \frac{W_r^R}{J_r} \sum_{j = 1}^{J_r} \mathbbm{1}_{E^f_{K_r}[j] \leq v} - \frac{1}{2}(\mathbbm{1}_{E^c_{K_r}[j] \leq v} + \mathbbm{1}_{E^{c'}    _{K_r}[j] \leq v})$
\EndFor
\State \Return $\hat{I}$
\EndProcedure
\State $\hat{I} \gets$ \Call{Evaluate}{$u$}
\State $\hat{q} \gets$ a root of $v \mapsto$ (\Call{Evaluate}{$v$}$ - p$)
\State \Return $(\hat{I}, \hat{q})$
\end{algorithmic}
\label{alg:simulatenous_estim_ml2r}
\end{algorithm}

In Figure~\ref{fig:all_estimator_benchmark}, we present the computational cost of each estimator as a function of the achieved empirical RMSE for both target quantities. We observe that the performance of each estimator is consistent across the two tasks. The weighted Multi-level Monte Carlo estimator performs clearly better than the other estimators. We also notice that the new parameters from Table \ref{tab:num_optim_mlmc_parameters} increase the performance on lower computational budget while keeping the increased performance on the highest budgets. Contrary to the standard parameters of Table \ref{tab:closed_mlmc_parameters_v2}, the new parameters correctly select $R = 1$ when appropriate (here roughly for computational budgets lower than $10^7$). In other words the MLMC with optimized parameters are always at least as efficient as a standard Nested MC. Notice that, at the highest achieved precisions, the ML2R estimator with parameters calibrated using Table \ref{tab:num_optim_mlmc_parameters} (orange curve) is approximately 3 to 4 times more efficient than the traditional nested Monte Carlo (MC) method (blue curve). Additionally, observe that the MLMC method without weights (red curve) does not outperform the traditional nested MC in this experiment. This supports the theoretical view that the use of weights is crucial when dealing with indicator functions.

\begin{figure}[H]
    \centering
    \subfloat[\centering Pointwise c.d.f. estimation at 99.5\% level. ]{{\includegraphics[width=.45 \textwidth]{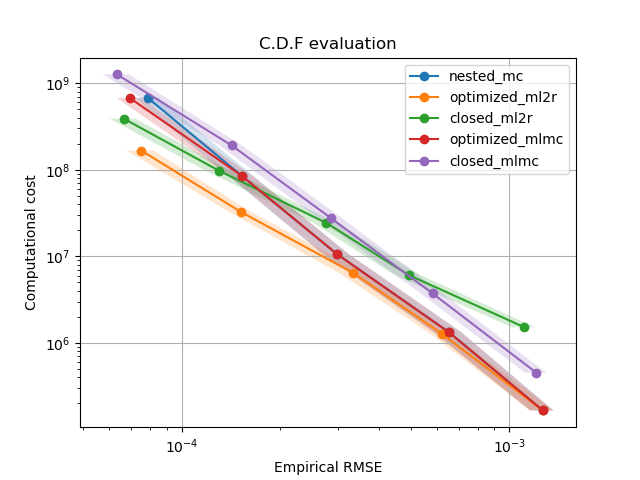} }}
    \subfloat[\centering 99.5\% Quantile estimation.]{{\includegraphics[width=.45 \textwidth]{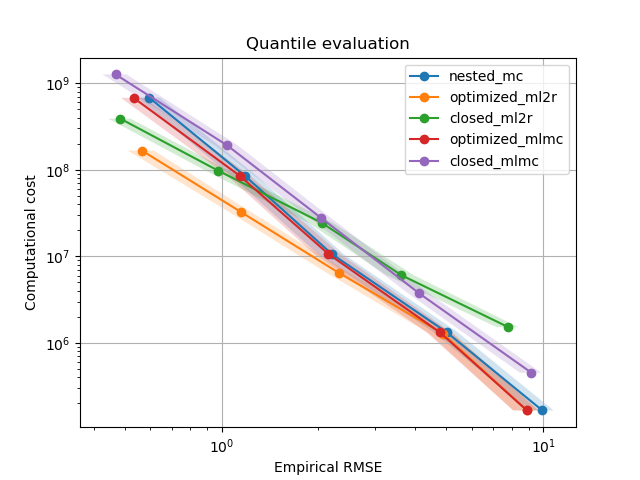} }}
    \vspace{0.5cm}
    \caption{Comparison of different computational complexity : Empirical RMSE against computational cost. In faded color the 95\% confidence interval on the RMSE. In blue, the traditional Nested MC estimator calibrated with Table \ref{tab:num_optim_mlmc_parameters} and R = 1. In orange the ML2R estimator calibrated with Table \ref{tab:num_optim_mlmc_parameters}. In green the ML2R estimator calibrated with Table \ref{tab:closed_mlmc_parameters_v2}. In red the standard MLMC estimator calibrated with Table \ref{tab:num_optim_mlmc_parameters}. In purple the standard MLMC estimator calibrated with Table \ref{tab:closed_mlmc_parameters_v2}.
    \label{fig:all_estimator_benchmark}}
\end{figure}

\subsection{Influence of $\tau$ on efficiency.}
\label{sec:tau_num_exp}
In this section, we compare the robustness with respect to the parameter $\tau$ of an ML2R estimator calibrated with Table \ref{tab:closed_mlmc_parameters_v2} to an ML2R calibrated with Table \ref{tab:num_optim_mlmc_parameters}. The methodology consist of considering a fixed computational budget $C \in \mathbb{N}$ and an increasing sequence $(\tau_{n})_{n \in \{1, \dots, N\}}$ of parameter $\tau$. For each $n \in \{1, \dots, N\}$, since both $\mathcal{C}_{\tau}(\theta_{\varepsilon})$ and $\mathcal{C}_{\tau}(\theta^*_{\varepsilon})$ are increasing when $\varepsilon$ decrease, we can find with a root finding algorithm, $\varepsilon(\tau)$ and $\varepsilon^*(\tau)$ such that $\mathcal{C}_{\tau}(\theta_{\varepsilon(\tau)}) \approx \mathcal{C}_{\tau}(\theta^*_{\varepsilon^*(\tau)}) \approx C$. These parameters correspond to the best use of the given computational budget for both calibration approaches. We then compute the efficiency $e_{\tau}$ of parameters $\theta^*_{\varepsilon^*(\tau)}$ with respect to $\theta(\varepsilon_{\tau})$ with $e_{\tau} := \frac{\mathcal{M}(\theta(\varepsilon_{\tau}))}{\mathcal{M}(\theta^*(\varepsilon^*_{\tau}))}$ where both MSE are estimated empirically. In Figure \ref{fig:tau_efficiency} we plot the efficiency for varying values of $\tau$ (see also Table \ref{tab:closed_ml2r_tau} and Table \ref{tab:optim_ml2r_tau}). We observe that the efficiency is increasing when $\tau$ grows, which confirm that parameters from Table \ref{tab:num_optim_mlmc_parameters} are more robust to large $\tau$. Furthermore observe in Table \ref{tab:optim_ml2r_tau} that as $\tau$ grows, the selected $R$ from Table \ref{tab:num_optim_mlmc_parameters} decreases, confirming the discussion of Section \ref{sec:tau_analysis}. However, since the selected $R$ remains greater than 1 even for larger values of $\tau$, we see that the weighted MLMC procedure continues to be more efficient than the traditional nested estimator.

\begin{figure}[H]
    \centering
    \includegraphics[width=0.7\linewidth]{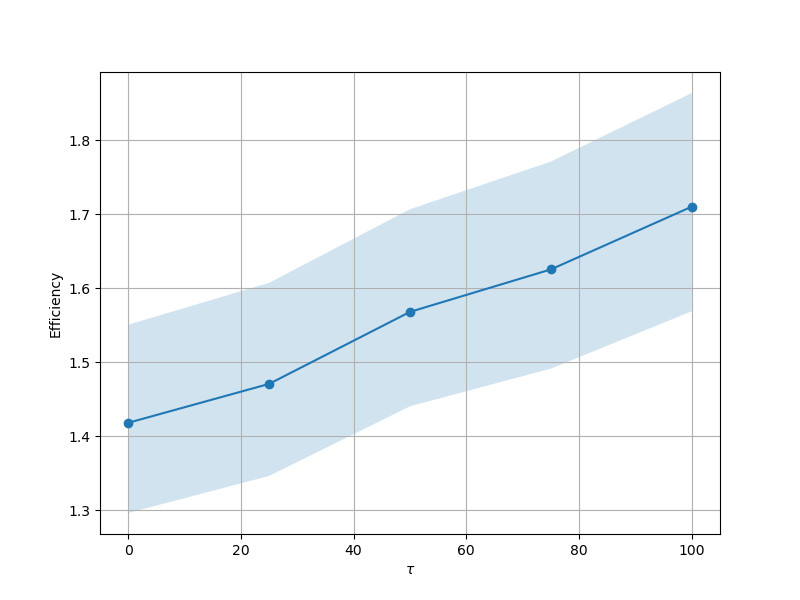}
    \caption{Efficiency of the optimized parameters of Table \ref{tab:num_optim_mlmc_parameters} against parameters of Table \ref{tab:closed_mlmc_parameters_v2} for an ML2R estimator. In faded color, the 90\% confidence interval for the efficiency.}
    \label{fig:tau_efficiency}
\end{figure}

\begingroup
\setlength{\tabcolsep}{10pt}
\renewcommand{\arraystretch}{1.5}
\begin{table}[H]
\centering
\begin{tabular}{ c c c c c c}
\hline
$\tau$ & $J$ & $K$ & $R$ & Cost & CDF RMSE \\
\hline
$0$ & $1.37 \cdot 10^{7}$ & $10$ & $4$ & $5.00 \cdot 10^{8}$ & $6.51 \cdot 10^{-5} \; (6.20 \cdot 10^{-5} - 6.80 \cdot 10^{-5})$ \\
$25$ & $8.98 \cdot 10^{6}$ & $10$ & $4$ & $5.00 \cdot 10^{8}$ & $8.24 \cdot 10^{-5} \; (7.88 \cdot 10^{-5} - 8.59 \cdot 10^{-5})$ \\
$50$ & $6.19 \cdot 10^{6}$ & $10$ & $4$ & $5.00 \cdot 10^{8}$ & $9.74 \cdot 10^{-5} \; (9.32 \cdot 10^{-5} - 1.01 \cdot 10^{-4})$ \\
$75$ & $4.73 \cdot 10^{6}$ & $10$ & $4$ & $5.00 \cdot 10^{8}$ & $1.13 \cdot 10^{-4} \; (1.08 \cdot 10^{-4} - 1.20 \cdot 10^{-4})$ \\
$100$ & $3.82 \cdot 10^{6}$ & $10$ & $4$ & $5.00 \cdot 10^{8}$ & $1.25 \cdot 10^{-4} \; (1.20 \cdot 10^{-4} - 1.30 \cdot 10^{-4})$ \\
\hline
\end{tabular}
\vspace{0.5cm}
\caption{Sensitivity of an ML2R calibrated with Table \ref{tab:closed_mlmc_parameters_v2} to varying $\tau$. In parenthesis are 95\% confidence intervals on the estimated RMSE. For conciseness parameter $q$ is omitted.}
\label{tab:closed_ml2r_tau}
\end{table}
\endgroup

\begingroup
\setlength{\tabcolsep}{10pt}
\renewcommand{\arraystretch}{1.5}
\begin{table}[H]
\centering
\begin{tabular}{ c c c c c c}
\hline
$\tau$ & $J$ & $K$ & $R$ & Cost & CDF RMSE\\
\hline
$0$ & $2.23 \cdot 10^{7}$ & $10$ & $3$ & $5.00 \cdot 10^{8}$ & $4.59 \cdot 10^{-5} \; (4.38 \cdot 10^{-5} - 4.78 \cdot 10^{-5})$ \\
$25$ & $6.30 \cdot 10^{6}$ & $38$ & $2$ & $5.00 \cdot 10^{8}$ & $5.60 \cdot 10^{-5} \; (5.34 \cdot 10^{-5} - 5.85 \cdot 10^{-5})$\\
$50$ & $4.71 \cdot 10^{6}$ & $39$ & $2$ & $5.00 \cdot 10^{8}$ & $6.21 \cdot 10^{-5} \; (5.95 \cdot 10^{-5} - 6.47 \cdot 10^{-5})$ \\
$75$ & $3.72 \cdot 10^{6}$ & $41$ & $2$ & $5.00 \cdot 10^{8}$ & $6.93 \cdot 10^{-5} \; (6.63 \cdot 10^{-5} - 7.21 \cdot 10^{-5})$ \\
$100$ & $3.08 \cdot 10^{6}$ & $43$ & $2$ & $5.00 \cdot 10^{8}$ & $7.30 \cdot 10^{-5} \; (6.98 \cdot 10^{-5} - 7.59 \cdot 10^{-5})$ \\
\hline
\end{tabular}
\vspace{0.5cm}
\caption{Sensitivity of an ML2R calibrated with Table \ref{tab:num_optim_mlmc_parameters} to varying $\tau$. In parenthesis are 95\% confidence intervals on the estimated RMSE. For conciseness parameter $q$ is omitted.}
\label{tab:optim_ml2r_tau}
\end{table}
\endgroup

\bibliographystyle{unsrt}  
\bibliography{bibliography}

\appendix
\section{Appendix}

\subsection{Computation of the weights in the ML2R}
\label{apx:details_weights}
In this section we follow the derivation of the weights in the ML2R of Pagès \cite{numProbaPages} Section 9.4 and 9.5.

For $\alpha > 0$, $R \in \mathbb{N}$ fixed, we consider $(w_1, \dots, w_R) \in \mathbb{R}^R$ the unique solution to the following Vandermonde system of equations :
\begin{equation}
    \begin{cases}
      \sum_{j = 1}^R w_j = 1 \\
      \sum_{j = 1}^R \frac{w_j}{(2^{\alpha(j-1)})^{r}}, r=1, \dots, R-1
    \end{cases} \,.
\end{equation}
Considering for $r = 2, \dots, R$, $K_r = K_1 2^{r - 1}$ and assuming that $(WE_{\alpha, R})$ holds
\begin{equation*}
    \mathbb{E} \left[ \sum_{r = 1}^R w_r Y_{K_r} \right] = \mathbb{E} \left[ Y_{\infty}\right] + \bar{W}_{R+1} \frac{c_R}{K_1^R} + o\left( \frac{1}{K_1^R}\right)
\end{equation*}
with $\bar{W}_{R+1} = \sum_{r = 1}^R \frac{w_r}{2^{\alpha r}}$. The weights effectively remove the bias up to a term of order $R$ and the solution $(w_1, \dots, w_R) \in \mathbb{R}^R$ has an explicit expression
\begin{equation*}
    \forall i \in \{1, \dots, R\}, w_i = \; \frac{(-1)^{R - i}}{\prod_{\substack{1 \leq j \leq R \\ j \neq i}} |1 - 2^{\alpha(j - i)}|} \,.
\end{equation*}
This gives (see for example \cite{lemairePages17}, Proposition 2.5)
\begin{equation}
    \mathbb{E} \left[ \sum_{r = 1}^R w_r Y_{K_r} \right] = \mathbb{E} \left[ Y_{\infty}\right] + (-1)^{R - 1} \frac{c_R}{K_1^{\alpha R} 2^{\frac{\alpha R(R-1)}{2}}} + o\left( \frac{1}{K_1^{\alpha R}}\right)
\end{equation}
Combining this Richardson-Romberg procedure with a Multi-level procedure gives the weights :
\begin{equation}
\label{eq:def_weights_ml2r}
    \forall 1 \leq i \leq R, \; W_{i} = w_i + w_{i+1} + \dots + w_R \,.
\end{equation}
Note that in particular $W_1 = 1$. Then using the Abel transform formula:
\begin{equation*}
    \mathbb{E} \left [ Y_{K_1} + \sum_{r = 2}^R W_r \left(Y_{K_r} - Y_{K_{r-1}}\right) \right] = \mathbb{E} \left[ \sum_{r = 1}^R w_r Y_{K_r} \right] \,.
\end{equation*}
This shows that if $A_r = W_r$, $r = 2, \dots, R$, then under $(WE_{\alpha, R})$, the weighted Multi-level procedure as no bias term up to the order $R$. To emphasize the dependence of the weights on $R$ we will refer to them as $(W_1^R, W_2^R, \dots, W_R^R)$ in this article. \\

\subsection{Technical Lemmas for Section \ref{sec:extension_complexity_thms}}
\label{apx:technical_lemmas_sec}

\begin{lemma}{(See \cite{Giorgi_2017}, Proposition 4.1)}
\label{lemma:K_underline}
    There exist a $\varepsilon_0 > 0$ such that for all $\varepsilon \leq \varepsilon_0$ such that $K(\varepsilon) = \underline{K}$.
\end{lemma}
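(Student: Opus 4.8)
The plan is to recast the claim as a statement about $K^+(\varepsilon)$ and then read it off the closed forms of Table~\ref{tab:closed_mlmc_parameters_v2}. Since $K(\varepsilon)=\underline{K}\,\lceil K^+(\varepsilon)/\underline{K}\rceil$ and $K^+(\varepsilon)>0$, we have $K(\varepsilon)=\underline{K}$ if and only if $K^+(\varepsilon)\le\underline{K}$; so it suffices to exhibit $\varepsilon_0>0$ with $K^+(\varepsilon)\le\underline{K}$ for all $\varepsilon\in(0,\varepsilon_0]$. The first thing to record is that $R(\varepsilon)\to+\infty$ as $\varepsilon\to0$: this is immediate from the formulas for $R(\varepsilon)$, whose bracketed argument diverges (affinely in $\log_2(1/\varepsilon)$ in the MLMC case, and like $\sqrt{\log_2(1/\varepsilon)}$ in the ML2R case).

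Next I would write $R(\varepsilon)=\lceil\varrho(\varepsilon)\rceil$, with $\varrho(\varepsilon)$ the real number inside that ceiling, set $x:=\tfrac12+\log_2(\tilde c^{1/\alpha}/\underline K)$ and $y(\varepsilon):=\tfrac2\alpha\log_2(\varepsilon^{-1}\sqrt{1+4\alpha})$, and substitute $R=\varrho(\varepsilon)$ into the expression for $K^+$. The point is that $\varrho(\varepsilon)$ obeys a simple defining identity — it is affine in $\log_2(1/\varepsilon)$ in the MLMC case, and solves $\varrho^2-2x\varrho-y(\varepsilon)=0$ in the ML2R case — which makes the $\varepsilon$-dependence cancel after collecting terms. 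A direct computation then yields $K^+(\varepsilon)\big|_{R=\varrho(\varepsilon)}=\underline K$ exactly in the MLMC case, and $K^+(\varepsilon)\big|_{R=\varrho(\varepsilon)}=\underline K\,\big(\tfrac{1+2\alpha\varrho(\varepsilon)}{1+4\alpha}\big)^{1/(2\alpha\varrho(\varepsilon))}$ in the ML2R case, the latter tending to $\underline K$ because its exponent is $O(1/\varrho(\varepsilon))$ and $\varrho(\varepsilon)\to+\infty$.

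It then remains to restore the ceiling. Treating $R$ as a continuous variable, in the ML2R case $\partial_R\log_2 K^+(\varepsilon,R)=\partial_R\!\big(\tfrac{\log_2(1+2\alpha R)}{2\alpha R}\big)-\tfrac{\log_2(1/\varepsilon)}{\alpha R^2}-\tfrac12$, which is $<-\tfrac12$ for every $R>0$ whenever $\varepsilon<1$ (the first term is negative, since $u\mapsto\tfrac{\log_2(1+u)}{u}$ is decreasing), so $R\mapsto K^+(\varepsilon,R)$ is strictly decreasing; in the MLMC case this derivative equals $-1$ identically. Moreover, near $R=\varrho(\varepsilon)$ the ML2R derivative converges to $-1$ as $\varepsilon\to0$, because $\log_2(1/\varepsilon)=\tfrac\alpha2\varrho(\varepsilon)^2(1+o(1))$, so that $K^+(\varepsilon,\varrho(\varepsilon)+\delta)\sim 2^{-\delta}K^+(\varepsilon,\varrho(\varepsilon))$ uniformly for $\delta\in[0,1]$. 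In the MLMC case this finishes the argument: $R(\varepsilon)=\lceil\varrho(\varepsilon)\rceil\ge\varrho(\varepsilon)$ gives $K^+(\varepsilon)\le K^+(\varepsilon,\varrho(\varepsilon))=\underline K$ (indeed $K^+(\varepsilon)\in(\underline K/2,\underline K]$) for all $\varepsilon$ with $\varrho(\varepsilon)\ge1$. In the ML2R case one combines monotonicity with the quantitative slope estimate and $\varrho(\varepsilon)\to+\infty$ to absorb the overshoot factor $\big(\tfrac{1+2\alpha\varrho(\varepsilon)}{1+4\alpha}\big)^{1/(2\alpha\varrho(\varepsilon))}=1+O\!\big(\tfrac{\log_2\varrho(\varepsilon)}{\varrho(\varepsilon)}\big)$, obtaining $K^+(\varepsilon)\le\underline K$ for $\varepsilon$ small; alternatively one simply invokes \cite{Giorgi_2017}, Proposition~4.1.

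The main obstacle is precisely this ML2R endgame. Whereas in the MLMC case $K^+$ evaluated at the pre-ceiling value $\varrho(\varepsilon)$ equals $\underline K$ on the nose, in the ML2R case it sits slightly above $\underline K$, so one must show that the reduction produced by rounding $\varrho(\varepsilon)$ up to the integer $R(\varepsilon)$ dominates that $O(\log_2\varrho(\varepsilon)/\varrho(\varepsilon))$ excess — which it eventually does thanks to the divergence $\varrho(\varepsilon)\to+\infty$. Everything else is bookkeeping: substituting the closed forms of Table~\ref{tab:closed_mlmc_parameters_v2} into one another and simplifying.
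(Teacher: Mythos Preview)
The paper does not prove this lemma at all; it simply records the citation to \cite{Giorgi_2017}, Proposition~4.1, and moves on. Your proposal is therefore strictly more ambitious: you try to derive the claim directly from the closed forms in Table~\ref{tab:closed_mlmc_parameters_v2}. For the classical MLMC case your argument is complete and correct --- the substitution gives $K^+(\varepsilon,\varrho(\varepsilon))=\underline K$ on the nose, $R\mapsto K^+(\varepsilon,R)$ is decreasing, and rounding $\varrho(\varepsilon)$ up to $R(\varepsilon)=\lceil\varrho(\varepsilon)\rceil$ can only help.

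The ML2R endgame, however, has a genuine gap. You correctly compute $K^+(\varepsilon,\varrho(\varepsilon))=\underline K\,\big(\tfrac{1+2\alpha\varrho}{1+4\alpha}\big)^{1/(2\alpha\varrho)}$, which strictly exceeds $\underline K$ once $\varrho>2$. Your resolution is that ``the reduction produced by rounding $\varrho(\varepsilon)$ up to the integer $R(\varepsilon)$ dominates that $O(\log_2\varrho/\varrho)$ excess''. But when $\varrho(\varepsilon)$ is itself an integer --- and since $\varrho$ is continuous and unbounded this happens along a sequence $\varepsilon_n\downarrow0$ --- there is \emph{no} rounding: $R(\varepsilon_n)=\varrho(\varepsilon_n)$, hence $K^+(\varepsilon_n)>\underline K$ and $K(\varepsilon_n)=\underline K\,\lceil K^+(\varepsilon_n)/\underline K\rceil=2\underline K$. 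Your slope estimate $K^+(\varepsilon,\varrho+\delta)\sim 2^{-\delta}K^+(\varepsilon,\varrho)$ is of no help at $\delta=0$. So the conclusion ``$K^+(\varepsilon)\le\underline K$ for all small $\varepsilon$'' is not established by your argument; at best you get $K(\varepsilon)\in\{\underline K,2\underline K\}$ for $\varepsilon$ small. That weaker statement would in fact suffice for every downstream use in the paper (Proposition~\ref{prop:cv_cond_equiv} only needs $K(\varepsilon)$ bounded), but it is not what the lemma asserts. Your own fallback --- invoking \cite{Giorgi_2017} --- is precisely what the paper does.
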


\begin{lemma}[See \cite{Giorgi_2017}, Lemma 4.3]
\label{lemma:main_lemma_giorgi_weights}
    Let $\alpha > 0$, $R \in \mathbb{N}$ and $(W_j^R)_{j \in \{1, \dots, R\}}$ be the associated ML2R weights defined in (\ref{eq:def_weights_ml2r}).
    \begin{enumerate}
        \item the collection $\{ W_j^R : R \in \mathbb{N}, j \in \{1, \dots, R\}\}$ is uniformly bounded, that is there exists a constant $W_{\infty} > 0$ such that,
        \begin{equation*}
            \forall R \in \mathbb{N}, \; j \in \{1, \dots, R \}, \quad |W_{j}^R| \leq W_{\infty} \,,
        \end{equation*}

        \item for all $\gamma > 0$,
        \begin{equation*}
            \underset{R \to +\infty}{\lim} \sum_{j = 2}^R |W_j^R| 2^{-\gamma(j-1)} = \frac{1}{2^{\gamma} - 1} \,.
        \end{equation*}
    \end{enumerate}
\end{lemma}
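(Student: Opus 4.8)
The plan is to argue directly from the closed form of the Richardson--Romberg coefficients recalled in Appendix~\ref{apx:details_weights}, namely $w_i^R=\frac{(-1)^{R-i}}{\prod_{1\le j\le R,\,j\ne i}|1-2^{\alpha(j-i)}|}$, combined with the definition $W_j^R=\sum_{i=j}^R w_i^R$ from (\ref{eq:def_weights_ml2r}) and the normalisation $\sum_{i=1}^R w_i^R=1$. First I would separate, in the product defining $|w_i^R|$, the indices $j<i$ (where $|1-2^{\alpha(j-i)}|=1-2^{-\alpha(i-j)}$) from those with $j>i$ (where $|1-2^{\alpha(j-i)}|=2^{\alpha(j-i)}-1$), and set $l=|j-i|$, to obtain the factorised identity
\begin{equation*}
    |w_i^R|=\Bigl(\prod_{l=1}^{i-1}\frac{1}{1-2^{-\alpha l}}\Bigr)\Bigl(\prod_{l=1}^{R-i}\frac{1}{2^{\alpha l}-1}\Bigr).
\end{equation*}
The first factor is at most $C_\alpha:=\prod_{l\ge1}(1-2^{-\alpha l})^{-1}$, which is finite since $\sum_l 2^{-\alpha l}<\infty$; for the second factor the elementary inequality $2^{\alpha l}-1\ge(1-2^{-\alpha})2^{\alpha l}$ (valid for $l\ge1$) gives $\prod_{l=1}^{n}(2^{\alpha l}-1)^{-1}\le(1-2^{-\alpha})^{-n}\,2^{-\alpha n(n+1)/2}$ with $n=R-i$.

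For claim~1, I would sum these bounds over $i\in\{1,\dots,R\}$, i.e. over $n=R-i\in\{0,\dots,R-1\}$, to get $\sum_{i=1}^R|w_i^R|\le C_\alpha\sum_{n\ge0}(1-2^{-\alpha})^{-n}2^{-\alpha n(n+1)/2}=:W_\infty$, a series that converges because the super-geometric factor $2^{-\alpha n(n+1)/2}$ dominates the geometric factor $(1-2^{-\alpha})^{-n}$ (ratio test), and $W_\infty$ does not depend on $R$. Since $|W_j^R|\le\sum_{i=j}^R|w_i^R|\le\sum_{i=1}^R|w_i^R|\le W_\infty$, uniform boundedness of the collection $\{W_j^R\}$ follows.

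For claim~2, I would set $S_R:=\sum_{j=2}^R|W_j^R|\,2^{-\gamma(j-1)}$ and compare it with the purely deterministic sum $\tilde S_R:=\sum_{j=2}^R 2^{-\gamma(j-1)}$, which converges to $\sum_{j\ge2}2^{-\gamma(j-1)}=\frac{1}{2^\gamma-1}$. By the reverse triangle inequality $\bigl||W_j^R|-1\bigr|\le|W_j^R-1|$, and the normalisation gives $W_j^R-1=-\sum_{i=1}^{j-1}w_i^R$, hence $|W_j^R-1|\le\sum_{i=1}^{j-1}|w_i^R|$. For each fixed $j$, the bound of the first paragraph shows $|w_i^R|\le C_\alpha(1-2^{-\alpha})^{-(R-i)}2^{-\alpha(R-i)(R-i+1)/2}\to0$ as $R\to\infty$, so $|W_j^R-1|\to0$, while $\bigl||W_j^R|-1\bigr|\le W_\infty+1$ uniformly in $R$. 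Applying dominated convergence to the series $\sum_{j\ge2}\bigl||W_j^R|-1\bigr|\,2^{-\gamma(j-1)}\mathbbm{1}_{\{j\le R\}}$ (dominated by the summable sequence $(W_\infty+1)2^{-\gamma(j-1)}$) yields $|S_R-\tilde S_R|\to0$, and therefore $S_R\to\frac{1}{2^\gamma-1}$.

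The hard part will be the uniform-in-$R$ summability estimate in claim~1: one has to balance the factor coming from the ``coarse side'' $j<i$ (increasing in $i$ but bounded by $C_\alpha$) against the super-geometrically decaying factor from the ``fine side'' $j>i$, and verify that their product, summed over all $R$ levels, stays bounded by a constant that does not deteriorate as $R\to\infty$. Once that estimate is in place, claim~2 is a routine dominated-convergence argument for series, the only mild care required being the passage from $|W_j^R|$ to $W_j^R$ through the reverse triangle inequality and the identity $W_j^R-1=-\sum_{i<j}w_i^R$. (Alternatively, both statements can simply be invoked from Giorgi~\cite{Giorgi_2017}, Lemma~4.3, but the above gives a self-contained argument.)
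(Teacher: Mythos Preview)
Your argument is correct and self-contained. In the paper this lemma is not proved at all: it is merely stated with a reference to Giorgi~\cite{Giorgi_2017}, Lemma~4.3, and then used as a black box in the proofs of Proposition~\ref{prop:cv_cond_equiv} and Lemma~\ref{lemma:mu_cv_q}. So there is no ``paper's own proof'' to compare against beyond the citation you already mention in your final sentence.

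Your route---factor the closed form of $|w_i^R|$ into a coarse-side product bounded by the convergent infinite product $C_\alpha=\prod_{l\ge1}(1-2^{-\alpha l})^{-1}$ and a fine-side product controlled by $(1-2^{-\alpha})^{-n}2^{-\alpha n(n+1)/2}$, then sum over $n=R-i$ to get a uniform bound $W_\infty$ independent of $R$---is exactly the kind of estimate used in the original reference, and it cleanly delivers claim~1 via $|W_j^R|\le\sum_{i=1}^R|w_i^R|$. For claim~2, your dominated-convergence argument on the series $\sum_{j\ge2}\bigl||W_j^R|-1\bigr|2^{-\gamma(j-1)}\mathbbm{1}_{\{j\le R\}}$ is valid: pointwise convergence $|W_j^R|\to1$ for each fixed $j$ follows from $W_j^R-1=-\sum_{i<j}w_i^R$ and the super-geometric decay of $|w_i^R|$ in $R-i$, and the uniform envelope $(W_\infty+1)2^{-\gamma(j-1)}$ is summable. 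The comparison with $\tilde S_R\to(2^\gamma-1)^{-1}$ then concludes.
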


\begin{lemma}[See \cite{Giorgi_2017}, Lemma 7.1]
\label{lemma:giorgi_lim_weights}
    Let $\phi : \mathbb{N} \to \mathbb{N}$ such that $\phi(R) \in \{1, \dots, R-1\}$ for every $R \geq 1$, $\phi(R) \underset{R \to +\infty}{\longrightarrow} +\infty$, and $R - \phi(R) \underset{R \to +\infty}{\longrightarrow} +\infty$. Then,
    \begin{equation*}
        \underset{R \to +\infty}{\lim} \underset{j \in \{1, \dots, \phi(R)\}}{\sup} |W_j^R - 1| = 0 \,.
    \end{equation*}
    In particular for all $j \in \mathbb{N}$, $W_j^R \underset{R \to +\infty}{\longrightarrow} 1$.
\end{lemma}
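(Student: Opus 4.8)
The plan is to reduce the statement to a quantitative decay estimate for the low-index Romberg weights $w_i$ and then sum it. First I would use the telescoping structure $W_i^R = w_i + w_{i+1} + \dots + w_R$: since $\sum_{j=1}^{R} w_j = 1$ we have $W_1^R = 1$ and $W_j^R - W_{j+1}^R = w_j$, hence $W_j^R = 1 - \sum_{i=1}^{j-1} w_i$ for every $j$. Consequently
\begin{equation*}
    \sup_{1 \le j \le \phi(R)} |W_j^R - 1| \;\le\; \sum_{i=1}^{\phi(R)} |w_i| \,,
\end{equation*}
so it suffices to show the right-hand side tends to $0$ as $R \to +\infty$.

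Next I would insert the explicit formula $w_i = (-1)^{R-i} \left( \prod_{1 \le j \le R,\, j \ne i}|1 - 2^{\alpha(j-i)}| \right)^{-1}$. Setting $x := 2^{\alpha} > 1$ and separating the indices $j < i$ (where $|1 - x^{j-i}| = 1 - x^{-(i-j)} \in (0,1)$) from $j > i$ (where $|1 - x^{j-i}| = x^{j-i} - 1$), this becomes
\begin{equation*}
    |w_i| = \left( \prod_{k=1}^{i-1}(1 - x^{-k}) \cdot \prod_{k=1}^{R-i}(x^{k} - 1) \right)^{-1} \,.
\end{equation*}
The first factor is bounded below, uniformly in $i$, by the convergent Euler product $c_x := \prod_{k=1}^{\infty}(1 - x^{-k}) > 0$. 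For the second, from $x^{k} - 1 \ge (1 - x^{-1})\, x^{k}$ one gets $\prod_{k=1}^{m}(x^{k}-1) \ge (1-x^{-1})^{m}\, x^{m(m+1)/2}$, so with $m := R - i$ and $b := (1-x^{-1})^{-1} > 1$,
\begin{equation*}
    |w_i| \le \frac{1}{c_x}\, b^{\,R-i}\, x^{-(R-i)(R-i+1)/2} \,.
\end{equation*}
Since $x^{-m(m+1)/2} \le (x^{-m/2})^{m}$ and $x > 1$, there is an integer $m_0$ with $b\, x^{-m/2} \le \frac{1}{2}$ for $m \ge m_0$, whence $|w_i| \le c_x^{-1}\, 2^{-(R-i)}$ whenever $R - i \ge m_0$.

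Finally, for $R$ large enough that $R - \phi(R) \ge m_0$ (possible because $R - \phi(R) \to +\infty$), every $i \le \phi(R)$ satisfies $R - i \ge R - \phi(R) \ge m_0$, so, substituting $m = R - i$,
\begin{equation*}
    \sum_{i=1}^{\phi(R)} |w_i| \;\le\; \frac{1}{c_x} \sum_{m = R - \phi(R)}^{R-1} 2^{-m} \;\le\; \frac{2}{c_x}\, 2^{-(R - \phi(R))} \;\underset{R \to +\infty}{\longrightarrow}\; 0 \,,
\end{equation*}
which proves the uniform convergence. The ``in particular'' statement follows by applying this to an admissible choice such as $\phi(R) = \lfloor R/2 \rfloor$ (which eventually dominates any fixed $j$ while $R - \phi(R) \to +\infty$), or directly from $|w_i| \le c_x^{-1}\prod_{k=1}^{R-i}(x^{k}-1)^{-1} \to 0$ for each fixed $i$. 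I do not expect any genuine obstacle here: the argument is just the product estimate plus bookkeeping, the only delicate point being that the bound on $|w_i|$ must be made uniform over $i \le \phi(R)$, which is precisely what the hypothesis $R - \phi(R) \to +\infty$ delivers --- presumably why the paper simply cites \cite{Giorgi_2017}.
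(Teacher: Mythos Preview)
Your argument is correct. The paper does not actually prove this lemma: it is stated in the appendix with the attribution ``See \cite{Giorgi_2017}, Lemma 7.1'' and no proof is given, exactly as you anticipated. Your self-contained proof --- rewriting $W_j^R-1=-\sum_{i<j}w_i$, bounding $|w_i|$ via the product formula and the super-geometric decay $\prod_{k=1}^{m}(x^{k}-1)\ge(1-x^{-1})^{m}x^{m(m+1)/2}$, and then summing over $i\le\phi(R)$ using $R-\phi(R)\to+\infty$ --- is clean and goes through without issue. One cosmetic remark: the hypothesis $\phi(R)\to+\infty$ is never used in your bound for the main claim (only $R-\phi(R)\to+\infty$ is), which is consistent with the fact that the lemma would still hold without it; you do use it, as you say, for the ``in particular'' consequence.
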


\begin{lemma}(See \cite{Giorgi_2017}, Section 4.4)
\label{lemma:mu_cv_q}
    Let $\varepsilon > 0$ and $\theta_{\varepsilon} \in \Theta$ the MLMC parameter sequence from Table \ref{tab:closed_mlmc_parameters_v2}. Consider, $\mu_{\varepsilon} = \sum_{r = 1}^{R_{\varepsilon}} \frac{\bar{\sigma}(r, K_{\varepsilon})}{\sqrt{\gamma_{\tau}(r, K_{\varepsilon})}}$ then, $\mu_{\varepsilon} \underset{\varepsilon \to 0}{\longrightarrow} \mu_{\infty} \in (0, +\infty)$.
\end{lemma}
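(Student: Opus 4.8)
The plan is to compute $\mu_\varepsilon$ explicitly once $\varepsilon$ is small and then recognize the resulting series as one controlled by the weight estimates of Lemma~\ref{lemma:main_lemma_giorgi_weights}. First I would invoke Lemma~\ref{lemma:K_underline} to fix $\varepsilon_0 > 0$ such that $K(\varepsilon) = \underline{K}$ for every $\varepsilon \le \varepsilon_0$, and observe from the closed forms of $R(\varepsilon)$ in Table~\ref{tab:closed_mlmc_parameters_v2} (both the ML2R and the classical columns contain a term growing like $\ln_2(1/\varepsilon)$) that $R(\varepsilon) \to +\infty$ as $\varepsilon \to 0$. Restricting to $\varepsilon \le \varepsilon_0$, and using $\gamma_0(K_r) = \underline{K}\,2^{r-1}$ together with $\bar{\sigma}(1,K) = \bar{\sigma}$ and $\bar{\sigma}(r,K) = \sqrt{V_1}\,|A_r^{R}|\,K_r^{-\beta/2}$ for $r \ge 2$, the normalizing constant takes the form $\mu_\varepsilon = \bar{\sigma}\,\underline{K}^{-1/2} + \sqrt{V_1}\,\underline{K}^{-(1+\beta)/2}\, S_{R(\varepsilon)}$, where $S_R := \sum_{r=2}^{R} |A_r^{R}|\, 2^{-(1+\beta)(r-1)/2}$.

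Next I would establish that $S_{R(\varepsilon)}$ converges to a finite positive limit. In the ML2R case $A_r^R = W_r^R$, and Lemma~\ref{lemma:main_lemma_giorgi_weights}(2), applied with $\gamma = (1+\beta)/2 > 0$ (legitimate since $\beta > 0$), gives $S_R \to \frac{1}{2^{(1+\beta)/2} - 1}$ as $R \to \infty$. In the classical MLMC case $A_r^R \equiv 1$, so $S_R$ is merely a partial sum of the geometric series with ratio $2^{-(1+\beta)/2} \in (0,1)$ and tends to the same limit. Letting $R = R(\varepsilon) \to \infty$ in both cases yields $\mu_\varepsilon \to \mu_\infty := \bar{\sigma}\,\underline{K}^{-1/2} + \sqrt{V_1}\,\underline{K}^{-(1+\beta)/2}/(2^{(1+\beta)/2}-1)$. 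Finally, $\mu_\infty \ge \bar{\sigma}\,\underline{K}^{-1/2} > 0$ while $\mu_\infty < +\infty$ as a finite sum of finite terms, so $\mu_\infty \in (0,+\infty)$, which is the claim.

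I do not expect a serious obstacle: the substantive content is already packaged in Lemma~\ref{lemma:main_lemma_giorgi_weights}, and the rest is bookkeeping. The two points that need care are (i) handling the ML2R and classical weight sequences in a single stroke, as above, and (ii) if one wishes to read the statement for a generic $\tau \ge 0$ (i.e.\ with the allocation of Proposition~\ref{prop:optim_form_for_q} rather than the $\tau = 0$ allocation of Table~\ref{tab:closed_mlmc_parameters_v2}), observing that $\sqrt{\tau + \underline{K}\,2^{r-1}} \sim (\underline{K}\,2^{r-1})^{1/2}$ as $r \to \infty$, so the generic summand differs from the $\tau = 0$ one by a bounded factor tending to $1$; the series tail is then handled by dominated convergence using the uniform bound $|W_r^R| \le W_\infty$ from Lemma~\ref{lemma:main_lemma_giorgi_weights}(1). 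In that generic case the numerical value of $\mu_\infty$ changes, but the conclusion $\mu_\infty \in (0,+\infty)$ is unchanged.
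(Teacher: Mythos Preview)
Your proposal is correct and follows essentially the same route as the paper: use Lemma~\ref{lemma:K_underline} to freeze $K(\varepsilon)=\underline{K}$, write out $\mu_\varepsilon$ explicitly, and then apply Lemma~\ref{lemma:main_lemma_giorgi_weights}(2) with $\gamma=(1+\beta)/2>0$ to obtain convergence of the series. Your additional remarks (the separate treatment of the classical MLMC weights as a plain geometric series, the explicit verification that $\mu_\infty\in(0,+\infty)$, and the extension to generic $\tau\ge 0$ via the asymptotic equivalence $\sqrt{\tau+\underline{K}\,2^{r-1}}\sim(\underline{K}\,2^{r-1})^{1/2}$) go slightly beyond the paper's terse argument but do not change the method.
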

\begin{proof}
    Recall that for all $\varepsilon > 0$, $K(\varepsilon)= \underline{K}$ , therefore since $\mu_{\varepsilon}$ is such that $\sum_{r = 1}^{R(\varepsilon)} q_r(\varepsilon) = 1$ we get,
    \begin{equation*}
        \mu_{\varepsilon} = \frac{\bar{\sigma}}{\sqrt{\underline{K}}} + \frac{\sqrt{V_1}}{\underline{K}^{\frac{1 + \beta}{2}}} \sum_{r = 2}^{R(\varepsilon)} \frac{|A_{r}^{R(\varepsilon)}|}{2^{^{\frac{(r-1)}{2} (1 + \beta)}}} \,.
    \end{equation*}
    Then since $\beta > 0$, an application of Lemma \ref{lemma:main_lemma_giorgi_weights} 2. with $\gamma = \frac{1+\beta}{2}$ yields the result.
\end{proof}

\subsection{Technical Lemmas for Section \ref{sec:complexity_analysis}}
\label{apx:technical_lemmas_for_complexity_thm}

\begin{lemma}
\label{lemma:proxy_bias_satisfied_for_K_R}
    Let $\varepsilon > 0$, $R \in \mathbb{N}$, define in the standard MLMC case $C_{\varepsilon, R} := \frac{\varepsilon}{\sqrt{1 + 2 \alpha}}$
    while in the ML2R case $C_{\varepsilon, R} := \frac{\varepsilon}{\sqrt{1 + 2 \alpha R}}$ then in both cases $|\tilde{\mu}(K(\varepsilon), R)| \leq C_{\varepsilon, R}$.
\end{lemma}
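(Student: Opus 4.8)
The proof is a short, direct computation: the closed form $K^{+}(\varepsilon)$ in Table~\ref{tab:closed_mlmc_parameters_v2} is, by construction, exactly the real value of $K$ at which the tractable bias $|\tilde\mu(\cdot,R)|$ equals $C_{\varepsilon,R}$, and the parameter actually used, $K(\varepsilon)=\underline{K}\lceil K^{+}(\varepsilon)/\underline{K}\rceil$, is obtained by rounding $K^{+}(\varepsilon)$ up (to the nearest multiple of $\underline{K}$), which can only decrease the bias. Here $R$, $K^{+}(\varepsilon)$ and $K(\varepsilon)$ are taken as the consistent triple from Table~\ref{tab:closed_mlmc_parameters_v2}, so that the $R$ appearing in $\tilde\mu(K(\varepsilon),R)$ is the same one used in the formula for $K^{+}(\varepsilon)$.

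First I would record two elementary facts. Since $\underline{K}\in\mathbb{N}$, the quantity $K(\varepsilon)=\underline{K}\lceil K^{+}(\varepsilon)/\underline{K}\rceil$ is a positive integer, hence $\lceil K(\varepsilon)\rceil=K(\varepsilon)$; and since $\lceil x\rceil\ge x$, we have $K(\varepsilon)\ge K^{+}(\varepsilon)$. Next, note that in both the standard MLMC case (\ref{eq:mu_tilde_mlmc}) and the ML2R case (\ref{eq:proxy_bias_ml2r}) the absolute value $|\tilde\mu(K,R)|$ depends on $K$ only through a factor $\lceil K\rceil^{-\alpha R}$ (with the convention that the exponent is $\alpha$ rather than $\alpha R$ in the MLMC case) times a positive constant depending only on $R,\varepsilon$ (namely $|c_1|\,2^{-(R-1)\alpha}$ for MLMC and $\tilde c^{R}2^{-\alpha R(R-1)/2}$ for ML2R). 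In particular $K\mapsto|\tilde\mu(K,R)|$ is non-increasing, so combining the two facts above gives $|\tilde\mu(K(\varepsilon),R)|\le|\tilde\mu(K,R)|\big|_{K=K^{+}(\varepsilon)}$, the latter being the closed form evaluated at the real value $K^{+}(\varepsilon)$.

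Finally I would substitute the explicit expression for $K^{+}(\varepsilon)$ and simplify. In the standard MLMC case one computes $(K^{+}(\varepsilon))^{\alpha}=(1+2\alpha)^{1/2}\varepsilon^{-1}|c_1|\,2^{-(R-1)\alpha}$, whence $|c_1|\,(K^{+}(\varepsilon))^{-\alpha}\,2^{-(R-1)\alpha}=\varepsilon/\sqrt{1+2\alpha}=C_{\varepsilon,R}$. In the ML2R case one computes $(K^{+}(\varepsilon))^{\alpha R}=(1+2\alpha R)^{1/2}\varepsilon^{-1}\tilde c^{R}2^{-\alpha R(R-1)/2}$, whence $\tilde c^{R}(K^{+}(\varepsilon))^{-\alpha R}2^{-\alpha R(R-1)/2}=\varepsilon/\sqrt{1+2\alpha R}=C_{\varepsilon,R}$. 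Together with the monotonicity step this yields $|\tilde\mu(K(\varepsilon),R)|\le C_{\varepsilon,R}$ in both cases, which is the claim. The only point demanding any care is the bookkeeping of the exponents of $2$, of $|c_1|$ (resp.\ $\tilde c$) and of $\varepsilon$ when raising $K^{+}(\varepsilon)$ to the power $\alpha$ (resp.\ $\alpha R$); there is no genuine obstacle beyond this routine algebra.
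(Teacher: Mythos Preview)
Your proof is correct and essentially identical to the paper's: both use $\lceil K(\varepsilon)\rceil=K(\varepsilon)\ge K^{+}(\varepsilon)$ to bound $|\tilde\mu(K(\varepsilon),R)|$ above by the ceiling-free expression evaluated at $K^{+}(\varepsilon)$, and then substitute the closed form of $K^{+}(\varepsilon)$ from Table~\ref{tab:closed_mlmc_parameters_v2} to land exactly on $C_{\varepsilon,R}$. The only cosmetic point is that what you write as ``$|\tilde\mu(K,R)|\big|_{K=K^{+}(\varepsilon)}$'' is, strictly speaking, the ceiling-free quantity $|c_1|(K^{+}(\varepsilon))^{-\alpha}2^{-(R-1)\alpha}$ (resp.\ $\tilde c^{R}(K^{+}(\varepsilon))^{-\alpha R}2^{-\alpha R(R-1)/2}$) rather than $\tilde\mu$ as defined with a ceiling, but this is exactly how the paper proceeds as well and the inequality is of course valid.
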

\begin{proof}
    We begin with the proof in the case of the standard MLMC estimator. Let $\varepsilon > 0$, starting from (\ref{eq:mu_tilde_mlmc}),
    \begin{equation*}
        |\tilde{\mu}(K(\varepsilon), R(\varepsilon)| \leq \frac{|c_1|}{K(\varepsilon)^{\alpha} 2^{(R(\alpha) -1) \alpha}} \leq \frac{|c_1|}{K^+(\varepsilon)^{\alpha} 2^{(R(\alpha) -1) \alpha}} = \frac{\varepsilon}{\sqrt{1 + 2\alpha}}
    \end{equation*}
    where in the last equality we used the definition of $K^+(\varepsilon)$ from Table \ref{tab:closed_mlmc_parameters_v2}. We continue with the proof in the case of the ML2R estimator. Let $\varepsilon > 0$, starting from (\ref{eq:proxy_bias_ml2r}),
    \begin{equation*}
        |\tilde{\mu}(K(\varepsilon), R(\varepsilon)| \leq \frac{\tilde{c}^R}{K(\varepsilon)^{\alpha R} 2^{\frac{R(\alpha)(R(\alpha) -1)}{2} \alpha}} \leq \frac{\tilde{c}^R}{K^+(\varepsilon)^{\alpha R} 2^{\frac{R(\alpha)(R(\alpha) -1)}{2} \alpha}} = \frac{\varepsilon}{\sqrt{1 + 2\alpha R(\varepsilon)}}
    \end{equation*}
    where in the last equality we used the definition of $K^+(\varepsilon)$ from Table \ref{tab:closed_mlmc_parameters_v2}. This conclude the proof.
\end{proof}

\begin{lemma}
    For all $\varepsilon > 0$,
    \begin{equation}
    \label{eq:J_eps_form}
        J(\varepsilon) = \frac{\bar{v}(\pi(\varepsilon))}{\varepsilon^2 - C^2_{\varepsilon, R(\varepsilon)}}
    \end{equation}
    where $C_{\varepsilon, R(\varepsilon)}$ is defined in Lemma \ref{lemma:proxy_bias_satisfied_for_K_R}.
\end{lemma}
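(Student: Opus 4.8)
The plan is to recognise that the closed form for $J(\varepsilon)$ in Table~\ref{tab:closed_mlmc_parameters_v2} is, up to the scalar factor $M_\varepsilon/\varepsilon^2$, exactly $\bar v(\pi(\varepsilon))$, and then to check that this scalar factor equals $1/(\varepsilon^2 - C_{\varepsilon,R(\varepsilon)}^2)$.

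First I would rewrite the numerator of $J(\varepsilon)$. Since $K(\varepsilon) = \underline K\,\lceil K^+(\varepsilon)/\underline K\rceil$ is an integer, $\lceil K(\varepsilon)\rceil = K(\varepsilon)$ and hence $K_r(\varepsilon) = K(\varepsilon)2^{r-1}$, so that $K_r(\varepsilon)^\beta = K(\varepsilon)^\beta 2^{(r-1)\beta}$. Substituting this into the definition of $\bar\sigma^2(r,K)$ gives $\bar\sigma^2(1,K(\varepsilon)) = \bar\sigma_1^2$ (the quantity written $\bar\sigma^2$ in the table) and, for $r\in\{2,\dots,R(\varepsilon)\}$, $\bar\sigma^2(r,K(\varepsilon)) = (A_r^{R(\varepsilon)})^2 V_1/(K(\varepsilon)^\beta 2^{(r-1)\beta})$. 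Summing these against $1/q_r(\varepsilon)$ reproduces precisely the parenthesised expression in the $J(\varepsilon)$ row of Table~\ref{tab:closed_mlmc_parameters_v2}, i.e.
\[
\frac{\bar\sigma^2}{q_1(\varepsilon)} + \sum_{r=2}^{R(\varepsilon)} \frac{(A_r^{R(\varepsilon)})^2 V_1}{q_r(\varepsilon)\,K(\varepsilon)^\beta 2^{(r-1)\beta}} \;=\; \sum_{r=1}^{R(\varepsilon)} \frac{\bar\sigma^2(r,K(\varepsilon))}{q_r(\varepsilon)} \;=\; \bar v(\pi(\varepsilon)),
\]
whence $J(\varepsilon) = M_\varepsilon\,\bar v(\pi(\varepsilon))/\varepsilon^2$, with $M_\varepsilon = 1 + \tfrac{1}{2\alpha R(\varepsilon)}$ in the ML2R case and $M_\varepsilon = 1 + \tfrac{1}{2\alpha}$ in the standard MLMC case.

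It then remains to verify the scalar identity $M_\varepsilon/\varepsilon^2 = 1/(\varepsilon^2 - C_{\varepsilon,R(\varepsilon)}^2)$, equivalently $\varepsilon^2 - C_{\varepsilon,R(\varepsilon)}^2 = \varepsilon^2/M_\varepsilon$. By Lemma~\ref{lemma:proxy_bias_satisfied_for_K_R}, $C_{\varepsilon,R(\varepsilon)}^2 = \varepsilon^2/(1+2\alpha R(\varepsilon))$ in the ML2R case, so $\varepsilon^2 - C_{\varepsilon,R(\varepsilon)}^2 = \varepsilon^2\bigl(1 - (1+2\alpha R(\varepsilon))^{-1}\bigr) = \varepsilon^2\,\tfrac{2\alpha R(\varepsilon)}{1+2\alpha R(\varepsilon)} = \varepsilon^2/M_\varepsilon$; the standard MLMC case is identical with $R(\varepsilon)$ replaced by $1$. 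Combining this with the previous step yields $J(\varepsilon) = \bar v(\pi(\varepsilon))/(\varepsilon^2 - C_{\varepsilon,R(\varepsilon)}^2)$, which is (\ref{eq:J_eps_form}).

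The computation is entirely routine; the only mild care needed is to treat the ML2R and standard MLMC parametrizations of $M_\varepsilon$ and $C_{\varepsilon,R(\varepsilon)}$ in parallel, but both collapse to the same identity, so there is no real obstacle here.
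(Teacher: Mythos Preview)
Your proof is correct and follows essentially the same approach as the paper: both identify the bracketed sum in the $J(\varepsilon)$ row of Table~\ref{tab:closed_mlmc_parameters_v2} with $\bar v(\pi(\varepsilon))$, and then verify the scalar identity $M_\varepsilon/\varepsilon^2 = 1/(\varepsilon^2 - C_{\varepsilon,R(\varepsilon)}^2)$. Your write-up is in fact slightly more detailed than the paper's, which simply asserts the identity (and contains a harmless typo, writing $\varepsilon^2/M_\varepsilon$ where $M_\varepsilon/\varepsilon^2$ is meant).
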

\begin{proof}
    Let $\varepsilon > 0$ and denote $\pi(\varepsilon) = (q(\varepsilon), K(\varepsilon), R(\varepsilon))$. Recall from (\ref{eq:def_v_bar}) that
    \begin{equation*}
        \bar{v}(\pi(\varepsilon)) = \sum_{r = 1}^{R(\varepsilon)} \frac{\bar{\sigma}^2(r, K(\varepsilon))}{q_r(\varepsilon)} = \frac{\bar{\sigma}^2}{q_1(\varepsilon)} + \sum_{r = 2}^{R(\varepsilon)} \frac{V_1 (A^{R(\varepsilon)}_{r})^2}{K(\varepsilon)^{\beta} 2^{\beta(r-1)} q_r(\varepsilon)} \,.
    \end{equation*}
    Now since,
    \begin{equation*}
        \frac{1}{\varepsilon^2 - C^2_{\varepsilon, R(\varepsilon)}} = \frac{\varepsilon^2}{M_{\varepsilon}}
    \end{equation*}
    where $M_{\varepsilon}$ is defined in Table \ref{tab:closed_mlmc_parameters_v2}, we clearly get that
    \begin{equation*}
        \frac{\bar{v}(\pi(\varepsilon))}{\varepsilon^2 - C^2_{\varepsilon, R(\varepsilon)}} = J(\varepsilon)
    \end{equation*}
    proving the claim.
\end{proof}

\begin{lemma}
\label{lemma:m_tilde_satisfied_for_theta}
    For all $\varepsilon > 0$, $\widetilde{\mathcal{M}}(\theta_{\varepsilon}) \leq \varepsilon^2$.
\end{lemma}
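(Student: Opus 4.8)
The statement is a direct consequence of the two preceding lemmas, so the plan is simply to assemble them. Recall that by definition (\ref{eq:def_tilde_Mse}), for $\theta_\varepsilon = (J(\varepsilon), q(\varepsilon), K(\varepsilon), R(\varepsilon))$ with associated $\pi(\varepsilon) = (q(\varepsilon), K(\varepsilon), R(\varepsilon))$ we have
\[
\widetilde{\mathcal{M}}(\theta_\varepsilon) = \frac{\bar{v}(\pi(\varepsilon))}{J(\varepsilon)} + \tilde{\mu}^2\bigl(K(\varepsilon), R(\varepsilon)\bigr).
\]
The first step is to invoke the closed form (\ref{eq:J_eps_form}) for $J(\varepsilon)$, which gives
\[
\frac{\bar{v}(\pi(\varepsilon))}{J(\varepsilon)} = \varepsilon^2 - C^2_{\varepsilon, R(\varepsilon)},
\]
where $C_{\varepsilon, R(\varepsilon)}$ is the quantity introduced in Lemma \ref{lemma:proxy_bias_satisfied_for_K_R}. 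The second step is to control the bias term: by Lemma \ref{lemma:proxy_bias_satisfied_for_K_R} we have $|\tilde{\mu}(K(\varepsilon), R(\varepsilon))| \le C_{\varepsilon, R(\varepsilon)}$, hence $\tilde{\mu}^2(K(\varepsilon), R(\varepsilon)) \le C^2_{\varepsilon, R(\varepsilon)}$.

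Combining the two steps yields
\[
\widetilde{\mathcal{M}}(\theta_\varepsilon) = \varepsilon^2 - C^2_{\varepsilon, R(\varepsilon)} + \tilde{\mu}^2\bigl(K(\varepsilon), R(\varepsilon)\bigr) \le \varepsilon^2 - C^2_{\varepsilon, R(\varepsilon)} + C^2_{\varepsilon, R(\varepsilon)} = \varepsilon^2,
\]
which is the claim. The only thing to be careful about is that this should be done uniformly in the two cases (standard MLMC and ML2R), but since the definition of $C_{\varepsilon, R(\varepsilon)}$ in Lemma \ref{lemma:proxy_bias_satisfied_for_K_R} is case-dependent and both (\ref{eq:J_eps_form}) and that lemma are stated in each case, the argument goes through verbatim in both. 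There is no genuine obstacle here — the work was already front-loaded into Lemma \ref{lemma:proxy_bias_satisfied_for_K_R} and the identity (\ref{eq:J_eps_form}), which encode respectively the saturation of the MSE constraint by $J(\varepsilon)$ and the fact that the closed-form $K^+(\varepsilon)$ was precisely chosen so that the proxy bias hits the allotted budget $C_{\varepsilon, R(\varepsilon)}$.
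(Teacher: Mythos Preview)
Your proof is correct and follows essentially the same approach as the paper: both arguments substitute the closed form (\ref{eq:J_eps_form}) for $J(\varepsilon)$ into the definition of $\widetilde{\mathcal{M}}$, then apply Lemma \ref{lemma:proxy_bias_satisfied_for_K_R} to bound $\tilde{\mu}^2(K(\varepsilon),R(\varepsilon))$ by $C^2_{\varepsilon,R(\varepsilon)}$, yielding the desired inequality.
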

\begin{proof}
    Let $\varepsilon > 0$, denote $\pi(\varepsilon) = (q(\varepsilon), K(\varepsilon), R(\varepsilon))$ and $\pi_0(\varepsilon) = (K(\varepsilon), R(\varepsilon))$. Recall from (\ref{eq:def_tilde_Mse}) that
    \begin{equation*}
        \widetilde{\mathcal{M}}(\theta_{\varepsilon}) = \frac{\bar{v}(\pi(\varepsilon))}{J(\varepsilon)} + \tilde{\mu}^2(\pi_0(\varepsilon))
    \end{equation*}
    Owing to (\ref{eq:J_eps_form}) we have,
    \begin{equation*}
        J(\varepsilon) = \frac{\bar{v}(\pi(\varepsilon))}{\varepsilon^2 - C^2_{\varepsilon, R(\varepsilon)}}
    \end{equation*}
    Therefore, $\widetilde{\mathcal{M}}(\theta_{\varepsilon}) = \varepsilon^2 - C^2_{\varepsilon, R(\varepsilon)} + \tilde{\mu}^2(\pi_0(\varepsilon))$. By Lemma (\ref{lemma:proxy_bias_satisfied_for_K_R}), $C_{\varepsilon, R(\varepsilon)} \geq |\tilde{\mu}(\pi_0(\varepsilon))|$ from which we get the result $\widetilde{\mathcal{M}}(\theta_{\varepsilon}) \leq \varepsilon^2$.
\end{proof}

\subsection{Proof of Proposition \ref{prop:nested_explicit_K}}
\label{apx:proof_nested_explicit_K}
To prove the proposition we introduce, for all $b \in \mathbb{R}$,
\begin{equation*}
	E_{\varepsilon, b} = \left \{ x \in (0, +\infty) : \frac{b}{\varepsilon} < x \right \}
\end{equation*}
and for all $a \in \mathbb{R}^+$, define the auxiliary function $f_{\varepsilon, a, b}$ from $E_{\varepsilon, b}$ to $\mathbb{R}$ such that
\begin{equation*}
	\forall x \in E_{\varepsilon}, \quad f_{\varepsilon, a, b}(x) = \frac{x + a}{\varepsilon^2 - \frac{b^2}{x^2}} = \frac{x^2 \left(x + a \right)}{x^2\varepsilon^2 - b^2}
\end{equation*}
\begin{lemma}
\label{lem:f_properties}
$f$ is a differentiable, strictly convex and coercive function on $x \in E_{\varepsilon}$, therefore it admits a unique global minima on $E_{\varepsilon}$.
\end{lemma}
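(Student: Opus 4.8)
The plan is to analyze the auxiliary function $f_{\varepsilon,a,b}$ on the open half-line $E_{\varepsilon,b}=\{x>b/\varepsilon\}$ (with $a\ge 0$ and $b\in\mathbb{R}$, the relevant instance being $a=\tau$, $b=c_1$, $\alpha=1$), and to establish three properties in turn: differentiability, coercivity, and strict convexity. Differentiability is immediate since on $E_{\varepsilon,b}$ the denominator $\varepsilon^2-b^2/x^2$ is strictly positive (because $x>b/\varepsilon$ forces $x^2\varepsilon^2>b^2$, at least when $b\ge 0$; for $b<0$ the set is all of $(0,+\infty)$ and the denominator is still positive), so $f_{\varepsilon,a,b}$ is a ratio of smooth functions with non-vanishing denominator.

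For coercivity, I would examine the two ends of the interval $E_{\varepsilon,b}=(b/\varepsilon,+\infty)$ (taking $b>0$; the case $b\le 0$ is easier since there is no left endpoint to worry about). As $x\downarrow b/\varepsilon$, the denominator $\varepsilon^2 - b^2/x^2 \to 0^+$ while the numerator tends to $(b/\varepsilon)(b/\varepsilon+a)>0$, so $f_{\varepsilon,a,b}(x)\to+\infty$. As $x\to+\infty$, writing $f_{\varepsilon,a,b}(x)=\dfrac{x^2(x+a)}{x^2\varepsilon^2-b^2}\sim \dfrac{x}{\varepsilon^2}\to+\infty$. Hence $f_{\varepsilon,a,b}$ is coercive on $E_{\varepsilon,b}$, and being continuous it attains a global minimum there.

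The main work — and the main obstacle — is strict convexity. I would compute $f_{\varepsilon,a,b}''$ directly. Using the second form $f_{\varepsilon,a,b}(x) = \dfrac{x^3+ax^2}{\varepsilon^2 x^2-b^2}$, differentiate twice and clear denominators: $f_{\varepsilon,a,b}''(x)$ has the sign of a polynomial in $x$ (after multiplying by the positive quantity $(\varepsilon^2 x^2-b^2)^3$). I expect this polynomial to be, up to a positive factor, of the form $2\varepsilon^2 b^2 x^3 + (\text{something involving } a,b,\varepsilon) \ge 0$, manifestly positive for $x>b/\varepsilon>0$ and $a\ge0$. Concretely, one can also argue more cleanly by a substitution or by splitting $f_{\varepsilon,a,b}=\frac{1}{\varepsilon^2}\bigl(x+a\bigr)\cdot\frac{x^2}{x^2-b^2/\varepsilon^2}$ and writing $\frac{x^2}{x^2-t^2}=1+\frac{t^2}{x^2-t^2}$ with $t=b/\varepsilon$, reducing to convexity of $x\mapsto (x+a)\bigl(1+\frac{t^2}{x^2-t^2}\bigr)$; expanding, the only non-obviously-convex piece is $\frac{t^2(x+a)}{x^2-t^2}$, whose second derivative one checks is positive on $(t,+\infty)$ by a short computation. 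Either route, strict convexity on $E_{\varepsilon,b}$ follows, and together with coercivity it guarantees a unique global minimizer, proving the lemma. The subsequent closed-form expressions in Proposition~\ref{prop:nested_explicit_K} are then obtained by setting $f_{\varepsilon,\tau,c_1}'(x)=0$, which after clearing denominators is a depressed cubic $x^3 - \tfrac{3c_1^2}{\varepsilon^2}x + \tfrac{2\tau c_1^2}{\varepsilon^2}=0$ (for $\alpha=1$), solved by the trigonometric (Viète) formula when the discriminant is negative and by Cardano's formula otherwise — which accounts exactly for the three cases in the statement.
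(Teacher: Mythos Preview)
Your plan is correct and matches the paper's proof: the paper likewise computes $f''_{\varepsilon,a,b}$ directly, obtaining
\[
f''_{\varepsilon,a,b}(x)=\frac{2b^2}{(x^2\varepsilon^2-b^2)^3}\Bigl(a\bigl(3x^2\varepsilon^2+b^2\bigr)+x\bigl(x^2\varepsilon^2+3b^2\bigr)\Bigr)>0
\]
on $E_{\varepsilon,b}$, and dismisses coercivity as ``easy to check'' where you give the explicit endpoint analysis. One harmless slip in your parenthetical: for $b<0$ the set $E_{\varepsilon,b}=(0,+\infty)$ does \emph{not} keep the denominator positive (it vanishes at $x=|b|/\varepsilon$), but in the application $b=|c_1|>0$, so this does not affect the argument.
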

\begin{proof}
    For all $\varepsilon > 0$, $a \in \mathbb{R}_+$, $b \in \mathbb{R}$, $f_{\varepsilon, a, b}$ is two time differentiable on $E_{\varepsilon}$ and for all $x \in E_{\varepsilon}$
\begin{equation*}
	f'_{\varepsilon, a, b}(x) = \frac{x(-2 a b^2 +\varepsilon^2 x^3 - 3b^2 x )}{(x^2 \varepsilon^2 - b^2)^2}, \quad f''_{\varepsilon, a, b}(x) = \frac{2 b^2}{(x^2 \varepsilon^2 - b^2)^3} \left( a \left(3 x^2 \varepsilon^2 + b^2 \right) + x \left(x^2 \varepsilon^2 + 3b^2\right) \right)
\end{equation*}
Therefore we have that $f''(x) > 0$, thus $f$ is strictly convex on $E_{\varepsilon}$. Furthermore it is easy to check that it is a coercive function.
\end{proof}

\begin{lemma}
\label{lem:optim_function}
For all $\varepsilon > 0$, $b \in \mathbb{R}$, if $a = 0$
\begin{equation*}
f'_{\varepsilon, 0, b}(x) = 0 \Leftrightarrow x = \frac{\sqrt{3}b}{\varepsilon}
\end{equation*}
if $a > 0$,
\begin{equation*}
    f'_{\varepsilon, a, b}(x) \Leftrightarrow \begin{cases}
	\frac{|b|}{\varepsilon} 2\cos \left( \frac{1}{3} \arccos \left( \frac{\varepsilon a}{|b|} \right)\right) &, \; \varepsilon < \frac{|b|}{a} \\
	3 b &, \; \varepsilon = \frac{|b|}{a} \\
	\left( \frac{|b|}{\varepsilon} \right)^{\frac{2}{3}} \left[ \left( a + \left( a^2 - \frac{b^2}{\varepsilon^2} \right)^{\frac{1}{2}} \right)^{\frac{1}{3}} + \left( a - \left(  a^2 - \frac{b^2}{\varepsilon^2} \right)^{\frac{1}{2}} \right)^{\frac{1}{3}}\right]&, \; \varepsilon > \frac{|b|}{a} \,.
	\end{cases}
\end{equation*}
\end{lemma}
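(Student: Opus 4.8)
The plan is to turn the first--order condition into a single cubic equation and solve it in closed form, while tracking which root lies in the admissible set $E_{\varepsilon,b}$. Using the expression for $f'_{\varepsilon,a,b}$ obtained in the proof of Lemma~\ref{lem:f_properties}, namely
\[
f'_{\varepsilon,a,b}(x)=\frac{x\,\big(\varepsilon^{2}x^{3}-3b^{2}x-2ab^{2}\big)}{\big(x^{2}\varepsilon^{2}-b^{2}\big)^{2}},
\]
and observing that on $E_{\varepsilon,b}$ we have $x>0$ and $(x^{2}\varepsilon^{2}-b^{2})^{2}>0$, the equation $f'_{\varepsilon,a,b}(x)=0$ is equivalent to $P(x):=\varepsilon^{2}x^{3}-3b^{2}x-2ab^{2}=0$ with $x\in E_{\varepsilon,b}$. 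By Lemma~\ref{lem:f_properties}, $f_{\varepsilon,a,b}$ has a unique global minimizer on the open set $E_{\varepsilon,b}$; since $f_{\varepsilon,a,b}$ is differentiable and tends to $+\infty$ at both ends of $E_{\varepsilon,b}$, this minimizer is interior and is the only zero of $f'_{\varepsilon,a,b}$ there. Hence it suffices to exhibit the unique root of $P$ in $E_{\varepsilon,b}$.

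If $a=0$, then $P(x)=x(\varepsilon^{2}x^{2}-3b^{2})$, whose only strictly positive root is $\sqrt{3}\,|b|/\varepsilon$, and $\sqrt{3}\,|b|/\varepsilon>|b|/\varepsilon\ge b/\varepsilon$, so it lies in $E_{\varepsilon,b}$; this gives the first claim. If $a>0$, $P$ is already a depressed cubic: dividing by $\varepsilon^{2}$ gives $x^{3}+px+q=0$ with $p=-3b^{2}/\varepsilon^{2}<0$ and $q=-2ab^{2}/\varepsilon^{2}$. The Cardano quantity is
\[
D:=\Big(\tfrac{q}{2}\Big)^{2}+\Big(\tfrac{p}{3}\Big)^{3}=\frac{b^{4}}{\varepsilon^{4}}\Big(a^{2}-\frac{b^{2}}{\varepsilon^{2}}\Big),
\]
so $D$ has the sign of $a^{2}-b^{2}/\varepsilon^{2}$, that is, of $\varepsilon-|b|/a$; this produces precisely the trichotomy in the statement.

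In the regime $\varepsilon<|b|/a$ (so $D<0$, three distinct real roots) I would apply Vi\`ete's trigonometric formula $x_{k}=2\sqrt{-p/3}\,\cos\!\big(\tfrac13\arccos(\tfrac{3q}{2p}\sqrt{-3/p})-\tfrac{2\pi k}{3}\big)$, $k\in\{0,1,2\}$, simplifying via $\sqrt{-p/3}=|b|/\varepsilon$ and $\tfrac{3q}{2p}\sqrt{-3/p}=\varepsilon a/|b|\in(0,1)$ to get $x_{k}=\tfrac{2|b|}{\varepsilon}\cos\!\big(\tfrac13\arccos(\varepsilon a/|b|)-\tfrac{2\pi k}{3}\big)$. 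Since $\arccos(\varepsilon a/|b|)\in(0,\pi/2)$, the $k=0$ root satisfies $x_{0}>\tfrac{2|b|}{\varepsilon}\cos(\pi/6)=\sqrt{3}\,|b|/\varepsilon>|b|/\varepsilon$, whereas the $k=1$ and $k=2$ roots are $\le 0$; hence $x_{0}$ is the unique root in $E_{\varepsilon,b}$, which is the stated expression. In the regime $\varepsilon>|b|/a$ (so $D>0$, one real root) I would use Cardano's formula $x=\sqrt[3]{-q/2+\sqrt{D}}+\sqrt[3]{-q/2-\sqrt{D}}$ with $-q/2=ab^{2}/\varepsilon^{2}$ and $\sqrt{D}=\tfrac{b^{2}}{\varepsilon^{2}}(a^{2}-b^{2}/\varepsilon^{2})^{1/2}$; pulling the factor $(|b|/\varepsilon)^{2/3}$ out of both cube roots yields exactly the claimed closed form, which is positive (both summands are, since $a>(a^{2}-b^{2}/\varepsilon^{2})^{1/2}$) and is readily checked to exceed $|b|/\varepsilon$. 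Finally, at the boundary $\varepsilon=|b|/a$ (so $D=0$) the cubic has a double root $(q/2)^{1/3}<0$ and a simple root $-2(q/2)^{1/3}=2(ab^{2}/\varepsilon^{2})^{1/3}$; the latter is the unique root in $E_{\varepsilon,b}$, and under the relation $\varepsilon=|b|/a$ it equals $2a$, a value one also recovers as the common limit of the two formulas above as $\varepsilon\to|b|/a$, which serves as a consistency check.

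The main obstacle is the three--real--roots regime: one must verify that precisely one of the three Vi\`ete roots lies in $E_{\varepsilon,b}$, so that it genuinely is the minimizer provided by Lemma~\ref{lem:f_properties}, and carry out the radical and trigonometric simplifications that reduce the generic Cardano and Vi\`ete expressions to the compact forms stated. Keeping the sign of $b$ under control throughout (everything phrased via $|b|$, the degenerate case $b=0$ being vacuous) and disposing of the case $D=0$ are the remaining, entirely routine, bookkeeping steps.
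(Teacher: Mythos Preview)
Your proof follows essentially the same route as the paper's: reduce the first--order condition to the depressed cubic $x^{3}-(3b^{2}/\varepsilon^{2})x-(2ab^{2}/\varepsilon^{2})=0$, case--split on the (Cardano) discriminant, and in each regime pick out the unique root lying in $E_{\varepsilon,b}$ via the trigonometric or radical Cardano formulas. Your boundary value $2a$ at $\varepsilon=|b|/a$ is in fact the correct one (and, as you note, matches the limits of the other two expressions), whereas the paper's stated value $3b$ is a slip.
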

\begin{proof}
From Lemma \ref{lem:f_properties}, for all $\varepsilon > 0$, $a \in \mathbb{R_+}$ and $b \in \mathbb{R}$, $f_{\varepsilon, a, b}$ admits a unique minimum $x_*$ in $E_{\varepsilon, b}$ characterized by 
\begin{equation*}
f_{\varepsilon, a, b}'(x_*) = 0
\end{equation*}

We have that for all $x \in E_{\varepsilon, b}$
\begin{align*}
f_{\varepsilon, a, b}'(x) = \frac{x}{(x^2 \varepsilon^2 - c_{1}^2)} \left(-2 \tau c_{1}^2 + \varepsilon^2 x^3 - 3c_1^2 x \right)
\end{align*}
Therefore,
\begin{equation*}
	f_{\varepsilon, a, b}'(x) = 0 \Longleftrightarrow x^3 - \frac{3 b^2}{\varepsilon^2}x - \frac{2 b^2 a}{\varepsilon^2} = 0
\end{equation*}
If $a = 0$, then we easily obtain that
\begin{equation*}
    f_{\varepsilon, a, b}'(x) = 0 \Longleftrightarrow x = \frac{\sqrt{3} b}{\varepsilon}
\end{equation*}
Otherwise when $\tau > 0$ we consider the polynomial
\begin{equation*}
	P[X] = X^3 - \frac{3 b^2}{\varepsilon^2} X - \frac{2 b^2 a }{\varepsilon^2}
\end{equation*}
and its discriminant :
\begin{equation*}
	\Delta = - \left(4 \left( - \frac{3 b^2}{\varepsilon^2} \right)^3 + 27 \left(\frac{2 b^2 a}{\varepsilon^2} \right)^2 \right) = 108\frac{b^4}{\varepsilon^4}\left(\frac{b^2}{\varepsilon^2} - a^{2}\right)
\end{equation*}
Therefore,
\begin{equation*}
	\begin{cases}
		\Delta > 0 &\Longleftrightarrow \varepsilon < \frac{|b|}{a} \\
		\Delta = 0 &\Longleftrightarrow \varepsilon = \frac{|b|}{a} \\
		\Delta < 0 &\Longleftrightarrow \varepsilon > \frac{|b|}{a}
	\end{cases}
\end{equation*}
To study the root of $P[X]$ we study the three cases separately. \\

\textbf{Case 1:} We assume that $\varepsilon < \frac{|b|}{a}$, therefore $\Delta > 0$, then from Cardano's cubic formulas, $P[X]$ admits three distinct real roots $z_0, z_1, z_2$ such that for $k \in \{0, 1, 2 \}$ :
\begin{equation*}
	z_k = 2 \frac{|b|}{\varepsilon} \cos \left( \frac{1}{3} \arccos \left( \frac{\varepsilon a}{|b|} \right) + \frac{2k \pi}{3}\right)
\end{equation*}
Since there exist at least one $x$ in $E_{\varepsilon}$ such that $f_{\varepsilon, a, b}'(x) = 0$, we know that one of the root is located in $E_{\varepsilon}$ (i.e is such that $z_k > \frac{|b|}{\varepsilon}$). Remark that
\begin{equation*}
	z_k > \frac{|b|}{\varepsilon} \Longleftrightarrow \cos \left( \frac{1}{3} \arccos \left( \frac{\varepsilon a}{|b|} \right) + \frac{2k \pi}{3}\right) > \frac{1}{2}
\end{equation*}
From the fact that $\cos$ is decreasing on $[0, \pi]$, $2\pi$-periodic even and that $\cos(\frac{\pi}{3}) = \frac{1}{2}$ we get that for all $x \in \mathbb{R}$
\begin{equation}
\label{eq:cos_sol_half}
	\cos(x) > \frac{1}{2} \Longleftrightarrow \frac{-\pi}{3} < x < \frac{\pi}{3} \quad [2\pi]
\end{equation}
Now since $\arccos(]0, 1[) \subset ]0, \pi[$ we have that :
\begin{equation*}
	\frac{2k \pi}{3} < \frac{1}{3} \arccos \left( \frac{\varepsilon a}{|b|} \right) + \frac{2k \pi}{3} < \pi + \frac{2k\pi}{3}
\end{equation*}
Among the three intervals $]0, \pi[$, $\left ] \frac{2\pi}{3}, \frac{5\pi}{3} \right[$, $\left ]\frac{4\pi}{3}, \frac{7\pi}{3} \right[$ only the interval $]0, \pi[$ contains values that that satisfies Conditions \ref{eq:cos_sol_half}. Therefore we have necessarily that $z_0 \in E_{\varepsilon}$ and $f_{\varepsilon, a, b}'(z_0) = 0$. \\

\textbf{Case 2:} Now we assume that $\varepsilon = \frac{|b|}{a}$, therefore $\Delta = 0$, in this case from Cardano's formulas $P[X]$ admits two real roots $(z_0, z_1) \in \mathbb{R}^2$
\begin{equation*}
	\begin{cases}
		z_0 &= 3 b \\
		z_1 &= -3 b
	\end{cases}
\end{equation*}
Since $z_1 < 0$ we have $z_1 \notin E_{\varepsilon}$, therefore necessarily $z_0 \in E_{\varepsilon}$ and $f_{\varepsilon, a, b}'(z_0) = 0$. \\

\textbf{Case 3:} Now we assume that $\varepsilon > \frac{|b|}{a}$, thus $\Delta < 0$. From Cardano's formulas $P[X]$ admits one real root $z_0 \in \mathbb{R}^2$
\begin{equation*}
	z_0 = \left( \frac{b}{\varepsilon} \right)^{\frac{2}{3}} \left[ \left( \tau + \left( a^2 - \frac{b^2}{\varepsilon^2} \right)^{\frac{1}{2}} \right)^{\frac{1}{3}} + \left( a - \left(  a^2 - \frac{b^2}{\varepsilon^2} \right)^{\frac{1}{2}} \right)^{\frac{1}{3}}\right]
\end{equation*}

Then a direct application of Lemma \ref{lem:optim_function} with $a = \tau$ and $b = c_1$ yields the result.
\end{proof}

\subsection{Technical Lemma for Proposition \ref{prop:closed_formula_neg_OF}}
\label{apx:technical_lemmas_for_alm}
In this appendix we present some technical Lemmas used to prove Proposition \ref{prop:closed_formula_neg_OF} in Section \ref{sec:num_appli}.

\label{apx:technical_lemma_of}
\begin{definition}
    We define $\rho$, the function form $(0, +\infty)^2$ to $\mathbb{R}$ defined by,
    \begin{equation*}
        \forall (x, x') \in (0, +\infty)^{2}, \quad \rho(x, x') = \max\left(r_g, \gamma\ln\left(\frac{x'}{x} \right) \right)
    \end{equation*}
    In particular we have,
    \begin{equation}
    \label{eq:served_rate_state_function}
        \forall t \in \{1, \dots, T\}, \quad r_s(t) = \rho(S_{t-1},  S_t)
    \end{equation}
\end{definition}

\begin{lemma}
\label{lemma:mr_rec_formula_alm}
    The following holds,
    \begin{equation}
    \label{eq:rec_MR}
         \forall t \in \{0, \dots, T \}, \forall u \in \{0, \dots, t\}, \quad MR_t = MR_u \prod_{i = u+1}^{t} (1 - d_i)(1 + \rho(S_{i-1}, S_{i}))
    \end{equation}

    For all $t \in \{0, \dots, T \}$, let $f_t$ be the function from $(0, +\infty)^{t+1}$ to $\mathbb{R}$ defined by,
    \begin{equation*}
        \forall x = (x_0, \dots, x_t) \in (0, +\infty)^{t+1}, \quad f_t(x) = MR_0 \prod_{u = 1}^{t} (1 - d_u) (1 + \rho(x_{u-1}, x_u))
    \end{equation*}
    then,
    \begin{equation}
    \label{eq:proof_lemma_mr_function}
        MR_t = f_t(S_0, S_1, \dots, S_t)\,.
    \end{equation}
    In particular $MR_t$ is $\mathcal{F}_t$-measurable.
\end{lemma}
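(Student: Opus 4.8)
The plan is to prove the two claims of Lemma \ref{lemma:mr_rec_formula_alm} in order: first the recursive identity (\ref{eq:rec_MR}), then the functional representation (\ref{eq:proof_lemma_mr_function}), from which $\mathcal{F}_t$-measurability is immediate.

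\textbf{Step 1: the one-step update.} First I would combine the two update rules governing the Mathematical Reserve within a single year. Equation (\ref{eq:alm_update_mr_tilde}) gives $\widetilde{MR}_t = MR_{t-1}(1 + r_s(t))$ and (\ref{eq:alm_update_mr}) gives $MR_t = \widetilde{MR}_t(1-p)$, so that
\begin{equation*}
    MR_t = MR_{t-1}(1-p)(1 + r_s(t)), \qquad t \in \{1, \dots, T-1\}.
\end{equation*}
At the terminal date $t = T$ the same steps apply but with $p$ replaced by $1$, giving $MR_T = MR_{T-1}(1-1)(1 + r_s(T)) = 0$; in both cases this reads $MR_t = MR_{t-1}(1 - d_t)(1 + r_s(t))$ with the exit-rate sequence $(d_u)$. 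Using the identity (\ref{eq:served_rate_state_function}), $r_s(t) = \rho(S_{t-1}, S_t)$, this becomes $MR_t = MR_{t-1}(1 - d_t)(1 + \rho(S_{t-1}, S_t))$, which is exactly (\ref{eq:rec_MR}) for the special case $u = t-1$.

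\textbf{Step 2: telescoping to arbitrary $u$.} From the one-step relation I would iterate: for $u \le t$, apply the relation of Step 1 successively for the indices $u+1, u+2, \dots, t$ to obtain
\begin{equation*}
    MR_t = MR_u \prod_{i = u+1}^{t} (1 - d_i)(1 + \rho(S_{i-1}, S_i)),
\end{equation*}
with the convention that the empty product (when $u = t$) equals $1$. This is a routine finite induction on $t - u$ and establishes (\ref{eq:rec_MR}) in full generality.

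\textbf{Step 3: functional form and measurability.} Specializing (\ref{eq:rec_MR}) to $u = 0$ and recalling $S_0 = s_0$, I get $MR_t = MR_0 \prod_{i=1}^t (1 - d_i)(1 + \rho(S_{i-1}, S_i))$, which is precisely $f_t(S_0, S_1, \dots, S_t)$ for the stated $f_t$. Since $f_t$ is a continuous (hence Borel) deterministic function of $t+1$ real arguments and $S_0, \dots, S_t$ are all $\mathcal{F}_t$-measurable (the filtration being generated by $(S_s)_{s \ge 0}$), the composition $MR_t = f_t(S_0, \dots, S_t)$ is $\mathcal{F}_t$-measurable. I do not anticipate any genuine obstacle here — the only minor point requiring care is the bookkeeping at the terminal date $t = T$, where $d_T = 1$ forces the product to vanish, consistent with the fact that all policyholders exit and $MR_T = 0$; one should note that the statement as used in Proposition \ref{prop:closed_formula_neg_OF} is invoked only for $t \le T$, so this degenerate terminal value causes no difficulty.
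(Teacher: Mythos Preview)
Your proof is correct and follows essentially the same approach as the paper: both establish the one-step recursion $MR_t = MR_{t-1}(1-d_t)(1+\rho(S_{t-1},S_t))$ from (\ref{eq:alm_update_mr_tilde}) and (\ref{eq:alm_update_mr}), then reach the general identity (\ref{eq:rec_MR}) by a routine finite induction, and finally specialize to $u=0$ for the functional representation and measurability. Your treatment is in fact slightly more explicit about the terminal date $t=T$ and the Borel-measurability of $f_t$, but the argument is the same.
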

\begin{proof}
    A straightforward induction argument yields the first result. Let $t \in \{0, \dots, T\}$, clearly (\ref{eq:rec_MR}) holds for $u = t$. Let $u \in \{1, \dots, t\}$ be such that (\ref{eq:rec_MR}) holds, then by (\ref{eq:alm_update_mr}) and (\ref{eq:alm_update_mr_tilde}),
    \begin{equation*}
        MR_{u} = \widetilde{MR}_{u-1} (1 - d_{u}) = MR_{u-1} (1 + \rho(S_{u-1}, S_{u}))(1-d_{u}) \,.
    \end{equation*}
    Then using the induction hypothesis,
    \begin{equation*}
        MR_t = MR_u \prod_{i = u+1}^{t} (1 - d_i)(1 + \rho(S_{i-1}, S_i)) = MR_{u-1} \prod_{i = u}^{t} (1 - d_i)(1 + \rho(S_{i-1}, S_i))
    \end{equation*}
    proving the fist claim. The second claim is a straightforward application of (\ref{eq:rec_MR}) for $u = 0$.
\end{proof}

\begin{lemma}
\label{lemma:rec_relation_phi_alm}
    For all $t \in \{0, \dots, T \}$, and all $u \in \{0, \dots, t\}$
    \begin{equation}
    \label{eq:rec_phi}
        \phi_t = \phi_u - MR_u \sum_{i = u+1}^{t} \frac{d_i}{S_i} \prod_{j = u + 1}^{i-1} (1 - d_j) \prod_{j = u + 1}^i (1 + \rho(S_{j-1}, S_j)) \,.
    \end{equation}
    Let $g_t$ be the function from $(0, +\infty)^{t+1}$ to $\mathbb{R}$ defined by,
    \begin{equation*}
        \forall x = (x_0, x_1, \dots, x_t) \in (0, +\infty)^{t+1}, \quad g_t(x) = \phi_0 - MR_0 \sum_{i = 1}^{t} \frac{d_i}{x_i} \prod_{j = u + 1}^{i-1} (1 - d_j) \prod_{j = u + 1}^i (1 + \rho(x_{j-1}, x_j))
    \end{equation*}
    then,
    \begin{equation}
    \label{eq:phi_eq_g}
        \phi_t = g_t(S_0, S_1, \dots, S_t) \,.
    \end{equation}
    In particular $\phi_t$ is $\mathcal{F}_t$-measurable.
\end{lemma}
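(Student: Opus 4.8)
The plan is to follow the same pattern as the proof of Lemma \ref{lemma:mr_rec_formula_alm}: extract the one-step dynamics of $\phi_t$ from the description of the contract, propagate it by an induction on $t$ at fixed $u$, and then read off the functional form by taking $u=0$.

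First I would record the one-step recursion. According to step 2 of the yearly procedure (and its maturity analogue with $p$ replaced by $1$, which is exactly the convention $d_T = 1$), at time $t$ the insurer sells $\Delta\phi_t = -d_t\widetilde{MR}_t/S_t$ shares and is left with $\phi_t = \phi_{t-1} + \Delta\phi_t$; combined with (\ref{eq:alm_update_mr_tilde}) and (\ref{eq:served_rate_state_function}) this gives
\begin{equation*}
\phi_t = \phi_{t-1} - \frac{d_t}{S_t}\,MR_{t-1}\bigl(1 + \rho(S_{t-1}, S_t)\bigr), \qquad t \in \{1, \dots, T\}.
\end{equation*}
No case distinction between $t < T$ and $t = T$ is needed beyond absorbing it into the definition of the exit rates $(d_u)$.

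Next I would fix $u \in \{0, \dots, T\}$ and prove (\ref{eq:rec_phi}) by induction on $t \ge u$. The base case $t = u$ is immediate since the sum is empty. For the inductive step, I would insert the induction hypothesis for $\phi_{t-1}$ into the one-step recursion above, and use Lemma \ref{lemma:mr_rec_formula_alm} to rewrite $MR_{t-1} = MR_u\prod_{i=u+1}^{t-1}(1-d_i)(1+\rho(S_{i-1},S_i))$. Multiplying by $(1+\rho(S_{t-1},S_t))$ merges the extra factor into the $\rho$-product, so that
\begin{equation*}
\frac{d_t}{S_t}\,MR_{t-1}\bigl(1+\rho(S_{t-1},S_t)\bigr) = MR_u\,\frac{d_t}{S_t}\prod_{j=u+1}^{t-1}(1-d_j)\prod_{j=u+1}^{t}\bigl(1+\rho(S_{j-1},S_j)\bigr),
\end{equation*}
which is precisely $MR_u$ times the $i=t$ summand of (\ref{eq:rec_phi}); subtracting it extends the sum from $t-1$ to $t$ and closes the induction. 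Finally, specializing to $u=0$ and recalling that $\phi_0 = MR_0/s_0$, $MR_0$ and $S_0 = s_0$ are deterministic constants, the right-hand side of (\ref{eq:rec_phi}) is exactly $g_t(S_0, \dots, S_t)$, which proves (\ref{eq:phi_eq_g}); the $\mathcal{F}_t$-measurability of $\phi_t$ then follows because $g_t$ is a Borel function and $(S_0, \dots, S_t)$ is $\mathcal{F}_t$-measurable.

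The step I expect to be the (mild) main obstacle is the index bookkeeping in the inductive step: one has to verify carefully that multiplying the telescoped expression for $MR_{t-1}$ by $(1+\rho(S_{t-1},S_t))$ and by $d_t/S_t$ reproduces the $i=t$ term with the inner product $\prod_{j=u+1}^{i-1}(1-d_j)$ stopping at $t-1$, i.e. that no spurious factor $(1-d_t)$ is introduced. Everything else is a routine telescoping induction essentially parallel to that of Lemma \ref{lemma:mr_rec_formula_alm}.
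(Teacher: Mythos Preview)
Your proof is correct and follows essentially the same telescoping/induction strategy as the paper. The only organisational difference is that the paper first proves the intermediate identity $\phi_t = \phi_u - \sum_{i=u+1}^{t} d_i\,\widetilde{MR}_i/S_i$ by backward induction on $u$ and then substitutes $\widetilde{MR}_i = MR_u\prod_{j=u+1}^{i-1}(1-d_j)\prod_{j=u+1}^{i}(1+\rho(S_{j-1},S_j))$, whereas you merge these two steps into a single forward induction on $t$; the content and the dependence on Lemma~\ref{lemma:mr_rec_formula_alm} are identical.
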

\begin{proof}
    Let $t \in \{0, \dots, T\}$, we begin by showing that for all $u \in \{0, \dots, t\}$
    \begin{equation}
    \label{eq:proof_lemma_phi}
        \phi_t = \phi_u - \sum_{i = u+1}^{t} \frac{\widetilde{MR}_i}{S_i}
    \end{equation}
    A straightforward induction argument prove this claim : clearly (\ref{eq:proof_lemma_phi}) holds for $u = t$. Let $u \in \{1, \dots, t\}$ be such that (\ref{eq:proof_lemma_phi}) holds. Since $u \in \{1, \dots, t\}$,
    \begin{equation*}
        \phi_u = \phi_{u-1} + \Delta\phi_u = \phi_{u-1} - \frac{\widetilde{MR}_u d_u}{S_u}
    \end{equation*}
    Therefore from the induction hypothesis we get,
    \begin{equation*}
        \phi_t = \phi_u - \sum_{i = u+1}^{t} \frac{\widetilde{MR}_i}{S_i} = \phi_{u-1} - \sum_{i=u}^{t} \frac{\widetilde{MR}_i}{S_i}
    \end{equation*}
    proving the claim. Now, for all $i \in \{u+1, \dots, t \}$,
    \begin{equation*}
        \widetilde{MR}_i = MR_{i-1} ( 1 + \rho(S_{i-1}, S_i))
    \end{equation*}
    Using (\ref{eq:rec_MR}),
    \begin{equation*}
        MR_{i-1} = MR_u \prod_{j = u+1}^{i-1} (1 - d_j) (1 + \rho(S_{j-1}, S_j)) \,.
    \end{equation*}
    Therefore,
    \begin{equation}
    \label{eq:proof_rec_phi_mr}
        \widetilde{MR}_i = MR_u \prod_{j = u+1}^{i-1} (1 - d_j) \prod_{j = u+1}^{i} (1 + \rho(S_{j-1}, S_j)) \,.
    \end{equation}
    Plugging (\ref{eq:proof_rec_phi_mr}) in (\ref{eq:proof_lemma_phi}) then gives the first claim. The second claim is a straightforward application of the first claim for $u = 0$.
\end{proof}

\begin{lemma}
\label{lemma:z_formula_alm}
    Let $t \in \{1, \dots, T\}$, we let,
    \begin{equation}
    \label{eq:def_z}
        z := \mathbb{E}_{\mathbb{Q}}[1 + r_s(t)] = \mathbb{E}_{\mathbb{Q}}[1 + \rho(S_{t-1}, S_t)]
    \end{equation}
    and
    \begin{equation*}
       d := \frac{r - \frac{\sigma^2}{2} - \frac{r_g}{\gamma}}{\sigma} 
    \end{equation*}
    then,
    \begin{equation*}
        z = 1 + r_g + \gamma \sigma (\Phi^{'}(d) + d \Phi(d))
    \end{equation*}
    where $\Phi$ is the c.d.f. of a standard normal random variable.
\end{lemma}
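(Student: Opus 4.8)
The plan is to evaluate $z$ directly from the law of the one-year log-return of the stock under the risk-neutral measure, thereby reducing the statement to a one-dimensional Gaussian moment computation. First I would record that, from the risk-neutral Black--Scholes dynamics, $\ln(S_t/S_{t-1}) = \bigl(r - \tfrac{\sigma^2}{2}\bigr) + \sigma\,(W^{\mathbb{Q}}_t - W^{\mathbb{Q}}_{t-1})$, and since the contract is monitored yearly the increment $G := W^{\mathbb{Q}}_t - W^{\mathbb{Q}}_{t-1}$ has unit variance, i.e. $G \sim \mathcal{N}(0,1)$. In particular the law of $\rho(S_{t-1},S_t)$ is the same for every $t \in \{1,\dots,T\}$, so $z$ in (\ref{eq:def_z}) is indeed well defined independently of $t$.

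Next I would rewrite the credited rate in ``call-option'' form, using that for any reals $a \le \max(a,b)$ one has $\max(a,b) = a + (b-a)^+$: namely $\rho(S_{t-1},S_t) = \max\bigl(r_g, \gamma\ln(S_t/S_{t-1})\bigr) = r_g + \bigl(\gamma\ln(S_t/S_{t-1}) - r_g\bigr)^+$. Substituting the expression for the log-return and factoring out $\gamma\sigma > 0$ (recall $\gamma = 85\% > 0$ and $\sigma > 0$), one gets $\gamma\ln(S_t/S_{t-1}) - r_g = \gamma\sigma\,(G + d)$ with $d = \frac{r - \sigma^2/2 - r_g/\gamma}{\sigma}$ as in the statement, whence $\rho(S_{t-1},S_t) = r_g + \gamma\sigma\,(G+d)^+$ and therefore $z = 1 + r_g + \gamma\sigma\,\mathbb{E}\bigl[(G+d)^+\bigr]$.

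It then remains to compute $\mathbb{E}\bigl[(G+d)^+\bigr]$ for $G \sim \mathcal{N}(0,1)$. Writing $(G+d)^+ = (G+d)\,\mathbbm{1}_{G > -d}$ and splitting, $\mathbb{E}\bigl[(G+d)^+\bigr] = \int_{-d}^{\infty} g\,\Phi'(g)\,dg + d\,\mathbb{P}(G > -d)$, where $\Phi'$ denotes the standard normal density. Since $\Phi''(g) = -g\,\Phi'(g)$, the first integral equals $\bigl[-\Phi'(g)\bigr]_{-d}^{\infty} = \Phi'(-d) = \Phi'(d)$ by evenness of $\Phi'$, while $\mathbb{P}(G > -d) = 1 - \Phi(-d) = \Phi(d)$; combining gives $\mathbb{E}\bigl[(G+d)^+\bigr] = \Phi'(d) + d\,\Phi(d)$ and hence $z = 1 + r_g + \gamma\sigma\,\bigl(\Phi'(d) + d\,\Phi(d)\bigr)$, as claimed. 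There is no genuine obstacle in this argument: the only two points deserving a moment of care are that the monitoring step is exactly one year, so the Brownian increment is standard normal with no $\sqrt{\Delta t}$ rescaling, and the sign of $\gamma\sigma$ when it is pulled out of the positive part; everything else is a textbook Gaussian computation.
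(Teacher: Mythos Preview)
Your proof is correct and follows essentially the same route as the paper: decompose $\rho$ via $\max(a,b)=a+(b-a)^+$, use that the one-year log-return is $r-\tfrac{\sigma^2}{2}+\sigma G$ with $G\sim\mathcal{N}(0,1)$, and reduce to a standard Gaussian integral. The only cosmetic difference is that you factor out $\gamma\sigma$ early to land directly on $\mathbb{E}[(G+d)^+]=\Phi'(d)+d\,\Phi(d)$, whereas the paper keeps the expectation in terms of $\ln(S_t/S_{t-1})$ and splits it as $(r-\tfrac{\sigma^2}{2})\Phi(d)+\sigma\Phi'(d)$ before recombining; the two computations are algebraically equivalent.
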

\begin{proof}
    Let $t \in \{1, \dots, T\}$, by definition,
    \begin{equation*}
        \mathbb{E}_{\mathbb{Q}}[r_s(t)] = r_g + \mathbb{E} \left[ \left(\gamma \ln\left(\frac{S_t}{S_{t-1}} \right) - r_g \right)_+ \right]
    \end{equation*}
    Now, standard Black-Scholes like calculations gives,
    \begin{equation*}
        \mathbb{E}_{\mathbb{Q}} \left[\left(\gamma \ln\left(\frac{S_t}{S_{t-1}}\right) - r_g \right)_+ \right] = \gamma \mathbb{E}_{\mathbb{Q}}\left[\ln\left(\frac{S_t}{S_{t-1}}\right) \mathbbm{1}_{ \gamma \ln\left(\frac{S_t}{S_{t-1}}\right) \geq r_g}\right] - r_g \mathbb{Q} \left( \gamma \ln\left(\frac{S_t}{S_{t-1}}\right) \geq r_g \right)
    \end{equation*}
    Letting $N$ be a standard normal random variable, then,
    \begin{equation*}
        \ln\left(\frac{S_t}{S_{t-1}}\right) \overset{\mathrm{Law}}{=} r-\frac{\sigma^2}{2} + \sigma N
    \end{equation*}
    Therefore,
    \begin{equation*}
        \mathbb{Q}\left( \gamma \ln\left(\frac{S_t}{S_{t-1}}\right) \geq r_g\right) = \mathbb{Q}\left(N \geq -d\right) = \Phi(d)
    \end{equation*}
    and,
    \begin{equation*}
        \mathbb{E}_{\mathbb{Q}} \left[\ln\left(\frac{S_t}{S_{t-1}}\right) \mathbbm{1}_{ \gamma \ln\left(\frac{S_t}{S_{t-1}}\right) \geq r_g} \right] = \left(r - \frac{\sigma^2}{2}\right) \Phi(d) + \sigma \mathbb{E}_{\mathbb{Q}}[N \mathbbm{1}_{N \geq -d}] = \left(r - \frac{\sigma^2}{2}\right) \Phi(d) + \sigma \Phi'(d)
    \end{equation*}
    Which yields the result.
\end{proof}
\end{document}